    \newlength{\dhatheight}
    \newcommand{\doublehat}[1]{%
    \settoheight{\dhatheight}{\ensuremath{\hat{#1}}}%
    \addtolength{\dhatheight}{-0.35ex}%
    \hat{\vphantom{\rule{1pt}{\dhatheight}}%
    \smash{\hat{#1}}}}
 \newtheorem{thm}{Theorem}
 \newtheorem{corollary}{Corollary}
 \newtheorem{lemma}{Lemma}
\DeclareMathAlphabet{\mathcal}{OMS}{cmsy}{m}{n}
\def\@email#1#2{%
 \endgroup
 \patchcmd{\titleblock@produce}
  {\frontmatter@RRAPformat}
  {\frontmatter@RRAPformat{\produce@RRAP{*#1\href{mailto:#2}{#2}}}\frontmatter@RRAPformat}
  {}{}
}%
\begin{document}


\title[Error and Resource Estimates of VQAs for Solving DEs Based on RKMs]
{Error and Resource Estimates of Variational Quantum Algorithms for Solving Differential Equations Based on Runge-Kutta Methods}

\author{David Dechant}
\email{dechant@lorentz.leidenuniv.nl}
\affiliation{$\langle aQa^L\rangle$ Applied Quantum Algorithms Leiden, The Netherlands}

\affiliation{ 
Instituut-Lorentz, Universiteit Leiden, P.O. Box 9506, 2300 RA Leiden, The Netherlands
}

\author{Liubov Markovich}

\affiliation{$\langle aQa^L\rangle$ Applied Quantum Algorithms Leiden, The Netherlands}
\affiliation{ 
Instituut-Lorentz, Universiteit Leiden, P.O. Box 9506, 2300 RA Leiden, The Netherlands
}

\author{Vedran Dunjko}
\affiliation{$\langle aQa^L\rangle$ Applied Quantum Algorithms Leiden, The Netherlands}
\affiliation{LIACS, Universiteit Leiden, P.O. Box 9512, 2300 RA Leiden, Netherlands}

\author{Jordi Tura}
\affiliation{$\langle aQa^L\rangle$ Applied Quantum Algorithms Leiden, The Netherlands}

\affiliation{ 
Instituut-Lorentz, Universiteit Leiden, P.O. Box 9506, 2300 RA Leiden, The Netherlands
}


\begin{abstract}
A focus of recent research in quantum computing has been on developing quantum algorithms for solving differential equations using variational methods on near-term quantum devices.
A promising approach involves variational algorithms, which combine classical Runge-Kutta methods with quantum computations. However, a rigorous error analysis, essential for assessing real-world feasibility, has so far been lacking.
In this paper, we provide an extensive analysis of error sources and determine the resource requirements needed to achieve specific target errors.
In particular, we derive analytical error and resource estimates for scenarios with and without shot noise, examining shot noise in quantum measurements and truncation errors in Runge-Kutta methods. Our analysis does not take into account representation errors and hardware noise, as these are specific to the instance and the used device.
We evaluate the implications of our results by applying them to two scenarios: classically solving a $1$D ordinary differential equation and solving an option pricing linear partial differential equation with the variational algorithm, showing that the most resource-efficient methods are of order 4 and 2, respectively.
This work provides a framework for optimizing quantum resources when applying Runge-Kutta methods, enhancing their efficiency and accuracy in both solving differential equations and simulating quantum systems. 
\end{abstract}

\maketitle

\section{Introduction}
For more than four decades, quantum computing has captivated the minds of researchers, but significant experimental advancements have only been achieved in recent years. We are living in the era of Noisy Intermediate-Scale Quantum (NISQ) devices~\cite{jurcevic2021,henriet2020,PhysRevLett.123.170503,risinger2021characterization}, which, while capable of certain super classical computations in principle, are also susceptible to noise and errors, preventing them from providing substantial computational advantages over classical computers. Efforts to tackle these challenges have resulted in proliferate research lines dedicated to error correction and noise mitigation across various quantum architectures~\cite{PhysRevA.86.032324}. 
One approach to ameliorating some of the issues involves hybrid quantum-classical algorithms that use quantum computers to process information, while classical computers handle the correction  and optimization processes, helping to reduce errors and improve efficiency. 
For example, in variational quantum algorithms, such as the Variational Quantum Eigensolver (VQE) or the Quantum Approximate Optimization Algorithm (QAOA), the quantum part evaluates the cost function depending on a given set of parameters, while the classical part optimizes these parameters to minimize the cost function.
In physical terms, the cost function is typically the energy of a Hamiltonian that encodes the optimization task.
Extracting the final result from such a hybrid device is challenging, as errors arise both from quantum and classical sources. A notable source of error is the measurement shot noise arising out of the discrete nature of quantum measurements and the limited number of measurements taken \cite{scriva2024challenges}.

Solving differential equations (DEs) is a critical task in various scientific and engineering fields, and several quantum computing-based proposals have been developed to tackle this problem. 
First ideas\cite{leyton2008quantum,berry2014high,berry2017quantum} were built around the quantum linear system algorithm\cite{harrow2009quantum}, but they require fault-tolerant quantum computers. 
Later, approaches based on variational quantum algorithms were introduced, e.g., in Refs.~\cite{lubasch2020variational,leong2022variational,kyriienko2021solving,fontanela2021quantum}. 
In one of the latter approaches, the DE is mapped to an imaginary-time Schrödinger equation~\cite{mcardle2019variational}. 
It is solved by a variational algorithm, where the time evolution of a quantum state is approximated by a variational quantum circuit and mapped to the time evolution of the parameters of this circuit. Classically computing the parameters at the final time step, one reinserts them into the variational quantum circuit to prepare the evolved quantum state. 
This is an approach originally proposed in Ref.~\cite{li2017efficient} for quantum simulation and has attracted a lot of interest since \cite{nagano2023quench,barison2021efficient,benedetti2021hardware, yao2021adaptive, cirstoiu2020variational, heya2023subspace, pool2024nonlinear}.

In the approach of Refs.~\cite{li2017efficient, mcardle2019variational}, the resulting time evolution of the quantum circuit parameters has the form of an ordinary differential equation (ODE).
Classically, ODEs can be solved by time-discretization methods such as the Euler method or the more general Runge-Kutta methods\cite{butcher2006general} (RKMs) that can be categorized by their order $p$. 
The first proposed variational quantum
algorithm for solving differential equations based on the Euler method is shown in Ref.~\cite{lubasch2020variational}.
Those methods approximate the time evolution by calculating a truncated Taylor expansion at each time step, and the higher the order of the RKM, the lower is the resulting truncation error, but the higher is the required number of function evaluations. 
There are also different generalizations of the Euler methods, like the linear multistep methods or the general linear methods~\cite{butcher2006general}, which can lead to similar accuracies as RKMs with different requirements of the number of function evaluations. 

Previous works have proposed that choosing RKMs, such as the widely used classical RKM \cite{kutta1901beitrag} with the order $p=4$ instead of the Euler method, will also be favorable in solving the time evolution of the quantum circuit parameters\cite{li2017efficient,radha2021quantum,alghassi2022variational}. 
However, each function evaluation incorporates evaluations of quantum circuits, making higher-order RKMs more demanding on the quantum resources. 
Moreover, quantum circuit evaluations are affected by the "shot noise bottleneck," where the error scales as $\mathcal{O}(1/N_{meas}^2)$ with the number of measurements $N_{meas}$, introducing an additional source of error.
The number of measurements is a precious resource in quantum computing, as it is the most costly and time-consuming operation and as it is severely limited by the available runtime of the device before it requires recalibration~\cite{jurcevic2021,brown2016,schafer2018}.
That is why, for the practicality of the variational algorithm, it is crucial to minimize the total number of quantum circuit measurements.

In this paper, we focus on solving differential equations based on the approach of Refs.~\cite{li2017efficient,mcardle2019variational}. We investigate whether higher-order RKMs outperform the Euler method in solving the time evolution of the circuit parameters by analyzing the different sources of error and resource requirements. Specifically, we provide a detailed analysis of the total error of the variational quantum algorithm for solving differential equations, defined as the trace distance between the actual solution and the output of the variational algorithm. We explicitly consider the truncation error associated with the chosen RKM and the shot noise error. Other relevant sources of errors, such as circuit error (gate infidelity, bias, SPAM errors) and the representation error (the variational circuit being able to represent the solution at all time steps with its parameters), are assumed ideal as they depend among others on the chosen Ansatz and problem instance \cite{gacon2024variational}. 
We establish rigorous error bounds and use them to estimate the minimum number of circuit evaluations required by the algorithm for a given target error, based on the order of the RKM. 
Additionally, we perform an analysis of the RKMs under the assumption of no shot noise. 

Further, we validate the resource estimates through benchmarking: the analysis without shot noise is demonstrated with a simple ODE, while the analysis of the variational algorithm is applied to option pricing, where the dynamics are described by the Black-Scholes equation~\cite{black1973pricing,fontanela2021quantum}.  The latter is a partial differential equation that has attracted a lot of attention in the variational quantum computing community~\cite{alghassi2022variational, fontanela2021quantum, 10112619}. This shows that the application of our method is not restricted to ODEs but can also be applied to solving partial differential equations. We directly compare the total number of circuit evaluations required by the algorithm, depending on the order of the chosen RKM.
In this work, we derive rigorous error bounds that are general, but may overestimate the true error in practice. 
Our resource estimates are based on optimizing resources with respect to these upper bounds and might therefore be overly conservative. Such an approach constitutes the optimal parameter selection strategy, providing a guaranteed success probability.

Similar error and resource estimates have been conducted in Refs.~\cite{mcardle2019variational,miessen2021quantum,zoufal2023error,kolotouros2024accelerating}.
However, our analysis is integrating the truncation errors of RKMs and shot noise errors, making it comprehensive, and offers direct comparisons of resource requirements between different RKMs and an a-priori analysis that leads to substantial savings in the cost of the algorithms.

Our work is relevant not only to quantum algorithms that use variational approaches but also to those that employ RKMs for solving DEs, such as in Ref.~\cite{leyton2008quantum,zanger2021quantum}. 
Our analysis highlights the importance of studying the sensitivity of classical numerical methods for ODEs to perturbations in the input function, which has hardly been explored so far. 

From our results, we conclude that depending on the parameters distinct to the problem at hand, higher-order RKMs are decreasing the resource requirements. In the use case of option pricing via the Black-Scholes equation, we showed that an RKM of order $p=2$ is requiring the minimal number of total quantum circuit evaluations. 
For other applications, even higher-order methods might be favorable. 
With our thorough analysis of the involved parameters, we provide a straightforward framework that can be applied to other use cases that can be tackled by solving a DE in the form of Eq.~\eqref{eq: generalized imaginary time schroedinger equation} and decrease the resource requirements of the variational algorithms by suggesting the most efficient RKM.

In the interest of making the paper self-contained, we build up the paper in the following way:
We begin with an introduction to the RKM and the variational quantum algorithm from ~\cite{mcardle2019variational} and with the problem statement. 
In Sec. \ref{sec: analysis classical ode solver}-\ref{sec: Analysis with Shot Noise Scaling}, we show estimates of the errors and minimal resources required for ODE solving without and with shot noise. 
In Sec.~\ref{sec:Introduce Quantum Circuits, McLachlan Principle, Shot noise, Lipschitz Number}, we analyze parameters of the variational algorithm that the error and resource estimates depend on. 
Afterwards, we are performing numerical analyzes of a simple ODE without shot noise and of an option pricing use case in Sec.~\ref{sec: numerics}. 
In Sec.~\ref{sec:conclusion}, we provide a discussion of the results and conclusions.

\section{Preliminaries}
\label{sec:preliminaries}
In this article, we investigate the impact of different types of errors on variational quantum algorithms for solving DEs based on Runge-Kutta methods. These algorithms are motivated by the variational quantum algorithm for imaginary time evolution introduced in Ref.~\cite{mcardle2019variational}. In the following, we firstly present the Runge-Kutta methods, which is a family of classical methods for solving ODEs. Secondly, we present the variational quantum algorithm for solving DEs that are based on the Runge-Kutta methods. And thirdly, we introduce the errors and resources that we analyze in this work.

\subsection{Runge-Kutta Methods}
\label{sec:Preliminaries: Runge-Kutta Methods}
Let us consider the initial value problem, which is an ODE together with an initial condition:
\begin{align}\label{eq: initial value problem}
    \frac{d y(\tau)}{d\tau}&= f\left(\tau,y(\tau)\right)\\
    y(\tau_0)&=y_0\ ,\nonumber
\end{align}
where $\tau$ is the time and $y(\tau)$ is an element of the image of an unknown function in a scalar or vector form that we want to determine, and where $f\left(\tau,y(\tau)\right)$ fulfills the assumptions of the Picard-Lindel\"{o}ff theorem, guaranteeing the existence of a unique solution to Eq.~\eqref{eq: initial value problem}.

Since most of the time an analytically closed form is not viable, numerical methods are the only way to obtain an approximate solution. A common way of solving Eq.~\eqref{eq: initial value problem} is with the so-called Runge-Kutta methods (RKMs).
The RKMs are a class of methods based on the Taylor expansion of $y$ in order to approximate the numerical solution of the ODE at the future time step by using evaluations of $f\left(\tau,y(\tau)\right)$.
\par 
A general blueprint of RKMs with $s$ stages can be outlined as follows: For simplicity, let us set $\tau_0=0$. We divide the time interval $\tau\in[0,T]$, $T>0$ into $N_{\tau}$ time steps  denoted  by $\tau_n$, $n\in [1,N_{\tau}]$. We assume the step size $\Delta \tau = \tau_n - \tau_{n-1}=T/N_{\tau}$  to be constant and denote the computed solution at the $n$-th time step by $y(\tau_n)$. 
    Using  Eq.~\eqref{eq: initial value problem} and the solution $y(\tau_n)$ at the $n$-th time step, we can compute $y(\tau_{n+1})$ in the following way (see Ref.~\onlinecite[p.907]{press2007numerical}):
    \begin{align}
    \label{eq:general RK equation}
    y(\tau_{n+1}) = y(\tau_n) + \Delta \tau \sum_{i=1}^{s} b_i k_i\ ,
    \end{align}
    where the calculation of the latter function is done in $s$ stages
    \begin{align}\label{1619}
     k_1 & = f(\tau_n, y(\tau_n)), \\\nonumber
     k_2 & = f(\tau_n+c_2\Delta \tau, y(\tau_n)+a_{21}k_1\Delta \tau), \\\nonumber
     k_3 & = f(\tau_n+c_3\Delta \tau, y(\tau_n)+(a_{31}k_1+a_{32}k_2)\Delta \tau), \\\nonumber
         & \ \ \vdots \\\nonumber
     k_{s} & = f(\tau_n+c_{s}\Delta \tau, y(\tau_n)+(a_{{s}1}k_1+a_{{s}2}k_2+\cdots+a_{{s},{s}-1}k_{{s}-1})\Delta \tau).
    \end{align}
    The constants $a_{i,j}$ ($1\leq j< i\leq {s}$), $b_i$ ($1\leq i\leq {s}$ ) and $c_i$ ($2\leq i\leq {s}$ ) are specific for each RKM. In order to be consistent, the constants have to satisfy
    \begin{align}
    \label{eq:RK, condition on b_i}
        \sum_{i=1}^{s}b_i=1,\quad \text{and }\quad
        \sum_{j=1}^{i-1}a_{i,j}=c_i,\quad \text{for}  \quad 2\leq i\leq {s}.
    \end{align}
    For later analysis, we define the maxima of these parameters for a specific RKM in the following way:
    \begin{align}
    \label{eq:definition_bmax}
    b_{max}=\max_i|b_i|,\\
    a_{max}=\max_{i,j}|a_{i,j}|\ .\label{eq:definition_amax}
    \end{align}
    
    The simplest  RKM is the Euler method ($s=1$), which approximates the function in one stage iteratively as follows:
    \begin{align}
    \label{eq:euler method}
        y(\tau_i+\Delta \tau)=y(\tau_i)+\Delta \tau f\left(\tau_i,y(\tau_i)\right).
    \end{align}
   The estimation error $\ell_i$ induced at each time step $\tau_i$ due to the truncation of the Taylor series is referred to as the local truncation error (LTE) of the method. RKMs are classified according to the error scaling of their LTE. A RKM is said to have an order $p$ if the LTE is bounded by an error that scales as $\mathcal{O}(\Delta \tau^{p+1})$. Generally, the order and the number of stages of an RKM are related by $s=p$ for $1\leq p\leq 4$, and $s>p$ for $p\geq 5$. This discrepancy is due to the fact that finding the coefficients $a_{i,j},b_{i}$ and $c_{i}$ becomes increasingly difficult for higher values of $p$ as it involves solving a system of non-linear equations that grows more complicated for higher $p$. For higher-order methods this can only be achieved with an increasingly higher number of stages $s$. To the best of our knowledge, there is no closed formula for calculating the minimum number of stages required for a specific order. The relations up to order $p=10$ are provided in Table~\ref{tab:Relation between order and number of stages of Runge-Kutta methods}.

    \begin{table}[htbp]
    \centering
    \begin{tabular}{|c|c|}
        \hline
        \textbf{Order} $p$ & \textbf{Number of stages} $s$ \\
        \hline
        \( 5 \) & \( 6 \) \\
        \hline
        \( 6 \) & \( 7 \) \\
        \hline
        \( 7 \) & \( 9 \) \\
        \hline
        \( 8 \) & \( 11 \) \\
        \hline
        \( 9 \) & \( 13 \) \\
        \hline
        \( 10 \) & \( 16 \) \\
        \hline
    \end{tabular}
    \caption{Relation between order and the minimum number of stages of RKMs~\cite{butcher2006general, khashin2009symbolic,zhang2024explicit}
    }
    \label{tab:Relation between order and number of stages of Runge-Kutta methods}
\end{table}
The following theorem provides an upper bound on the LTE of a $p$-th order RKM:
\begin{thm}(See Ref.~\onlinecite[p.180]{butcher2016numerical})
    \label{thm:local_truncation_RKE}
    The LTE $\ell_n$ of the  $p$-th order RKM  at the step $n\in[1,N_{\tau}]$ is bounded by
    \begin{align}
        \|\ell_n\|=\|y(\tau_n)-y_n\|\leq \Delta \tau^{p+1}  K  L^p_{f\tau}   M + \mathcal{O}(\Delta \tau^{p+2})\ ,
    \end{align}
    where $\|\cdot\|$ denotes a norm, which can be any norm on the state space (e.g., Euclidean, maximum, or 1-norm), as the bound is independent of the specific choice.
    Here, $y_n$ is the  RKM approximation of $y(\tau_n)$ calculated at step $n$, assumed that the value $y(\tau_{n-1})$ is exact. Further, $K>0$ is a constant depending on the chosen  RKM and $f(\tau_n,y(\tau_n))$ can be upper bounded by $M>0$ and is Lipschitz-continuous with respect to the first variable, such that the following bounds hold:
      \begin{align}
        \left\|f(\tau_n,y(\tau_n))\right\|\leq M,\quad 
        \left\|\frac{\partial f(\tau_n,y(\tau_n))}{\partial \tau} \right\|\leq L_{f\tau},\dots,\quad \left\|\frac{\partial^i f(\tau_n,y(\tau_n))}{\partial \tau^i}\right\|\leq \frac{L_{f\tau}^i}{M^{i-1}}\ ,\label{eq:boundL}
    \end{align}
    for all $1\leq i\leq p$, $0<\tau_n<T$ and $f(\tau_n, y(\tau_n)):=\frac{\partial y(\tau)}{\partial \tau}\Big{|}_{\tau_n}$.
        \end{thm}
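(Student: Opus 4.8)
The plan is to reduce the statement to a Taylor-expansion estimate in which the consistency of the method forces cancellation up to order $\Delta\tau^{p}$ and exposes the leading $\Delta\tau^{p+1}$ remainder. Concretely, fix the step $n$ and assume that $y(\tau_{n-1})$ is exact, so that $\ell_n=y(\tau_n)-y_n$ is a purely local object. First I would Taylor-expand the exact solution about $\tau_{n-1}$,
\begin{align}
y(\tau_n)=\sum_{j=0}^{p}\frac{\Delta\tau^{j}}{j!}\,y^{(j)}(\tau_{n-1})+\frac{\Delta\tau^{p+1}}{(p+1)!}\,y^{(p+1)}(\xi),
\end{align}
for some $\xi\in(\tau_{n-1},\tau_n)$, and rewrite each total derivative $y^{(j)}$ through the ODE $y'=f$ as the corresponding elementary differential of $f$ (for instance $y''=\partial_\tau f+\partial_y f\,f$, and so on). These are the natural objects produced by repeated differentiation of $f(\tau,y(\tau))$ along the trajectory.

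Next I would expand the numerical update. Substituting the stage equations~\eqref{1619} into~\eqref{eq:general RK equation} and Taylor-expanding every $k_i$ in powers of $\Delta\tau$ about $(\tau_n,y(\tau_n))$, the increment $\Delta\tau\sum_i b_i k_i$ becomes a polynomial in $\Delta\tau$ whose coefficients are the very same elementary differentials of $f$, now weighted by polynomials in the tableau coefficients $a_{i,j},b_i,c_i$. The defining property of a $p$-th order method is precisely that these weights match the Taylor coefficients of the exact expansion for every elementary differential of order at most $p$; the consistency relations~\eqref{eq:RK, condition on b_i} are the first such matching conditions. Subtracting the two expansions therefore annihilates all terms up to and including $\Delta\tau^{p}$, leaving
\begin{align}
\ell_n=y(\tau_n)-y_n=\Delta\tau^{p+1}\sum_{t}\gamma_t\,F_t(f)+\mathcal{O}(\Delta\tau^{p+2}),
\end{align}
where the finite sum runs over the elementary differentials $F_t(f)$ of order $p+1$ and the $\gamma_t$ are fixed rational numbers determined entirely by the Butcher tableau.

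The final step is to convert this exact leading term into the scalar bound of the theorem. I would absorb the finite, method-dependent combination of the $\gamma_t$ together with the combinatorial multiplicities of the trees into the single constant $K>0$, and then bound each order-$(p+1)$ elementary differential using the hypotheses~\eqref{eq:boundL}: the value of $f$ at the leaves is controlled by $M$, while each differentiation contributes a factor governed by $L_{f\tau}$, so that the leading remainder scales as $L_{f\tau}^{p}M$. Taking norms and applying the triangle inequality over the finite tree sum then yields $\|\ell_n\|\le\Delta\tau^{p+1}K L_{f\tau}^{p}M+\mathcal{O}(\Delta\tau^{p+2})$, with the higher-order remainder collecting the differentials of order $\ge p+2$.

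The hard part will be the combinatorial bookkeeping of the second and third steps: one must verify that the Taylor expansion of the stages reproduces exactly the elementary-differential basis of the exact solution, so that the order conditions really do cancel every term below $\Delta\tau^{p+1}$, and then check that the mixed derivatives appearing in the order-$(p+1)$ differentials are each dominated by the structured bounds~\eqref{eq:boundL} in a way that collapses to the clean product $L_{f\tau}^{p}M$. Handling the implicit $y$-dependence of $f$ consistently with assumptions phrased only for the $\tau$-derivatives is the delicate point, and tracking the precise powers of $M$ and $L_{f\tau}$ is exactly where care is required; it is cleanest to express everything in terms of the derivatives of $g(\tau):=f(\tau,y(\tau))$ taken along the exact trajectory, for which the chain of bounds in~\eqref{eq:boundL} applies directly.
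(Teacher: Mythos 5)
Your proposal is correct and is essentially the proof of the cited source: the paper itself does not prove Theorem~\ref{thm:local_truncation_RKE} but imports it from Butcher's monograph, and your route --- Taylor/Butcher-series expansion of the exact flow and of the stages about the step point, cancellation of all terms through order $\Delta\tau^{p}$ via the order conditions, and absorption of the residual sum over rooted trees of order $p+1$ into the constant $K$ --- is exactly that argument, as the paper itself confirms when it later takes $K=\sum_{|t|=p+1}\frac{1}{\sigma(t)}\bigl|\Phi(t)-\frac{1}{t!}\bigr|$ in Sec.~\ref{subsubsection: Parameter estimation and sensitivity classical}. Your closing remark that the hypotheses in Eq.~\eqref{eq:boundL} must be read as bounds on the derivatives of $g(\tau):=f(\tau,y(\tau))$ along the exact trajectory is also the right reading of the statement's definition $f(\tau_n,y(\tau_n)):=\frac{\partial y(\tau)}{\partial\tau}\big|_{\tau_n}$, and is precisely what makes the leading remainder collapse to the product $L_{f\tau}^{p}M$ claimed in the theorem.
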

Unless stated otherwise, the same norm is used consistently throughout the analysis for states, errors, Lipschitz bounds, and the corresponding induced operator norms.
Since all norms on finite-dimensional spaces are equivalent, the order of the error bounds remains unaffected by the specific choice of norm, with only the constants differing.
In our setting, we use the trace norm for quantum states and the Euclidean (2-)norm, together with its induced operator norm, for all other quantities.

We are estimating the constants $K,M$ and $L_{f\tau}$ in Sections~\ref{subsubsection: Parameter estimation and sensitivity classical} and~\ref{subsubsec: Parameter estimates and Numerical Analysis quantum} for the specific examples of DEs that we cover numerically in Section~\ref{sec: numerics}.

\subsection{Variational Quantum Algorithm for Solving Differential Equations}
\label{sec: Variational Quantum Simulator preliminaries}
In this section, we present the variational algorithm for solving linear differential equations of the following type: 
\begin{align}\label{eq: generalized imaginary time schroedinger equation}
    \frac{d y(\tau)}{d\tau}&= -\mathcal{H}\cdot y(\tau)\\
    y(\tau_0)&=y_0\ , 
\end{align}
where $y(\tau)\in \mathbb{C}^d$ is a vector and $\mathcal{H}$ a $d\times d$-dimensional matrix. We assume w.l.o.g. that $d$ is a power of $2$ and that $\mathcal{H}$ is Hermitian. This differential equation is a special case of the initial value problem as stated in Eq.~\eqref{eq: initial value problem}.

It is possible to map a wide variety of DEs to Eq.~\eqref{eq: generalized imaginary time schroedinger equation}, such as stochastic DEs (see Ref.~\cite{alghassi2022variational}), the Black-Scholes partial DE (see our analysis in  Sec.~\ref{subsec:option pricing} and Ref.~\cite{fontanela2021quantum}), or other linear partial DEs (see Ref.~\cite{jin2022quantum}).
Typically, this mapping involves a discretization of the underlying space and differential operators onto a grid~\cite{alghassi2022variational,fontanela2021quantum,jin2022quantum}, but it is also possible to encode the solution via spectral methods like Chebyshev polynomials or in Fourier basis (see for example Ref.~\cite{paine2023physics}, although they use a different quantum algorithm). 

If $d$ is not a power of $2$, it is always possible to embed $\mathcal{H}$ and $y(\tau)$ into higher-dimensional spaces, such that the assumption holds. 
The resulting higher dimensional space into which $y(\tau)$ is embedded is less than a factor of $2$ larger than the original space.

If $\mathcal{H}$ is not Hermitian, several strategies can be applied: For some differential equations, it is possible to apply changes of variables in order to transform $\mathcal{H}$ to a Hermitian matrix, such as the transformation done in Ref.~\cite{fontanela2021quantum}. Alternatively, it is possible to divide any matrix $\mathcal{H}$ into a Hermitian and an anti-Hermitian part, which effectively leads to a doubling in the number of circuit evaluations, as demonstrated in Ref.~\cite{sokolov2023orders}. 
It is also possible to use the technique shown in Ref.~\cite{jin2024quantum} that gives a mapping from Eq.~\eqref{eq: generalized imaginary time schroedinger equation} with a non-Hermitian $\mathcal{H}$ to the real time Schrödinger equation, which one can solve with the variational algorithm introduced in Ref.~\cite{li2017efficient}.
Further, there exists a generalization of the variational algorithm for imaginary time evolution that we present here which can be applied to linear differential equations with non-Hermitian matrices $\mathcal{H}$ without the need to embed them first into higher-dimensional Hilbert spaces \cite{endo2020variational}. Note that our analysis can easily be adapted to this generalized variational algorithm, as well as to the variational algorithm solving the real time Schrödinger equation.

The variational quantum algorithm that solves Eq.~\eqref{eq: generalized imaginary time schroedinger equation}, is based on the variational quantum algorithm for imaginary time evolution introduced in Ref.~\cite{mcardle2019variational}. Therefore, we firstly bring Eq.~\eqref{eq: generalized imaginary time schroedinger equation} into the form of quantum imaginary time evolution.

For the remainder of the paper we will be assuming $\mathcal{H}$ is Hermitian. The matrix $\mathcal{H}$ can thus be decomposed in the following way:
    \begin{align}
    \mathcal{H}=\sum_{m=1}^{N}\lambda_m\sigma_m \ .\label{eq: hamiltonian}
    \end{align}
    Here, $\{\lambda_m\}_{m=1}^{N}\in \mathbb{R}$ are the decomposition coefficients and $\{\sigma_m\}_{m=1}^{N}$ are the Pauli strings, which are tensor products of single qubit Pauli matrices and the identity.
   This  possible decomposition  always exists and is unique, since the collection of $d^2$ Pauli strings form an orthogonal basis for Hermitian operators acting on a $d$ dimensional Hilbert space. The operator $\mathcal{H}$ plays the role of the Hamiltonian of a quantum system. Effectively replacing the real time parameter $t$ from the Schrödinger equation with $-i\tau$, where $\tau$ is a real number, Eq.~\eqref{eq: generalized imaginary time schroedinger equation} can then be considered as a Schr\"odinger equation in imaginary time.
\par
Let us further realize the function $y(\tau)$ as a quantum state $\ket{\psi(\tau)}\in \mathbb{C}^d$ and the initial condition $y(\tau_0)=y_0$ as a state $\ket{\psi(0)}\in \mathbb{C}^d$. That can be done by normalizing the entries of the vector $y(\tau)$ and by subsequently encoding the resulting entries as the amplitudes of $\ket{\psi(\tau)}$:
\begin{align}\label{eq:definition_of_psi}
\ket{\psi(\tau)}=\sum_{i=0}^{d-1}  \frac{y(\tau)_i}{\sqrt{\|y(\tau)\|_2^2}}\ket{i}\ ,
\end{align}
where $\{\ket{i}\}_{i=0}^{d-1}$ is the computational basis. 

The problem of solving Eq.~\eqref{eq: generalized imaginary time schroedinger equation} then comes down to simulating the corresponding non-unitary time evolution $V(\tau)=e^{-\mathcal{H}\tau}$ that solves the following equation:
\begin{align}\label{eq: generalized imaginary time schroedinger equation, ket formulation}
    \frac{d \ket{\psi(\tau)}}{d\tau}&= -\mathcal{H}\cdot \ket{\psi(\tau)}
\end{align}
with the initial state $\ket{\psi(0)}$ at time $\tau=0$.
The goal is then to calculate the evolved state $\ket{\psi(T)}$ at  time $T$, which can be calculated from the initial state as:
     \begin{align}\label{1240}
         \ket{\psi(T)}=\gamma(T)V(T)\ket{\psi(0)},\quad \text{ with the normalization  }\quad \gamma(\tau)=\left(\bra{\psi(0)}V(2T)\ket{\psi(0)}\right)^{-1/2}\ .
     \end{align} 
While reading out the amplitudes of  $\ket{\psi (T)}$ itself will only give the ratio of each basis vector in the solution $y(T)$ of the DE in Eq.~\eqref{eq: generalized imaginary time schroedinger equation}, keeping track of the initial normalization from the mapping of $y(\tau_0)$ to $\ket{\psi (0)}$ and the intermediate normalization factors $\gamma(\tau)$ will give the resulting renormalization factor. One can identify $\gamma(\tau)$ as one of the parameters that is updated at each step (see for example Refs.~\cite{lubasch2020variational, alghassi2022variational}).

\par As proposed in Ref.~\cite{arute2019quantum}, the evolution of the state $\ket{\psi(\tau)}$ can be simulated by using a parameterized quantum circuit to approximate the evolved state. Instead of $\ket{\psi(\tau)}$, the  variational quantum "trial" state 
 \begin{align}
        \label{eq:Ansatz}
    \ket{\phi(\bm{\theta}(\tau))}:= R(\bm{\theta}(\tau))\ket{\overline{0}},\quad \ket{\phi(\bm{\theta}(\tau))}\in \mathbb{C}^d,
    \end{align}
    prepared by a variational circuit $R(\bm{\theta}(\tau))$ with a  vector  of  time-dependent parameters $\bm{\theta}(\tau)=(\theta_{1}(\tau),\theta_{2}(\tau),\dots, \theta_{N_V}(\tau))^T\in \mathbb{R}^{N_V}$
 is taken. The trial state is often referred to as the \textit{Ansatz}.
 The circuit is chosen in  such a way that it consists of a sequence of $N_V$ layers each depending on one variational parameter as follows:
    \begin{align}
        \label{eq:circuit on total state}
        R(\bm{\theta}(\tau))=R_{N_V}(\theta_{N_V}(\tau))R_{N_V-1}(\theta_{N_V-1}(\tau))\dots R_{1}(\theta_1(\tau))\ .
    \end{align}
    Each $R_k(\theta_k(\tau))$, $k\in (1,\dots, N_V)$ is a unitary operator that can be written as
    \begin{align}
        \label{eq:circuit on one qubit}R_k(\theta_k(\tau))=\exp{\left(\theta_k(\tau)\sum\limits_{i=1}^{N_d}f_{k, i}\sigma_{k, i}\right)}\ ,
    \end{align}
    with fixed complex parameters $f_{k,i}$ and Pauli strings $\sigma_{k, i}$.
    For simplicity, we keep $N_d$ fixed.  In general, only a small subspace of the Hilbert space can be reached with such an Ansatz, but as was shown in Ref.~\cite{PhysRevLett.106.170501}, this is sufficient for physically relevant states. Furthermore, this Ansatz captures a wide class of possible implementations, such as the coupled cluster ansatz \cite{taube2006new} or hardware efficient methods \cite{dallaire2019low, peruzzo2014variational}.
    
    \par The idea of the method is to map the time evolution of the state $\ket{\psi(\tau)}$ according to the Hamiltonian $\mathcal{H}$ to a time evolution of the parameters $\bm{\theta}(\tau)$ of the state  $\ket{\phi(\bm{\theta}(\tau))}$.
    To this end, one first finds the vector of parameters 
    $\bm{\theta}(\tau=0)$ such that it minimizes the distance $\|\ket{\phi(\bm{\theta}(0))}-\ket{\psi(0)}\|$.
The  McLachlan's variational principle~\cite{mclachlan1964variational,yuan2019theory}as given by
    \begin{align}\label{eq:mclachlanfirstmention}
        \delta\|(d/d\tau + \mathcal{H})\ket{\phi(\bm{\theta}(\tau))}\|_1=0
    \end{align} 
    is fulfilled if Eq.~\eqref{eq: generalized imaginary time schroedinger equation, ket formulation} is valid for the trial state $\ket{\phi(\bm{\theta}(\tau))}$.
    Here $\|\ket{\phi}\|_1=\sqrt{\langle \phi|\phi\rangle}$ is the trace norm of a quantum state $\ket{\phi}$ and $\delta$ denotes infinitesimal variation. 
    However, if the chosen Ansatz is not expressive enough or too biased, Eq.~\eqref{eq:mclachlanfirstmention} will not hold. 
    The resulting errors are hard to control in practice, but there exist ways to estimate them. See for example Ref.~\cite{zoufal2023error,gacon2024variational}.
    
    Eq.~\eqref{eq:mclachlanfirstmention} is used to translate the time evolution of the trial state  $\ket{\phi(\bm{\theta}(\tau))}$ to a time evolution of the vector of parameters $\bm{\theta}(\tau)$, given as an ODE (see Appendix \ref{appendix: McLachlan's principle}):
     \begin{align}
        \label{eq: parameter ODE}
        \sum\limits_{l=1}^{N_V}A_{kl}\left(\bm{\theta}(\tau)\right)\frac{\partial \theta_l(\tau)}{\partial \tau}=C_{k}\left(\bm{\theta}(\tau)\right)\quad \forall\tau\ .
    \end{align}
Note that this ODE is acting on the $N_V$-dimensional vector $\bm{\theta}(\tau)$, where $N_V$ is independent of the dimension $d$ of the time evolution in Eq.~\eqref{eq: generalized imaginary time schroedinger equation, ket formulation}, and is entirely determined by the number of parameters of the chosen Ansatz. In particular, $N_V$ is not bound to be a power of $2$. In principle, lower $N_V$ will reduce the total number of circuit evaluations for our algorithm (see Sec.~\ref{sec: Variational Quantum Simulator}), but will also lead to a lower expressivity of the corresponding Ansatz.

The elements of the matrix $A$ and the vector $C$ are: 
\begin{align}\label{2144}
        A_{kl}\left(\bm{\theta}(\tau)\right)&=\mathfrak{Re}\left(\frac{\partial\bra{\phi(\bm{\theta}(\tau))}}{\partial\theta_k}\frac{\partial\ket{\phi(\bm{\theta}(\tau))}}{\partial\theta_l}\right), \\ 
        C_k\left(\bm{\theta}(\tau)\right)&=\mathfrak{Re}\left(-\frac{\partial\bra{\phi(\bm{\theta}(\tau))}}{\partial\theta_k}{\mathcal{H}}\ket{\phi(\bm{\theta}(\tau))}\right).
        \label{2144_1}
    \end{align}
     \par Taking the derivatives of  Eq.~\eqref{eq:circuit on one qubit} with respect to the parameters, one can calculate the derivative of the trial state in Eq.~\eqref{eq:Ansatz}:
    \begin{align}
        \frac{\partial \ket{\phi(\bm{\theta}(\tau))}}{\partial \theta_k}=\sum\limits_{i=1}^{N_d}f_{k,i}R_{k,i}\ket{\overline{0}},\quad R_{k,i}=R_{N_V}R_{N_V-1}...R_{k+1}R_k\sigma_{k,i}R_{k-1}\dots R_2R_1 ,\label{eq: Ansatz derivative}
    \end{align}
    where we omitted the dependencies of the $R_k$s on $\theta_k$s for simplicity.
     With the  chosen \textit{Ansatz} in Eq.~\eqref{eq: Ansatz derivative}, the matrix elements in Eq.~\eqref{2144} and in Eq.~\eqref{2144_1} can be computed as:
    \begin{align}
    \label{eq: A evaluation}
    A_{k,l}\left(\bm{\theta}(\tau)\right)&=\sum\limits_{i,j=1}^{N_d}\left(f^*_{k,i}f_{l,j}\bra{\overline{0}}R^{\dagger}_{k,i}R_{l,j}\ket{\overline{0}}+h.c.\right) ,\\
    \label{eq: C evaluation}
    C_k\left(\bm{\theta}(\tau)\right)&=\sum\limits_{i=1}^{N_d}\sum\limits_{m=1}^{N}\left(f^*_{k,i}\lambda_m\bra{\overline{0}}R_{k,i}^{\dagger}\sigma_mR\ket{\overline{0}}+h.c.\right)  ,
    \end{align}
    by measuring circuits illustrated in Fig.~\ref{fig:circuit A}. Alternatively, it is possible to calculate the matrix element with parameter-shift rules \cite{mitarai2018quantum,markovich2024parameter}, making the circuits shorter in depth and therefore potentially more suitable for NISQ applications. They are applied to similar algorithms in Ref.~\cite{anuar2024operatorprojectedvariationalquantumimaginary, kolotouros2024accelerating}.
\begin{figure}[ht]
        \centering
        \scalebox{1.0}{\input{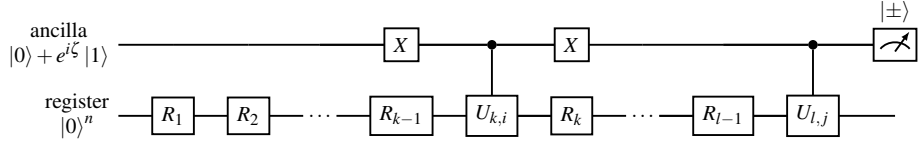}}
        \caption{The quantum circuit evaluating the elements of $A$ and $C$ as given in Eqs.~\eqref{eq: A evaluation} and~\eqref{eq: C evaluation}. The controlled unitary $U_{k,i}$ is one of the $\sigma_{k,i}$. Depending on if one is evaluating $A$ or $C$, the controlled unitary $U_{l,j}$ is another $\sigma_{l,j}$ or one of the Pauli strings $\sigma_m$ (in which case we take $l=N_V+1$) that constitute the Hamiltonian, respectively .}
        \label{fig:circuit A}
    \end{figure}

If the matrix $A\left(\bm{\theta}(\tau)\right)$ is invertible for all $\bm{\theta}(\tau)$ in the relevant domain, then Eq.~\eqref{eq: parameter ODE} can be written in the form of Eq.~\eqref{eq: initial value problem} by identifying $y(\tau)$ with $\bm{\theta}(\tau)$: 
   \begin{align}\label{eq: parameter initial value problem}
       \frac{\partial \bm{\theta}(\tau)}{\partial \tau} = f\left(\bm{\theta}(\tau)\right):= A^{-1}\left(\bm{\theta}(\tau)\right)C\left(\bm{\theta}(\tau)\right),
   \end{align}
and with the initial condition $\bm{\theta}(0)$ at time $\tau=0$.
The time evolution of the parameters $\bm{\theta}(\tau)$ is given as this ODE, and solving it until final time $T$ gives the vector of parameters $\bm{\theta}(T)$ that yield the final state $\ket{\phi(\bm{\theta}(T))}$, which serves as an approximation of the evolved state $\ket{\psi(T)}$.
If $A$ is not invertible, regularization introduces an additional error (see Sections~\ref{subsec: Total error and resources} and~\ref{subsec: Estimation of the Shot noise}). 
Similarly, the inversion itself will inevitably also include an additional error that has to be taken into account. 
As long as the errors stemming from regularization and inversion of $A$ fulfill the error bounds in this work (such as Eq.~\eqref{eq:upperboundfromliubovsensitivity}), our estimates are still applicable.
We assume therefore for the remainder of this work that $A$ is invertible.

We can thus solve a DE of the form in Eq.~\eqref{eq: generalized imaginary time schroedinger equation} with a hybrid quantum-classical algorithm by solving the ODE in Eq.~\eqref{eq: parameter ODE} defining the parameters $\bm{\theta}(\tau)$ of an Ansatz state. 

In the next section, we introduce the errors and resources required for solving the DEs in Eqs.~\eqref{eq: initial value problem} and~\eqref{eq: generalized imaginary time schroedinger equation} with the Runge-Kutta methods and with the variational quantum algorithm, respectively.

\subsection{Errors and Resources}
\label{subsec: Total error and resources}
In this paper, we analyze errors and resource requirements of the Runge-Kutta methods and the variational algorithm as described above. For this, we focus on analyzing the errors involved in solving the ODE using the RKM described in Sec.~\ref{sec:Preliminaries: Runge-Kutta Methods},  and for the variational algorithm additionally the total error arising from preparing the trial state $\ket{\phi(\bm{\theta}(T))}$ defined in Eq.~\eqref{eq:Ansatz} using the method described in Sec.~\ref{sec: Variational Quantum Simulator preliminaries}. 
In our work, we are not considering representation errors stemming from the chosen Ansatz state not being expressive enough, as it proves very challenging to estimate these errors in general. 
However, they have an influence on both errors that we estimate, by playing a role in the approximation of the time evolution of $\bm{\theta}(\tau)$ and in the approximation of the final state $\ket{\psi(T)}$.
They are specific to the task at hand, it is possible to derive a posteriori error bounds and they become less relevant for deep Ans\"{a}tze, see for example Refs.~\cite{zoufal2023error,gacon2024variational}. Also, we assume circuit error such as gate infidelity,
bias and SPAM errors to be negligible. 
We further assume the matrices $A$ as defined in Eq.~\eqref{eq: A evaluation} to be invertible (see our discussion in Sec.~\ref{sec:Introduce Quantum Circuits, McLachlan Principle, Shot noise, Lipschitz Number}). In practice, it might be necessary to introduce matrix regularizations for the cases where $A$ is not invertible (see techniques in Refs.~\cite{alghassi2022variational,mcardle2019variational,fontanela2021quantum, anuar2024operatorprojectedvariationalquantumimaginary}), which would lead to additional errors. 
\par We denote the total error arising while approximating the solution of the ODE (Eq.~\eqref{eq: initial value problem}) for  noiseless evaluations of the functions $f(\tau,y(\tau))$:
\begin{align}\label{eq: initial value problem noiseless}
\epsilon_{ODE}^{(0)}:=
\|y(\tau_{N_{\tau}})-y_{N_{\tau}}\|_{2}\ ,
\end{align}
and analyze it in Sec.~\ref{sec: analysis classical ode solver} (Theorem~\ref{thm:classicalODEerror}).
For the variational quantum algorithm, the parameters calculated via an RKM from Eq.~\eqref{eq: parameter initial value problem} additionally are influenced by shot noise in the evaluations of $f(\bm{\theta}(\tau))$, which we denote by the superscript $\delta$:
\begin{align}\label{eq: initial value problem noisy}
\epsilon_{ODE}^{(\delta)}:=
\|\bm{\theta}(\tau_{N_{\tau}})-\hat{\bm{\theta}}_{N_{\tau}}\|_{2}\ .
\end{align}
We added the hat to the approximation of $\bm{\theta}_{N_{\tau}}$ of the parameters calculated via an RKM, in order to show that they may be perturbed (e.g., from shot noise).
We analyze $\epsilon_{ODE}^{(\delta)}$ in Sec.~\ref{sec: Analysis with Shot Noise Scaling}.

Whenever the norm is not specified, we are using the $2$-norm:
\begin{align}
    \|\bm{\theta}(\tau_{N_{\tau}})-\hat{\bm{\theta}}_{N_{\tau}}\|_{2}:=\left(\sum_{i=1}^{N_V}|\bm{\theta}(\tau_{N_{\tau}})_i-\hat{\bm{\theta}}_{N_{\tau},i}|^2\right)^{1/2}\ ,
\end{align}
where $N_V$ is the length of the vectors $\bm{\theta}(\tau_{N_{\tau}})$ and $\hat{\bm{\theta}}_{N_{\tau}}$.

\par We define the total error arising from applying the variational algorithm to approximate the state $\ket{\psi(T)}$ with the trial state $\ket{\phi(\hat{\bm{\theta}}_{N_{\tau}})}$ as:
\begin{align}\label{eq:total error definition}
    \epsilon_{total}:=\|\psi(T)-\phi(\hat{\bm{\theta}}_{N_{\tau}})\|_1\ ,
\end{align}
where $\psi(T)=\ket{\psi(T)}\bra{\psi(T)}$ and $\phi(\hat{\bm{\theta}}_{N_{\tau}})=\ket{\phi(\hat{\bm{\theta}}_{N_{\tau}})}\bra{\phi(\hat{\bm{\theta}}_{N_{\tau}})}$, where $\ket{\psi(T)}$ and $\ket{\phi(\hat{\bm{\theta}}_{N_{\tau}})}$ are as defined in Eqs.~\eqref{eq:definition_of_psi} and~\eqref{eq:Ansatz}. For $\epsilon_{total}$, we use the trace distance as the most convenient distance for quantum states, which for two pure states is defined as:
\begin{align}
    \|\psi(T)-\phi(\hat{\bm{\theta}}_{N_{\tau}})\|_1:=\sqrt{1-|\braket{\psi(T)|\phi(\hat{\bm{\theta}}_{N_{\tau}})}|^2} \ .
\end{align}
We analyze the error $\epsilon_{total}$ in Sec.~\ref{sec: Variational Quantum Simulator}.

\par Based on the error estimates, we are estimating the resources needed in order for executing the RKMs and the variational algorithm.
We define the cost of solving an ODE with an RKM as the total number of times that the function $f(\bm{\theta}(\tau))$ has to be evaluated for the whole RKM:
\begin{align}\label{eq:cost function with shot noise}
    C(N_{\tau},N_r,s,p):=sN_{\tau}(s,p)N_r(s,p)\ ,
\end{align}
where  $N_r$ is the number of measurements of the function $f(\bm{\theta}(\tau))$ at a single stage of one RKM time step. In the absence of shot noise, the cost in Eq.~\eqref{eq:cost function with shot noise} reduces to
\begin{align}\label{eq:cost function no shot noise}
    C(N_{\tau},s,p):=sN_{\tau}(s,p)\ ,
\end{align}
as at each stage and time step, one needs only one evaluation of $f(\tau,y(\tau))$.
\par We are estimating the minima of the costs in Eq.~\eqref{eq:cost function with shot noise} and in Eq.~\eqref{eq:cost function no shot noise} required to reach a specified target error $\epsilon_{target}^{(\delta)}$ that upper bounds the error $\epsilon_{ODE}^{(\delta)}\leq \epsilon_{target}^{(\delta)}$  in Sec.~\ref{sec: analysis classical ode solver} and~\ref{sec: Analysis with Shot Noise Scaling}. 
\par For the variational algorithm described in Sec.~\ref{sec: Variational Quantum Simulator preliminaries}, we also determine the total number of quantum circuit evaluations required. As we will demonstrate, calculating $f(\bm{\theta}(\tau))$ for a given input $\bm{\theta}(\tau)$ requires the preparation and measurement of several quantum circuits. This significantly increases the total number of quantum circuit evaluations beyond the cost in Eq.~\eqref{eq:cost function with shot noise}. We refer to this as $ N_{circ}$, and we estimate it in Sec.~\ref{sec: Variational Quantum Simulator}.

\section{Runge-Kutta methods without Shot Noise}
\label{sec: analysis classical ode solver}
For ease of exposition, in this section we analyze the total error arising from solving an ODE with an RKM without the presence of shot noise in the evaluations of $f(\tau,y(\tau))$ in Eq.~\eqref{eq: initial value problem}. 
We also determine the minimal number $N_{\tau}^{(0)}$ of RKM steps required to achieve a prescribed accuracy.

In Theorem~\ref{thm:classicalODEerror}, we establish an upper bound on the total error, denoted $\epsilon_{ODE}^{(0)}$ in Eq.~\eqref{eq: initial value problem noiseless}, by analyzing the error propagation due to the truncation error of the RKM (see Theorem~\ref{thm:local_truncation_RKE}). 

We select the upper bound of the error $\epsilon_{ODE}^{(0)}$ as a target error, meaning the maximal error we can expect. 
This target error is the used to determine the minimal number of RKM steps needed to ensure that the resulting error remains within the target. 
The result is formalized in Theorem~\ref{thm:classicalODEresources}, and the corresponding minimal cost, as defined in Eq.~\eqref{eq:cost function no shot noise}, is derived in Corollary~\ref{cor:classicalODEcost}.

\begin{thm}\label{thm:classicalODEerror}
    Let $y(\tau_{N_{\tau}})$ be the solution of Eq.~\eqref{eq: initial value problem}. We assume that the assumptions of Theorem~\ref{thm:local_truncation_RKE} hold. 
Let us further assume that there exists a Lipschitz constant $L_{fy}$, such that $\forall y_1(\tau_n),y_2(\tau_n)$ in the spaces $\{y(\tau_n):\tau_n\in [0,T]\}$ and $\{y_{n}:n\in [1,N_{\tau}]\}$, the following holds:
\begin{align}\label{eq:definition Lipschitz constant}
    \left\|f(\tau_n,y_1(\tau_n))-f(\tau_n,y_2(\tau_n))\right\|
    \leq & L_{fy}\|y_1(\tau_n)-y_2(\tau_n)\|\ , \forall\tau_n\in [0,T].
\end{align}
    Then, the approximation $y_{N_{\tau}}$ of the solution $y(\tau_{N_{\tau}})$ at time $\tau_{N_{\tau}}$ calculated via a $p$-th order RKM with $s$ stages in $N_{\tau}$ time steps has the  error defined in Eq.~\eqref{eq: initial value problem noiseless} bounded by:
\begin{align}\label{eq:error_bound_epsilon0}
    \epsilon_{ODE}^{(0)}\leq  \frac{(1+ F(N_{\tau},s))^{N_{\tau}}-1}{F(N_{\tau},s)}\left(\frac{T}{N_{\tau}}\right)^{p+1}  K  L^p_{f\tau}   M\ ,
\end{align}
where we used the notation
\begin{align}\label{eq:def of function F}
    F(N_{\tau},s) := 
        \frac{b_{max}}{a_{max}} \left(\left(1+L_{fy}a_{max} \frac{T}{N_{\tau}}\right)^{s}-1\right) .
\end{align}
\end{thm}
For a definition of $a_{max}$ and $b_{max}$ see Eqs.~\eqref{eq:definition_bmax} and~\eqref{eq:definition_amax}.
We define the upper bound in Eq.~\eqref{eq:error_bound_epsilon0} as the target error, i.e., the theoretical maximum error that can occur when applying this method:
   \begin{align}\label{eq:definition target error noiseless}
       \epsilon_{target}^{(0)}:=  \frac{(1+ F(N_{\tau},s))^{N_{\tau}}-1}{F(N_{\tau},s)}\left(\frac{T}{N_{\tau}}\right)^{p+1}  K  L^p_{f\tau}   M\ .
   \end{align}
Based on this result, we estimate the minimal number of time steps that are needed to guarantee a particular target error $\epsilon_{target}^{(0)}$ while solving Eq.~\eqref{eq: initial value problem}:
\begin{thm}\label{thm:classicalODEresources}
   Let the assumptions of Theorems~\ref{thm:local_truncation_RKE}-\ref{thm:classicalODEerror} hold. 
   Then, the minimal number of RKM steps $  N_{\tau}^{(0)}$ required to solve Eq.~\eqref{eq: initial value problem} with a target error $\epsilon_{target}^{(0)}$
   is
    \begin{align}\label{1245}
    N_{\tau}^{(0)}=L_{f\tau}T\left(\frac{ K M    \left(e^{b_{max} T L_{fy}s}-1\right)}{\epsilon_{target}^{(0)}b_{max} s  L_{fy}}\right)^{1/p}.
\end{align}
\end{thm}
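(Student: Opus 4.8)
The plan is to start from the definition of $\epsilon_{target}^{(0)}$ in Eq.~\eqref{eq:definition target error noiseless}, which simply equates the target error to the upper bound established in Theorem~\ref{thm:classicalODEerror}, and to invert this relation for $N_\tau$. Since $N_\tau$ appears both inside the base and the exponent of $(1+F(N_\tau,s))^{N_\tau}$, the equation is transcendental and cannot be inverted exactly in closed form; the strategy is therefore to pass to the large-$N_\tau$ (equivalently, small step-size $T/N_\tau$) regime, extract the leading-order behaviour of the prefactor, and solve the resulting algebraic power law.

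First I would expand $F(N_\tau,s)$ for small $T/N_\tau$. Writing $x = L_{fy}a_{max}T/N_\tau$ and using $(1+x)^s - 1 = sx + \mathcal{O}(x^2)$, one finds
\begin{align}
F(N_\tau,s) = \frac{b_{max}\,s\, L_{fy}\,T}{N_\tau} + \mathcal{O}\!\left(\frac{1}{N_\tau^2}\right),
\end{align}
so that, abbreviating $G := b_{max}\,s\,L_{fy}\,T$, we have $F = G/N_\tau + \mathcal{O}(N_\tau^{-2})$. Next I would control the exponentiated term: taking logarithms, $N_\tau\ln(1+F) = N_\tau\bigl(F + \mathcal{O}(F^2)\bigr) = G + \mathcal{O}(N_\tau^{-1})$, whence $(1+F)^{N_\tau}\to e^{G}=e^{b_{max}\,s\,L_{fy}\,T}$. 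Consequently the prefactor in the error bound satisfies
\begin{align}
\frac{(1+F(N_\tau,s))^{N_\tau}-1}{F(N_\tau,s)} = N_\tau\,\frac{e^{b_{max}s L_{fy}T}-1}{b_{max}s L_{fy}T}\left(1+\mathcal{O}\!\left(\frac{1}{N_\tau}\right)\right).
\end{align}

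Substituting this together with $(T/N_\tau)^{p+1}$ into Eq.~\eqref{eq:definition target error noiseless} collapses the $N_\tau$-dependence onto a single power: the factors combine as $N_\tau\cdot N_\tau^{-(p+1)}=N_\tau^{-p}$, leaving
\begin{align}
\epsilon_{target}^{(0)} = \frac{e^{b_{max}s L_{fy}T}-1}{b_{max}s L_{fy}}\,\frac{(L_{f\tau}T)^{p}\,K M}{N_\tau^{p}}.
\end{align}
Solving this relation for $N_\tau^{p}$ and taking the $p$-th root reproduces exactly Eq.~\eqref{1245}.

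The main obstacle is not the algebra but the justification of the asymptotic replacements, i.e.\ arguing that in the regime of interest the discarded $\mathcal{O}(1/N_\tau)$ corrections are genuinely subleading, so that the $N_\tau$ obtained is large enough to make the small-step expansion self-consistent. I would close this by emphasising that the construction is an \emph{a-priori}, leading-order estimate in the step size, consistent with the role of $\epsilon_{target}^{(0)}$ as an upper bound: any multiplicative constant hidden in the corrections merely rescales $N_\tau$ by a factor $1+\mathcal{O}(N_\tau^{-1/p})$, which vanishes in the high-accuracy limit $\epsilon_{target}^{(0)}\to 0$ where the estimate is actually deployed. A minor bookkeeping point is that $s$ is tied to $p$ through Table~\ref{tab:Relation between order and number of stages of Runge-Kutta methods}, but since $s$ enters the inversion only as a fixed parameter it does not interfere with solving for $N_\tau$.
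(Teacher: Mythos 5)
Your proposal is correct and follows essentially the same route as the paper's proof in Appendix~\ref{app:proof of classical resources}: both linearize $F(N_{\tau},s)\approx b_{max}sL_{fy}T/N_{\tau}$ in the regime $L_{fy}a_{max}T\ll N_{\tau}$, replace $(1+F)^{N_{\tau}}$ by $e^{b_{max}sL_{fy}T}$, and solve the resulting power law $\epsilon_{target}^{(0)}\propto N_{\tau}^{-p}$ for $N_{\tau}$. Your explicit tracking of the $\mathcal{O}(1/N_{\tau})$ corrections and the self-consistency of the small-step expansion is somewhat more careful than the paper, which simply invokes the assumption $\Theta\ll N_{\tau}$, but the substance of the derivation is identical.
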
   
Using the latter results, we get:
\begin{corollary}\label{cor:classicalODEcost}
  For a target error $\epsilon_{target}$, the minimal value of the cost function ~\eqref{eq:cost function no shot noise} is
\begin{align}
    C(N_{\tau}^{(0)},s,p)=s L_{f\tau}T \left(\frac{ K M    \left(e^{b_{max}T L_{fy}s}-1\right)}{\epsilon_{target}^{(0)}b_{max} s L_{fy}}\right)^{1/p}\ ,
\end{align}
where $s$ is the number of the RKM stages, $p$ is the RKMs order and  $N_{\tau}^{(0)}$ is the minimal number of time steps of the chosen RKM.
\end{corollary}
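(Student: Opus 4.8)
The plan is a direct substitution: in the absence of shot noise the cost is defined in Eq.~\eqref{eq:cost function no shot noise} as $C(N_{\tau},s,p)=s N_{\tau}(s,p)$, so once the minimal admissible number of time steps is known, the minimal cost follows simply by plugging it in. The whole content of the corollary is therefore to justify that the cost-minimizing choice of $N_{\tau}$ is exactly the quantity $N_{\tau}^{(0)}$ supplied by Theorem~\ref{thm:classicalODEresources}.

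First I would fix the RKM, i.e.\ fix its order $p$ and the associated number of stages $s$ (the pair being linked through Table~\ref{tab:Relation between order and number of stages of Runge-Kutta methods}). For such a fixed method, $C(N_{\tau},s,p)=sN_{\tau}$ is strictly increasing in the number of time steps $N_{\tau}$, since $s$ is held constant. Hence, among all choices of $N_{\tau}$ that guarantee the target error $\epsilon_{target}^{(0)}$, the cost is minimized at the smallest admissible $N_{\tau}$. Because the error bound of Theorem~\ref{thm:classicalODEerror} is monotonically decreasing in $N_{\tau}$, every $N_{\tau}\geq N_{\tau}^{(0)}$ still meets the target, so the smallest admissible value is precisely $N_{\tau}^{(0)}$.

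Second, I would invoke Theorem~\ref{thm:classicalODEresources}, which gives this smallest admissible number of time steps in closed form in Eq.~\eqref{1245}. Substituting that expression into $C(N_{\tau},s,p)=sN_{\tau}$ then yields
\begin{align}
    C(N_{\tau}^{(0)},s,p)=s\,N_{\tau}^{(0)}=s L_{f\tau}T \left(\frac{ K M    \left(e^{b_{max}T L_{fy}s}-1\right)}{\epsilon_{target}^{(0)}b_{max} s L_{fy}}\right)^{1/p}\ ,
\end{align}
which is the claimed formula.

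There is essentially no analytic obstacle, as the statement is an immediate consequence of the preceding theorem; the only point requiring care is the integrality of $N_{\tau}$. The closed form in Eq.~\eqref{1245} is generally not an integer, so strictly one must take $\lceil N_{\tau}^{(0)}\rceil$ time steps and the true minimal cost is $s\lceil N_{\tau}^{(0)}\rceil$. I would therefore read the corollary as the continuous relaxation in which $N_{\tau}^{(0)}$ is treated as a real number, noting that the rounding adds at most $s$ to the cost, which is negligible in the regime of interest where $N_{\tau}^{(0)}$ is large.
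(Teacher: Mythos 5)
Your proposal is correct and takes essentially the same route as the paper, which treats the corollary as an immediate consequence of Theorem~\ref{thm:classicalODEresources}: substitute the minimal number of time steps $N_{\tau}^{(0)}$ from Eq.~\eqref{1245} into the cost definition $C(N_{\tau},s,p)=sN_{\tau}$ of Eq.~\eqref{eq:cost function no shot noise}. Your additional remarks --- that the cost is increasing in $N_{\tau}$ so the minimum is attained at the smallest admissible value, and that $N_{\tau}^{(0)}$ should strictly be rounded up to an integer --- are sound refinements of points the paper leaves implicit.
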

The proofs of Theorems~\ref{thm:classicalODEerror} and~\ref{thm:classicalODEresources} are provided in Appendices~\ref{Ax:Analysis with shot noise} and~\ref{app:proof of classical resources}, respectively.
In the following section, we  analyze the RKM in the presence of the   shot noise in the evaluations of the differential $f(\bm{\theta}(\tau))$.

\section{Runge-Kutta methods under the presence of Shot Noise} 

\label{sec: Analysis with Shot Noise Scaling}
In the variational algorithm that we presented in Sec.~\ref{sec: Variational Quantum Simulator preliminaries}, we are solving the ODE given in Eq.~\eqref{eq: parameter initial value problem} by using RKMs. The function $f(\bm{\theta}(\tau))$ from Eq.~\eqref{eq: parameter initial value problem} is given by expectation values estimated via sampling quantum circuits.  Therefore, we assume that instead of $f(\bm{\theta}(\tau))$, we have access to its approximation $\hat{f}(\bm{\theta}(\tau))$.
We need to take into account the statistical errors arising while computing $\hat{f}(\bm{\theta}(\tau))$ based on the measurement results.
The analysis of this section is valid for all differential equations as given in Eq.~\eqref{eq: initial value problem} that have a noisy $\hat{f}(\bm{\theta}(\tau))$.

By virtue of the central limit theorem, let us also assume that each measurement  is drawn from a random Gaussian distribution with mean $f(\bm{\theta}(\tau))$ and standard deviation $\sigma_{\text{single}}$. Calculating the average of these measurements gives the estimate $\hat{f}(\bm{\theta}(\tau))$:
\begin{align}
    \hat{f}(\bm{\theta}(\tau))=\frac{1}{N_r}\sum_{i=1}^{N_r}\hat{f}_i(\bm{\theta}(\tau))\ ,
\end{align}
where $\hat{f}_i(\bm{\theta}(\tau))$ is a single measurement result.
By the central limit theorem as $N_r\rightarrow \infty$, the estimate $\hat{f}(\bm{\theta}(\tau))$ behaves as a Gaussian distribution:
\begin{align}
\sqrt{N_r}(\hat{f}(\bm{\theta}(\tau))-f(\bm{\theta}(\tau)))\rightarrow^d N(0,\sigma_{single}^2).
\end{align} 
From Chebyshev's inequality we get the following bound for a $\delta>0$:
\begin{align}
    P(\|\hat{f}(\bm{\theta}(\tau))-f(\bm{\theta}(\tau))\|\geq \delta)\leq \frac{\sigma^2_{single}}{\delta^2 N_r}. 
\end{align} 
Alternatively, a Chernoff bound would be tighter, but more cumbersome to apply here and probably not lead to a different qualitative analysis.

If we take the probability $P(\|\hat{f}(\bm{\theta}(\tau))-f(\bm{\theta}(\tau))\|\geq \delta)$ to be equal to $\eta\in(0,1)$, then with probability of $1-\eta$, the following bound holds: 
\begin{align}\label{eq: error in evaluating f}
    \|\hat{f}(\bm{\theta}(\tau))-f(\bm{\theta}(\tau))\|\leq\delta=\frac{\Sigma}{\sqrt{N_r}}
\end{align}
In the above, we defined $\Sigma=\frac{\sigma_{single}}{\sqrt{\eta}}$.

Assuming access to a noisy estimate of $f(\bm{\theta}(\tau))$, as described above, we derive error and resource estimates for solving ODEs using RKMs under the presence of shot noise.

In Theorem~\ref{thm:quantumODEerror}, we establish an upper bound on the error $\epsilon_{ODE}^{(\delta)}$ as defined in Eq.~\eqref{eq: initial value problem noisy} by analyzing how both the truncation error of the RKM (see Theorem~\ref{thm:local_truncation_RKE}) and the shot noise in the estimation of $f(\bm{\theta}(\tau))$ (see Eq.~\eqref{eq: error in evaluating f}) contribute to error propagation.

We select the upper bound of the error $\epsilon_{ODE}^{(\delta)}$ as a target error, meaning the maximal error we can expect. 
This target error is the used to determine the minimal number of RKM steps and the minimal number of measurements of each $f(\bm{\theta}(\tau))$ required to keep the resulting error within the target. 
The result is formalized in Theorem~\ref{thm:quantumODEresources}, and the corresponding minimal cost, as defined in Eq.~\eqref{eq:cost function with shot noise}, is derived in Corollary~\ref{cor:quantumODEcost}.

\begin{thm}\label{thm:quantumODEerror}
Let us assume that the assumptions of Theorem~\ref{thm:local_truncation_RKE} hold for Eq.~\eqref{eq: parameter initial value problem}.  Under the conditions  in Eq.~\eqref{eq:boundL}, Eq.~\eqref{eq:definition Lipschitz constant}, the approximation $\hat{\bm{\theta}}_{N_{\tau}}$ at time $\tau_{N_{\tau}}$ calculated via a $p$-th order RKM with $s$ stages in $N_{\tau}$ time steps has the error defined  in Eq.~\eqref{eq: initial value problem noisy} upper bounded by:
\begin{align}
 \epsilon_{ODE}^{(\delta)}&\leq 
\frac{(1+ F(N_{\tau},s))^{N_{\tau}}-1}{F(N_{\tau},s)}\left(\frac{3\delta}{L_{fy}}F(N_{\tau},s)+\left(\frac{T}{N_{\tau}}\right)^{p+1}  K  L^p_{f\tau}   M\right),
\end{align}
where we used the notation introduced in Eq.~\eqref{eq:def of function F}.
\end{thm}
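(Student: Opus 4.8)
The plan is to reduce the global error $\epsilon_{ODE}^{(\delta)}=\|\bm{\theta}(\tau_{N_{\tau}})-\hat{\bm{\theta}}_{N_{\tau}}\|$ to a one-step recurrence for the accumulated error $e_n:=\|\bm{\theta}(\tau_n)-\hat{\bm{\theta}}_n\|$ and then solve that recurrence explicitly. This mirrors the structure of Theorem~\ref{thm:classicalODEerror}: the noiseless bound already produces the geometric prefactor $\frac{(1+F(N_{\tau},s))^{N_{\tau}}-1}{F(N_{\tau},s)}$ multiplying the local truncation error, so I expect the shot noise to enter purely as an additional constant inside the one-step increment, which is then amplified by the same prefactor. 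The initial condition is taken exact, so $e_0=0$.

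\textbf{One-step error.} Fix a step $n$ and introduce the exact RKM update map $\Phi$ (built from the true $f$ via Eqs.~\eqref{eq:general RK equation}--\eqref{1619}) and the noisy map $\hat{\Phi}$ (built from $\hat{f}$). I would insert two intermediate quantities and use the triangle inequality,
\begin{align}
e_{n+1}\leq \|\bm{\theta}(\tau_{n+1})-\Phi(\bm{\theta}(\tau_n))\|+\|\Phi(\bm{\theta}(\tau_n))-\Phi(\hat{\bm{\theta}}_n)\|+\|\Phi(\hat{\bm{\theta}}_n)-\hat{\Phi}(\hat{\bm{\theta}}_n)\| .
\end{align}
The first term is the local truncation error, bounded by $\Delta\tau^{p+1}KL_{f\tau}^pM$ via Theorem~\ref{thm:local_truncation_RKE}. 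The second term measures the stability of the RKM update with respect to its initial datum, and the third is the pure shot-noise contribution evaluated at a \emph{fixed} input, which decouples the two error sources cleanly.

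\textbf{Propagation through the stages.} Both the second and third terms are controlled by tracking how perturbations propagate through the nested stage definitions in Eq.~\eqref{1619}. For the stability term I would set $\delta_i:=\|k_i(\bm{\theta}(\tau_n))-k_i(\hat{\bm{\theta}}_n)\|$, view each stage as a function of its starting value, and use the Lipschitz bound \eqref{eq:definition Lipschitz constant} together with $a_{max}$ to obtain the recursion $\delta_i\leq L_{fy}e_n+L_{fy}a_{max}\Delta\tau\sum_{j<i}\delta_j$. Summing this geometric recursion and inserting it into $\|\Phi(\bm{\theta}(\tau_n))-\Phi(\hat{\bm{\theta}}_n)\|\leq e_n+\Delta\tau\,b_{max}\sum_i\delta_i$ reproduces exactly the factor $(1+F(N_{\tau},s))$, with $F$ as in Eq.~\eqref{eq:def of function F}. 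For the noise term the identical recursion is instead driven by the per-evaluation bound \eqref{eq: error in evaluating f}, since the noisy stage $\hat{k}_i$ differs from the exact stage at the same input both through the direct noise $\delta$ and through the already-perturbed lower stages $\hat{k}_j$, $j<i$; this yields a contribution proportional to $\tfrac{\delta}{L_{fy}}F(N_{\tau},s)$.

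\textbf{Solving the recurrence and the obstacle.} Combining the three pieces gives a linear recurrence of the form $e_{n+1}\leq(1+F)e_n+\tfrac{c\,\delta}{L_{fy}}F+\Delta\tau^{p+1}KL_{f\tau}^pM$ with $e_0=0$; unrolling it produces the geometric sum $\sum_{m=0}^{N_{\tau}-1}(1+F)^m=\frac{(1+F)^{N_{\tau}}-1}{F}$ as the common amplification factor, which matches the claimed bound once $\Delta\tau=T/N_{\tau}$ is substituted. The main obstacle is the careful simultaneous bookkeeping of the two perturbation sources through the nested stages: the initial-datum error feeds into the same recursion that carries the shot noise, and a deliberately conservative accounting of this coupling (together with discarding the $\mathcal{O}(\Delta\tau^{p+2})$ tail of the local truncation error) is what yields the constant $c=3$ in front of the noise term rather than the naive $c=1$. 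Finally I would note that \eqref{eq: error in evaluating f} holds per evaluation only with probability $1-\eta$, so a union bound over all stage and step evaluations is implicitly required for the statement to hold with a controlled global success probability.
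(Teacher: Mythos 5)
Your proposal is correct, and it actually proves a slightly \emph{stronger} bound than the theorem states. The overall machinery is the same as the paper's: a one-step recursion for $e_n=\|\bm{\theta}(\tau_n)-\hat{\bm{\theta}}_n\|$, a stage-wise perturbation recursion of the form $\delta_i\leq \delta_1+L_{fy}a_{max}\Delta\tau\sum_{j<i}\delta_j$ summed geometrically to produce $F(N_{\tau},s)$, and an unrolled discrete Gr\"onwall argument giving the prefactor $\frac{(1+F)^{N_{\tau}}-1}{F}$. What differs is the decomposition of the one-step error, and yours is cleaner. The paper compares the noisy stages evaluated at the \emph{exact} input, $\hat{k}_i\left(\tau_n;\bm{\theta}(\tau_n);\cdot\right)$, against the noisy stages at the \emph{noisy} input, $\doublehat{k_i}\left(\tau_n;\hat{\bm{\theta}}_n;\cdot\right)$; since both evaluations carry noise, passing through the exact $k_i$ costs $2\delta$ per stage, and the separately bounded noisy local truncation error $\|\hat{\ell}_n\|\leq\|\ell_n\|+\frac{\delta}{L_{fy}}F$ contributes a further $\delta$ --- hence the paper's $3\delta$. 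Your splitting into (noiseless LTE) $+$ (stability of the exact map $\Phi$ in its initial datum) $+$ (exact-versus-noisy map at the same fixed input $\hat{\bm{\theta}}_n$) lets the noise enter exactly once, and carrying your two recursions through gives
\begin{align}
e_{n+1}\leq (1+F)\,e_n+\frac{\delta}{L_{fy}}F+\left(\frac{T}{N_{\tau}}\right)^{p+1}KL^p_{f\tau}M\ ,
\end{align}
i.e.\ the constant $c=1$. Since $\delta\leq 3\delta$, the stated theorem follows a fortiori; the ``deliberately conservative accounting'' you invoke to recover $c=3$ is unnecessary --- the factor $3$ in the paper is an artifact of its particular decomposition, not of the stage coupling, and you should not weaken your bookkeeping to reproduce it. Two side remarks: your observation that Eq.~\eqref{eq: error in evaluating f} holds only with probability $1-\eta$ per evaluation, so that a union bound over all $sN_{\tau}$ stage evaluations is needed for a controlled global success probability, is apt --- the paper's proof treats $\delta$ as a deterministic bound and is silent on this; and, like the paper, you discard the $\mathcal{O}(\Delta\tau^{p+2})$ tail of Theorem~\ref{thm:local_truncation_RKE} and implicitly apply the Lipschitz condition~\eqref{eq:definition Lipschitz constant} at the intermediate stage arguments $\hat{\bm{\theta}}_n+\Delta\tau\sum_{m}a_{i,m}\hat{k}_m$, which strictly speaking lie outside the two sets on which $L_{fy}$ was postulated; flagging the latter explicitly would make the argument airtight.
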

 Let us further denote the latter upper bound as 
    \begin{align}\label{eq:bound on target error for quantum resource thm}
    \epsilon_{target}^{(\delta)}:=  &\frac{(1+ F(N_{\tau},s))^{N_{\tau}}-1}{F(N_{\tau},s)}\left(\frac{3\delta}{L_{fy}}F(N_{\tau},s)+\left(\frac{T}{N_{\tau}}\right)^{p+1}  K  L^p_{f\tau}   M\right)\ .
\end{align}
Given this error estimate, we obtain the following resource minimization and the directly following minimal cost:
\begin{thm}\label{thm:quantumODEresources}
Let us assume that $\hat{f}(\bm{\theta}(\tau))$ is an approximation for $f(\bm{\theta}(\tau))$ calculated from $N_r$ measurements, with the error scaling $\delta=\frac{\Sigma}{\sqrt{N_r}}$, where $\Sigma>0$ is a constant. 
Then, calculating the approximated solution $\hat{\bm{\theta}}_{N_{\tau}}$ to the ODE  in Eq.~\eqref{eq: parameter initial value problem}
by a $p$-th order RKM with $s$ stages and with a target error $\epsilon_{target}^{(\delta)}$  defined in Eq.~\eqref{eq:bound on target error for quantum resource thm}, the number of time steps required must be at least
\begin{align} 
    N_{\tau}^{(\delta)}=T L_{f\tau}\left(\frac{KM\left(e^{b_{max}sL_{fy}T}-1\right)(2p+1)}{\epsilon_{target}^{(\delta)}b_{max}sL_{fy}}\right)^{1/p},
\end{align}
and at least 
\begin{align}
   N_{r}^{(\delta)}=\frac{9\Sigma^2}{L_{fy}^2}   
   \left(\frac{\epsilon_{target}^{(\delta)}}{\left(1+ F(N_{\tau}^{(\delta)},s)\right)^{N_{\tau}^{(\delta)}}-1}-\left(\frac{T}{N_{\tau}^{(\delta)}}\right)^{p+1} \frac{KL^p_{ft}  M}{F(N_{\tau}^{(\delta)},s)}\right)^{-2}\ 
\end{align}
many measurements for the estimation of $f(\bm{\theta}(\tau))$ for each set of inputs $\bm{\theta}(\tau)$. 
\end{thm}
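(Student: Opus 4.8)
\section*{Proof proposal for Theorem~\ref{thm:quantumODEresources}}

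The plan is to read the target-error identity~\eqref{eq:bound on target error for quantum resource thm}, which is exactly the upper bound of Theorem~\ref{thm:quantumODEerror}, as a fixed error budget that must be split between the two independent sources: the truncation term proportional to $(T/N_{\tau})^{p+1}$ and the shot-noise term proportional to $\delta=\Sigma/\sqrt{N_r}$. Both are multiplied by the common prefactor $\big((1+F)^{N_{\tau}}-1\big)/F$ with $F=F(N_{\tau},s)$ from Eq.~\eqref{eq:def of function F}. The first step is to make this prefactor explicit in the regime of interest, i.e. large $N_{\tau}$ (small $\Delta\tau=T/N_{\tau}$). Expanding $F=\tfrac{b_{max}}{a_{max}}\big((1+L_{fy}a_{max}T/N_{\tau})^{s}-1\big)$ to leading order gives $F\approx b_{max}sL_{fy}T/N_{\tau}$, hence $N_{\tau}F\approx b_{max}sL_{fy}T$ and $(1+F)^{N_{\tau}}\approx e^{b_{max}sL_{fy}T}$, so the prefactor reduces to $N_{\tau}\big(e^{b_{max}sL_{fy}T}-1\big)/(b_{max}sL_{fy}T)$. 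Feeding this back into the two terms shows the truncation contribution scales as $N_{\tau}^{-p}$ and the shot-noise contribution as $\delta$, with all constants carried along.

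The core is then a single-variable cost minimization, directly analogous to Theorem~\ref{thm:classicalODEresources} (where all budget goes to truncation, $x=\epsilon_{target}^{(0)}$, and no extra factor appears). I would introduce a split parameter $x\in(0,\epsilon_{target}^{(\delta)})$ for the share of budget assigned to truncation, leaving $\epsilon_{target}^{(\delta)}-x$ for shot noise. Setting the simplified truncation term equal to $x$ solves for $N_{\tau}=TL_{f\tau}\big((e^{b_{max}sL_{fy}T}-1)KM/(x\,b_{max}sL_{fy})\big)^{1/p}$, while setting the shot-noise term equal to $\epsilon_{target}^{(\delta)}-x$ solves for $\delta$ and hence, via $N_r=\Sigma^2/\delta^2$, for $N_r\propto(\epsilon_{target}^{(\delta)}-x)^{-2}$. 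The cost~\eqref{eq:cost function with shot noise} becomes, up to a positive $x$-independent constant, $C(x)\propto x^{-1/p}(\epsilon_{target}^{(\delta)}-x)^{-2}$. Differentiating and clearing the common factor $x^{-1/p-1}(\epsilon_{target}^{(\delta)}-x)^{-3}$ yields the stationarity condition $2px=\epsilon_{target}^{(\delta)}-x$, i.e. the optimal allocation $x=\epsilon_{target}^{(\delta)}/(2p+1)$; this is precisely the origin of the factor $(2p+1)$. Substituting this $x$ into the expression for $N_{\tau}$ produces the stated $N_{\tau}^{(\delta)}$.

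For the measurement count I would avoid the asymptotic form and instead invert the \emph{exact} relation~\eqref{eq:bound on target error for quantum resource thm} at $N_{\tau}=N_{\tau}^{(\delta)}$. Isolating the shot-noise piece gives $3\delta/L_{fy}=\epsilon_{target}^{(\delta)}/\big((1+F)^{N_{\tau}}-1\big)-(1/F)(T/N_{\tau})^{p+1}KL_{f\tau}^{p}M$, and inserting $\delta=\Sigma/\sqrt{N_r}$ and rearranging reproduces the displayed $N_r^{(\delta)}$ verbatim. I expect the main obstacle to be the asymptotic control of the prefactor: the clean form of $N_{\tau}^{(\delta)}$ rests on replacing $\big((1+F)^{N_{\tau}}-1\big)/F$ by its leading value, so one must argue that the neglected $\mathcal{O}(\Delta\tau^{p+2})$ terms and the higher-order-in-$F$ corrections do not affect the result to the claimed order. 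A secondary point is feasibility: one checks that $x=\epsilon_{target}^{(\delta)}/(2p+1)$ lies strictly in $(0,\epsilon_{target}^{(\delta)})$ for $p\geq1$, so that $N_{\tau}^{(\delta)}$ is positive and the bracket defining $N_r^{(\delta)}$ is strictly positive.
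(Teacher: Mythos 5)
Your proposal is correct and reproduces both displayed formulas, but it organizes the optimization along a genuinely different route than the paper's proof in Appendix~\ref{appendix:proof shot noise resources}. The paper keeps the budget identity~\eqref{eq:bound on target error for quantum resource thm} exact as long as possible: it solves it for $N_r$ (this is verbatim your final step, so your derivation of $N_r^{(\delta)}$ coincides with the paper's first one), substitutes into the cost $C=sN_\tau N_r$ to obtain a closed but unwieldy function of $N_\tau$ alone, and sets its derivative to zero; the resulting nonlinear stationarity equation is then pruned using the same asymptotics you invoke, namely $F(N_\tau,s)\approx b_{max}sL_{fy}T/N_\tau$ and $(1+F)^{N_\tau}\approx e^{b_{max}sL_{fy}T}$ in the regime $L_{fy}a_{max}T\ll N_\tau$, until the closed form with the $(2p+1)$ factor drops out. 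You instead apply the asymptotics \emph{first}, observe that they decouple the two error channels (the truncation term depends only on $N_\tau$, the shot-noise term $\bigl((1+F)^{N_\tau}-1\bigr)3\delta/L_{fy}\approx(e^{b_{max}sL_{fy}T}-1)3\delta/L_{fy}$ only on $\delta$), and re-parametrize by the budget share $x$, reducing everything to minimizing $x^{-1/p}(\epsilon_{target}^{(\delta)}-x)^{-2}$ with stationary point $x=\epsilon_{target}^{(\delta)}/(2p+1)$. Since $x\propto N_\tau^{-p}$ under the approximation, this is a change of variables in the same one-dimensional minimization, but it buys considerable transparency: the origin of $(2p+1)$ as a marginal-cost balance between truncation and shot noise is explicit in two lines, whereas in the paper it emerges only after discarding terms in a long derivative equation. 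What the paper's ordering buys in return is that the exact stationarity condition is available before any approximation and can, as the authors note, be solved numerically when the asymptotic regime is questionable. Your stated caveat about controlling the neglected corrections applies equally to the paper --- neither argument bounds the $\mathcal{O}(\Theta^2/N_\tau^2)$ terms, so in both cases the closed form for $N_\tau^{(\delta)}$ is an approximate minimizer rather than a proven minimum --- and your feasibility check that $x$ lies strictly in $(0,\epsilon_{target}^{(\delta)})$, making the bracket in $N_r^{(\delta)}$ strictly positive, is a point the paper leaves implicit.
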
 
The proofs of  Theorems~\ref{thm:quantumODEerror} and~\ref{thm:quantumODEresources} are given in Appendices~\ref{Ax:Analysis with shot noise} and~\ref{appendix:proof shot noise resources}, respectively.
\begin{corollary}\label{cor:quantumODEcost}
   For a target error $\epsilon_{target}^{(\delta)}$, the minimal value of the cost function \eqref{eq:cost function with shot noise} is
\begin{align}
     C(N_{\tau}^{(\delta)},N_{r}^{(\delta)},s,p)&= sN_{\tau}^{(\delta)}(s,p)N_{r}^{(\delta)}(s,p),
 \end{align}
 where $s$ is the number of stages, $p$ is the order,  $N_{\tau}^{(\delta)}$ and $N_{r}^{(\delta)}$ are the minimal number of time steps of the RKM; and the minimal number samples in calculating $\hat{f}(\bm{\theta}(\tau))$ for each set of inputs according to Theorem~\ref{thm:quantumODEresources}. 

\end{corollary}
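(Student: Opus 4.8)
The plan is to invert the error bound of Theorem~\ref{thm:quantumODEerror}, treating the defining identity for $\epsilon_{target}^{(\delta)}$ in Eq.~\eqref{eq:bound on target error for quantum resource thm} as an equation to be solved for the two free resources $N_\tau$ and $\delta=\Sigma/\sqrt{N_r}$. The right-hand side is a sum of a \emph{truncation} contribution $\frac{(1+F)^{N_\tau}-1}{F}\left(\frac{T}{N_\tau}\right)^{p+1}KL_{f\tau}^pM$ and a \emph{shot-noise} contribution $\frac{3\delta}{L_{fy}}\left((1+F)^{N_\tau}-1\right)$, sharing the common amplification prefactor $\frac{(1+F)^{N_\tau}-1}{F}$. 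Since the two contributions depend on disjoint resources ($N_\tau$ versus $N_r$), I would allocate a fraction $\alpha\in(0,1)$ of the budget to truncation and $1-\alpha$ to shot noise, solve each sub-equation separately, and finally pick $\alpha$ so as to minimize the cost in Eq.~\eqref{eq:cost function with shot noise}; Corollary~\ref{cor:quantumODEcost} is then immediate from the definition $C=sN_\tau N_r$.

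First, for the truncation half I would reuse verbatim the asymptotic reduction already carried out for Theorem~\ref{thm:classicalODEresources}: Taylor-expanding $F(N_\tau,s)$ in Eq.~\eqref{eq:def of function F} to leading order gives $F\approx b_{max}sL_{fy}T/N_\tau$ and hence $(1+F)^{N_\tau}\approx e^{b_{max}sL_{fy}T}$, so the prefactor behaves like $N_\tau\left(e^{b_{max}sL_{fy}T}-1\right)/(b_{max}sL_{fy}T)$. Setting the truncation contribution equal to $\alpha\,\epsilon_{target}^{(\delta)}$ then yields a relation $\propto N_\tau^{-p}$, which inverts to $N_\tau=TL_{f\tau}\left(KM(e^{b_{max}sL_{fy}T}-1)/(\alpha\,\epsilon_{target}^{(\delta)}b_{max}sL_{fy})\right)^{1/p}$, i.e.\ exactly the noiseless expression of Theorem~\ref{thm:classicalODEresources} with $\epsilon_{target}^{(0)}$ replaced by $\alpha\,\epsilon_{target}^{(\delta)}$. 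For the shot-noise half, the same asymptotics turn the contribution into $\frac{3\delta}{L_{fy}}(e^{b_{max}sL_{fy}T}-1)$; equating it to $(1-\alpha)\epsilon_{target}^{(\delta)}$ and substituting $\delta=\Sigma/\sqrt{N_r}$ gives $N_r\propto(1-\alpha)^{-2}$.

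With $N_\tau\propto\alpha^{-1/p}$ and $N_r\propto(1-\alpha)^{-2}$, the cost is $C=sN_\tau N_r\propto\alpha^{-1/p}(1-\alpha)^{-2}$. The key step is minimizing this over $\alpha$: differentiating and clearing common factors reduces the stationarity condition to $-\frac{1}{p}(1-\alpha)+2\alpha=0$, whose unique solution in $(0,1)$ is $\alpha=\frac{1}{2p+1}$. Substituting this value into the truncation inversion replaces $\alpha\,\epsilon_{target}^{(\delta)}$ by $\epsilon_{target}^{(\delta)}/(2p+1)$ and produces exactly the claimed $N_\tau^{(\delta)}$, the factor $(2p+1)$ being precisely the reciprocal of the optimal truncation fraction.

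Finally, to obtain the stated $N_r^{(\delta)}$ I would \emph{not} rely on the approximate shot-noise allocation but instead fix $N_\tau=N_\tau^{(\delta)}$ and invert the \emph{exact} identity Eq.~\eqref{eq:bound on target error for quantum resource thm} for $\delta$: isolating the shot-noise term gives $\frac{3\delta}{L_{fy}}=\frac{\epsilon_{target}^{(\delta)}}{(1+F)^{N_\tau}-1}-\left(\frac{T}{N_\tau}\right)^{p+1}\frac{KL_{f\tau}^pM}{F}$, and then $\delta=\Sigma/\sqrt{N_r}$ yields $N_r^{(\delta)}=\frac{9\Sigma^2}{L_{fy}^2}(\,\cdot\,)^{-2}$ with the bracket exactly as written. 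The main obstacle is twofold: carrying the optimization cleanly and confirming $\alpha=\frac{1}{2p+1}$ is a genuine minimum (e.g.\ via the convexity of $\alpha\mapsto-\frac1p\ln\alpha-2\ln(1-\alpha)$, whose second derivative is positive), and checking feasibility, namely that the bracket defining $N_r^{(\delta)}$ remains positive. The latter amounts to verifying that the truncation error alone, evaluated at $N_\tau=N_\tau^{(\delta)}$, is strictly below $\epsilon_{target}^{(\delta)}$; this is guaranteed because the $(2p+1)$ factor forces the truncation contribution down to a fraction $\tfrac{1}{2p+1}<1$ of the budget, leaving a strictly positive remainder to be absorbed by the shot noise.
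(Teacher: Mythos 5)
Your proposal is correct and lands on exactly the stated $N_{\tau}^{(\delta)}$, $N_{r}^{(\delta)}$, and cost, but it organizes the optimization differently from the paper's proof in Appendix~\ref{appendix:proof shot noise resources}. The paper first inverts Eq.~\eqref{eq:bound on target error for quantum resource thm} \emph{exactly} for $N_r$ as a function of $N_\tau$, substitutes into $C=sN_\tau N_r$ to obtain a single-variable cost (Eq.~\eqref{eq:costfunction}), sets its derivative with respect to $N_\tau$ to zero, and only then applies the asymptotics $F\approx b_{max}sL_{fy}T/N_\tau$ and $(1+F)^{N_\tau}\approx e^{b_{max}sL_{fy}T}$, discarding terms using $\Theta\ll N_\tau$ to make the messy stationarity equation solvable; the factor $2p+1$ emerges there somewhat opaquely from the surviving terms. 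You instead approximate \emph{first}, which decouples the constraint into a truncation part $\propto N_\tau^{-p}$ and a shot-noise part $\propto N_r^{-1/2}$ (correctly noting the cancellation of $F$ in the shot-noise contribution, which is what makes it $N_\tau$-independent in this regime), then split the error budget by $\alpha$ and minimize $\alpha^{-1/p}(1-\alpha)^{-2}$ in closed form, recovering $\alpha=\tfrac{1}{2p+1}$. The two routes commute approximation and optimization in opposite orders and are equally (non-)rigorous on that score, but yours buys transparency: it exhibits $2p+1$ as the reciprocal of the optimal truncation fraction, it verifies second-order optimality via convexity (a check the paper omits), and your positivity check on the bracket defining $N_{r}^{(\delta)}$ is a sharper version of the paper's passing remark that $\epsilon_{target}^{(\delta)}\neq\epsilon_{target}^{(0)}$. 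Your final step --- fixing $N_\tau=N_\tau^{(\delta)}$ and inverting the exact identity for $N_r$ --- coincides with the paper's, and the corollary itself is then, as you say, immediate from $C=sN_{\tau}^{(\delta)}N_{r}^{(\delta)}$.
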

In the following section, we further analyze the latter error and resource estimates. 
By applying it to the variational quantum algorithm discussed in Sec.~\ref{sec: Variational Quantum Simulator preliminaries}, we show
by numerical estimations in Sec.~\ref{sec: numerics}, how they evaluate for specific DEs such as given in Eq.~\eqref{eq: parameter initial value problem}. 
We shall see that with these results, we can choose a Runge-Kutta method such that the required total number of circuit evaluations will be minimal.

\section{Analysis of the Variational Quantum Algorithm}\label{sec:Introduce Quantum Circuits, McLachlan Principle, Shot noise, Lipschitz Number}

In this section, we analyze the total error as defined in Eq.~\eqref{eq:total error definition} of the algorithm provided in Sec.~\ref{sec: Variational Quantum Simulator preliminaries}. We show its relation to Eq.~\eqref{eq: initial value problem noisy}  and derive the resource requirements of the algorithm for a given target error, based on the estimates in Sec.~\ref{sec: Analysis with Shot Noise Scaling}. Afterwards, we continue with an estimation of the shot noise and further quantities based on a toy model we introduce. These further estimates are necessary for the numerical analysis of the following Sec.~\ref{sec: numerics}.
 
\subsection{Error and Resource Estimate}\label{sec: Variational Quantum Simulator}
    
We are interested in estimating the total error in Eq.~\eqref{eq:total error definition}.
Using the triangle inequality, we get
    \begin{align}\label{1353}
        \epsilon_{total} \leq&\|{\psi}(T)-{\phi}({\bm{\theta}}(T))\|_{1}+\|{\phi}({\bm{\theta}}(T))-{\phi}(\hat{\bm{\theta}}_{N_{\tau}})\|_{1}= \epsilon_{PQC}+\epsilon_{par}      \ .
    \end{align}
    Here, we define the distance between the trial functions with the  different input parameters  as follows:
    \begin{align}
       \epsilon_{par}:= \|{\phi}(\bm{\theta}(T))-{\phi}(\hat{\bm{\theta}}_{N_{\tau}})\|_1.
    \end{align}
    The representation error coming from approximating the function ${\psi}(T)$ with the trial function ${\phi}({\bm{\theta}}(T))$ is defined as
    \begin{align}
       \epsilon_{PQC}:=   \|{\psi}(T)-{\phi}({\bm{\theta}}(T))\|_1.
    \end{align}
 Furthermore, using the multi-dimensional mean-value theorem
    \footnote{Multivariate Mean Value Theorem: for $x,y\in R^n$ 
    \begin{align}
        \|f(x)-f(y)\|_q\leq \sup\limits_{z\in[x,y]}\|f'(z)\|_{(q,p)}\|x-y\|_p,
    \end{align}
    Where $z\in[x,y]$  denotes a vector $z$
     contained in the set of points between $x,y\in R^n$ and $\|f'(z)\|_{(q,p)}$
     is the $L_{(p,q)}$  norm of the derivative matrix of $f$ evaluated at $z$. }, 
     the second term of Eq.~\eqref{1353} can be bounded by
    \begin{align}
       \epsilon_{par}&
       \leq 
       \sup_{\bm{\theta_0}\in \Xi}\|\nabla_{\bm{\theta}}{\phi}(\bm{\theta_0}(T))\|_{(1,2)}\epsilon_{ODE}^{(\delta)},
    \end{align}
    where
    $\Xi=\{w\hat{\bm{\theta}}_{N_{\tau}}+(1-w){\bm{\theta}}(T)\vert  w\in [0,1]\}$. 
    The norm of the Jacobian is defined as
    \begin{align}\label{eq:jacobian definition}
        \sup_{\bm{\theta_0}\in \Xi}\|\nabla_{\bm{\theta}}{\phi}(\bm{\theta_0}(T))\|_{(1,2)}:=&\sup_{\bm{\theta}_0\in \Xi}\sup_{\bm{\theta}^*(T)\in\Xi,\bm{\theta}^*(T)\neq 0}\frac{\|\nabla_{\bm{\theta}}{\phi}(\bm{\theta_0}(T))\cdot\bm{\theta}^*(T)\|_1}{\|\bm{\theta}^*(T)\|_2}\ .
    \end{align}
The following lemma shows us an upper bound to this expression.
\begin{lemma}
\label{lemma: jacobian bound}
        The norm of the Jacobian with the chosen circuit (Eq.~\eqref{eq:circuit on total state} and  Eq.~\eqref{eq:circuit on one qubit}) is bounded by
        \begin{align}
            \sup_{\bm{\theta}_0\in \Xi}\sup_{\bm{\theta}^*(T)\in\Xi,\bm{\theta}^*(T)\neq 0}\frac{\|\nabla_{\bm{\theta}}{\phi}(\bm{\theta_0}(T))\cdot\bm{\theta}^*(T)\|_1}{\|\bm{\theta}^*(T)\|_2}\leq \sup_{\bm{\theta}^*(T)\in\Xi, \bm{\theta}^*(T)\neq 0}\frac{\sum_{k}\left(\sum_{j}2|f_{k,j}|\right)|\theta^*_{k}(T)|}{\|\bm{\theta}^*(T)\|_2}\ .
        \end{align}
    \end{lemma}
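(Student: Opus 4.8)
The plan is to compute the directional derivative of the density-matrix-valued map $\bm{\theta}\mapsto\phi(\bm{\theta})=\ket{\phi(\bm{\theta})}\bra{\phi(\bm{\theta})}$ explicitly and then bound its trace norm term by term. First I would apply the product rule to write the action of the Jacobian on a direction $\bm{\theta}^*$ as
\begin{align}
\nabla_{\bm{\theta}}\phi(\bm{\theta}_0)\cdot\bm{\theta}^*=\sum_k\theta^*_k\left(\ket{\partial_k\phi}\bra{\phi}+\ket{\phi}\bra{\partial_k\phi}\right),
\end{align}
where $\ket{\partial_k\phi}:=\partial\ket{\phi}/\partial\theta_k$ is exactly the state computed in Eq.~\eqref{eq: Ansatz derivative}. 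Applying the triangle inequality for the trace norm across the sum over $k$ reduces the problem to bounding $\|\ket{\partial_k\phi}\bra{\phi}+\ket{\phi}\bra{\partial_k\phi}\|_1$ for each $k$ individually.

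Next, each summand is a Hermitian operator of the form $\ket{a}\bra{b}+\ket{b}\bra{a}$, so by a further triangle inequality together with the fact that a rank-one operator $\ket{a}\bra{b}$ has trace norm exactly $\|a\|_2\|b\|_2$ (its single nonzero singular value), I would estimate $\|\ket{\partial_k\phi}\bra{\phi}+\ket{\phi}\bra{\partial_k\phi}\|_1\leq 2\,\|\ket{\partial_k\phi}\|_2\,\|\ket{\phi}\|_2$. Since $R(\tau,\bm{\theta})$ is unitary, $\ket{\phi}$ is normalized and $\|\ket{\phi}\|_2=1$. For the remaining factor I invoke Eq.~\eqref{eq: Ansatz derivative} and the triangle inequality, $\|\ket{\partial_k\phi}\|_2\leq\sum_j|f_{k,j}|\,\|R_{k,j}\ket{\overline{0}}\|_2$, and observe that each $R_{k,j}$ is a product of the unitary layers with one unitary Pauli string $\sigma_{k,j}$ inserted, hence itself unitary, so that $\|R_{k,j}\ket{\overline{0}}\|_2=1$. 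This yields $\|\ket{\partial_k\phi}\|_2\leq\sum_j|f_{k,j}|$ and therefore $\|\ket{\partial_k\phi}\bra{\phi}+\ket{\phi}\bra{\partial_k\phi}\|_1\leq 2\sum_j|f_{k,j}|$.

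Combining the two steps gives $\|\nabla_{\bm{\theta}}\phi(\bm{\theta}_0)\cdot\bm{\theta}^*\|_1\leq\sum_k\left(\sum_j 2|f_{k,j}|\right)|\theta^*_k|$. Crucially, this bound is uniform in $\bm{\theta}_0$, so the supremum over $\bm{\theta}_0\in\Theta$ on the left drops out harmlessly; dividing by $\|\bm{\theta}^*\|_2$ and taking the supremum over directions $\bm{\theta}^*$ then produces exactly the claimed right-hand side. I expect the only real subtlety to be the careful bookkeeping around norms: confirming that the $f_{k,j}$ enter the layers in Eq.~\eqref{eq:circuit on one qubit} so that every $R_k$ (and thus each $R_{k,j}$) is unitary and $\ket{\phi}$ remains normalized, and correctly invoking the rank-one trace-norm identity rather than a weaker operator-norm estimate. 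Everything else is a routine chain of triangle inequalities.
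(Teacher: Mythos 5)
Your proposal is correct and follows essentially the same route as the paper's proof: a product-rule expansion of $\nabla_{\bm{\theta}}\phi\cdot\bm{\theta}^*$ into the rank-one terms $\ket{\partial_k\phi}\bra{\phi}+\ket{\phi}\bra{\partial_k\phi}$, triangle inequalities over $k$ and $j$, and the trace-norm identity for rank-one operators combined with unitarity of the $R_{k,j}$ so that all state norms equal one. The only cosmetic difference is that you apply the triangle inequality over $j$ at the vector level inside $\|\ket{\partial_k\phi}\|_2$, whereas the paper pushes the sum over $j$ through the trace norm first and bounds each rank-one operator term directly; the resulting bound is identical.
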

We provide the proof in Appendix~\ref{appendix: error bound theorems}. 

According to this lemma, Eq.~\eqref{eq:jacobian definition} is upper bounded by
\begin{align}
    S:= \left(\sup_{\bm{\theta}^*(T)\in\Xi,\bm{\theta}^*(T)\neq 0}\frac{\sum\limits_{k=1}^{N_V}\left(\sum\limits_{j=1}^{N_d}2|f_{k,j}|\right)|\theta^*_{k}(T)|}{\|\bm{\theta}^*(T)\|_1}\right)\ .
\end{align}
Thus, the total error is bounded by
\begin{align}\label{eq:total_error_easy_form}
         \epsilon_{total}&\leq \epsilon_{PQC}+S  \epsilon_{ODE}^{(\delta)} .
    \end{align}
As described in Sec.~\ref{subsec: Total error and resources}, there are errors which we assume to be negligible. In particular, we are disregarding the representation error $\epsilon_{PQC}$. We are thus left with estimating $S$ for the specific Ansatz at hand (see Sec.~\ref{sec: numerics}) and the error in calculating the parameters, $\epsilon_{ODE}^{(\delta)}$, which we analyzed in Sec.~\ref{sec: Analysis with Shot Noise Scaling}.

The resource that we would like to minimize for the application of the variational algorithm is the total number of circuit evaluations $N_{circ}$ which is needed for running the algorithm. In comparison to the cost that was estimated in Sec.~\ref{sec: Analysis with Shot Noise Scaling}, we have to add an additional factor in order to get $N_{circ}$. The derivation of this factor follows.

Denote with $N_A$ and $N_C$ the numbers of circuits that are needed to calculate the matrix $A$ and the vector $C$ for specific input parameters. Each matrix $A$ or vector $C$ consists of $N_V^2$ or $N_V$ elements, respectively where $N_V$ is the number of parameters $\bm{\theta}$. Each element of $A$ and $C$ is calculated by the evaluation of $N_d^2$ or $N_dN$ different circuits, respectively. Here $N$ is the number of terms in the Hamiltonian and $N_d$ is an upper bound on the number of Pauli strings in the unitary operator $R_k$ of the variational circuits, as defined in Eq.~\eqref{eq:circuit on one qubit}. Thus, they are bounded above by $N_A\leq\overline{N_A} =N_V^2N_d^2$ and $N_C\leq\overline{N_C} = N_VN_dN$.
    Let us assume that these upper bounds are  always reached. We denote by $N_{\tau}^{(\delta)}$ the minimum number of time steps in the ODE solving algorithm and by $N_{r}^{(\delta)}$ the minimum number of times, each stage value $f\left(\bm{\theta}(\tau)\right)$ is measured; both have been estimated in Theorem~\ref{thm:quantumODEresources}. 
    By $s$, we denote the number of stages that are calculated at each time step of the RKM. 
    Therefore, the number of circuit evaluation $N_{circ}$ needed to calculate $\Vec{\theta}(T)$ is equal to
    \begin{align}\label{eq: cost function with nvfactor}
        N_{circ}&= N_{\tau}^{(\delta)}s N_{r}^{(\delta)}(\overline{N_A}+\overline{N_C})\\
        &= N_{\tau}^{(\delta)}s N_{r}^{(\delta)}(N_V^2N_d^2+N_VN_dN)\\
           &= N_{\tau}^{(\delta)}s N_{r}^{(\delta)}N_VN_d(N_VN_d+N)\ .
    \end{align}
One can see that the number of circuit evaluations in the whole circuit scales quadratically in the number of circuit parameters $N_V$.
    
Note that, if instead of counting the total number of circuit evaluations we care about the total run time of the algorithm, and allow parallel evaluations of $N_V^2N_d^2+N_VN_dN$ circuits, the cost comes down to $ N_{\tau}^{(\delta)}s N_{r}^{(\delta)}$.

In the next subsection, we will further estimate the dependence of $N_{r}^{(\delta)}$ on the condition number of $A$, and give a comprehensive overview of the resource estimate in Eqs.~\eqref{eq:comprehensiveresourcecountfirstequation}-~\eqref{eq:comprehensiveresourcecountlastequation}.

\subsection{Estimation of the Shot Noise}
\label{subsec: Estimation of the Shot noise}
In Sec.~\ref{sec: Analysis with Shot Noise Scaling}, we presented error and resource estimates for RKMs with an error in the evaluations of $f(\tau,\bm{\theta(\tau)})$ stemming from shot noise. We derive  our estimates on the basis of the  bound in Eq.~\eqref{eq: error in evaluating f}.  Since we would like to incorporate the results of Sec.~\ref{sec: Analysis with Shot Noise Scaling} into analyzes of  the required  resources for the variational quantum algorithm, we are studying the shot noise inherent to this algorithm in this section. In the end of this section, we will give an estimate of $\Sigma$ as defined in Eq.~\eqref{eq: error in evaluating f}.
    
Each term $\left(f^*_{k,i}f_{l,j}\bra{\overline{0}}R^{\dagger}_{k,i}R_{l,j}\ket{\overline{0}}+h.c.\right)$ or $
\left(f^*_{k,i}\lambda_m\bra{\overline{0}}R_{k,i}^{\dagger}\sigma_mR\ket{\overline{0}}+h.c.\right)$  in the matrix $A$ and the vector $C$ as given in Eqs.~\eqref{eq: A evaluation} and~\eqref{eq: C evaluation} can be written in the following form:
    \begin{align}
    \label{eq: general form A and C evaluation}
        q=    a \mathrm{Re}\left(e^{i\zeta}\bra{0}^{\otimes n} U\ket{0}^{\otimes n}\right),
    \end{align}
    where the amplitude $a$ and the phase $\zeta$ are determined by either $f^*_{k,i}f_{l,j}$ or $f^*_{k,i}\lambda_m$. The unitary operator $U$ is equal to either $R^{\dagger}_{k,i}R_{l,j}$ or $R_{k,i}^{\dagger}\sigma_mR$.  
    
The terms $q$ can be obtained by the parameterized quantum circuits shown in  Fig.~\ref{fig:circuit A}. Note that these circuits are different from the circuit which prepares the final state $\ket{\phi(\hat{\bm{\theta}}_{N_{\tau}})}$.
An ancillary qubit of the circuit is initialized as $(\ket{0}+e^{i\zeta}\ket{1})/\sqrt{2}$, while the remaining qubits are initialized in the state $\ket{0}^{\otimes n}$. A projective measurement of the ancillary qubit in the $\{\ket{+},\ket{-}\}$ basis has following probability  to find the qubit in the state $\ket{+}$:
    \begin{align}
      \hat{P}_{+}=\frac{\mathrm{Re} \big(e^{i\zeta}\bra{0}^{\otimes n}U\ket{0}^{\otimes n}\big)+1}{2}.
    \end{align} 
To estimate the probability $\hat{P}_{+}$ and subsequently the terms $q$, we need to implement and measure the circuit $N_r$ times.
Let us assume that the measurements correspond to independent Bernoulli distributed random variables, which have the variance
    $    \sigma^2(\hat{P}_{+})=\hat{P}_{+}(1-\hat{P}_{+})/N_r$.
    
Thus, the variances of the estimates of the elements of Eq.~\eqref{eq: A evaluation} are scaling as
    \begin{align}
      \sigma^2\left(\mathrm{Re} \left(e^{i\zeta}\bra{0}^{\otimes n}U\ket{0}^{\otimes n}\right)\right) = 4 \sigma^2(\hat{P}_{+})=\mathcal{O}(\frac{1}{N_r}).
    \end{align}
    After estimating $\mathrm{Re} \big(e^{i\zeta}\bra{0}^{\otimes n}U\ket{0}^{\otimes n}\big)$, the amplitudes of $f^*_{k,i}f_{l,j}$ or $f^*_{k,i}\lambda_m$ have to be calculated classically and multiplied. 
    Since they are not depending on each other, all of the circuits with the same input parameters $\bm{\theta}(\tau)$ can be run in parallel. 
    \par We denote the realization of the matrices $A$ and $C$ in  Eq.~\eqref{2144} as $\hat{A}$ and $\hat{C}$, respectively.

    The variances of the estimates $\hat{A}_{k,l}(\bm{\theta}(\tau))$ and $\hat{C}_k(\bm{\theta}(\tau))$ are scaling as
    \begin{align}
        \sigma^2\left(\hat{A}_{k,l}(\tau)\right)\mathcal{O}\left(\frac{N_d^2}{N_r}\right),\quad   \sigma^2\left(\hat{C}_{k}(\tau)\right)\mathcal{O}\left(\frac{N_d N}{N_r}\right).
    \end{align}
    Then by Chebyshev's inequality, we can conclude that the probability of estimation error is bounded by
    \begin{align}
        &P\left(|A_{k,l}-\hat{A}_{k,l}|\geq b\right)\leq 
        \frac{N_d^2}{N_rb^2},\quad P\left(|C_{k}-\hat{C}_{k}|\geq b\right)\leq 
        \frac{N_d N}{N_r b^2}.
    \end{align}
    We can conclude that we face up with estimation errors  arise due to 
    shot noise and the representation errors from calculating the matrix elements on the quantum circuits. 
The following lemma holds:
    \begin{lemma}
    \label{lemma: shot noise_2}
    With the probability of at least $1-\eta$, $\eta\in[0,1]$, the following bounds hold:
    \begin{align}
          \|A-\hat{A}\|&\leq\frac{\|\{\sigma_{k,l}\}_{k,l=1}^{N_V}\|}{\sqrt{N_r  \eta}},\\\nonumber
     \|C-\hat{C}\|&\leq \frac{\|\{\sigma_{k}\}_{k=1}^{N_V}\|}{\sqrt{N_r  \eta}},
    \end{align}
    where we used the notations
    \begin{align}\label{14450}
            \sigma_{k,l}=\sqrt{\sum_{i,j=1}^{N_d}|f_{k,i}^*f_{l,j}|^2},\quad   \sigma_{k}=\sqrt{\sum_{i=1}^{N_d}\sum_{m=1}^{N}|f_{i,k}^*\lambda_{m}|^2}.
        \end{align}
    \end{lemma}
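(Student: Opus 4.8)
The plan is to treat $A-\hat A$ (and likewise $C-\hat C$) as a random array whose entries are unbiased estimators assembled from $N_r$ independent circuit shots, to bound the expected squared Hilbert–Schmidt norm of this array via a direct second–moment computation, and then to convert this bound into a high–probability statement with Markov's inequality applied to the nonnegative random variable $\|A-\hat A\|^2$. I would first observe that it suffices to control the Hilbert–Schmidt (Frobenius) norm: the spectral norm is dominated by it, and the right–hand side $\|\{\sigma_{k,l}\}_{k,l=1}^{N_V}\|$ is precisely the Frobenius norm $\sqrt{\sum_{k,l}\sigma_{k,l}^2}$ of the coefficient array, which is exactly the object the second–moment method produces. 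For the vector $C$ the relevant norm is just the Euclidean vector norm, so no such reduction is even needed there.

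Next I would carry out the elementwise variance computation. Each entry $A_{k,l}=\sum_{i,j}\bigl(f_{k,i}^*f_{l,j}\bra{\overline{0}}R_{k,i}^\dagger R_{l,j}\ket{\overline{0}}+h.c.\bigr)$ is estimated by replacing every factor $\mathrm{Re}\bigl(e^{i\zeta}\bra{0}U\ket{0}\bigr)$ by its $\hat{P}_+$–based sample estimator, whose variance was shown above to be $\mathcal O(1/N_r)$. Because the circuits attached to distinct index pairs $(i,j)$ are measured independently, their variances add with no covariance cross–terms, and the weighting by $|f_{k,i}^*f_{l,j}|^2$ gives $\mathrm{Var}(\hat A_{k,l})\le \sigma_{k,l}^2/N_r$ with $\sigma_{k,l}^2=\sum_{i,j}|f_{k,i}^*f_{l,j}|^2$. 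Since the estimator is unbiased, the mean–squared error coincides with the variance, i.e. $\mathbb E|A_{k,l}-\hat A_{k,l}|^2=\mathrm{Var}(\hat A_{k,l})$.

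Summing over all entries then yields $\mathbb E\,\|A-\hat A\|^2=\sum_{k,l}\mathbb E|A_{k,l}-\hat A_{k,l}|^2\le \frac{1}{N_r}\sum_{k,l}\sigma_{k,l}^2=\frac{\|\{\sigma_{k,l}\}\|^2}{N_r}$. I would then apply Markov's inequality to $\|A-\hat A\|^2\ge 0$, namely $P(\|A-\hat A\|^2\ge t)\le \mathbb E\,\|A-\hat A\|^2/t$, and choose $t=\|\{\sigma_{k,l}\}\|^2/(N_r\eta)$ so that the right–hand side equals $\eta$; hence with probability at least $1-\eta$ one gets $\|A-\hat A\|\le \|\{\sigma_{k,l}\}\|/\sqrt{N_r\eta}$. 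The bound for $C$ follows by the verbatim argument, with the double index $(k,l)$ replaced by the single index $k$, the inner sum running over $(i,m)$ rather than $(i,j)$, the Hamiltonian strings $\sigma_m$ entering through $\lambda_m$, and $\sigma_{k,l}$ replaced by $\sigma_k=\sqrt{\sum_{i,m}|f_{k,i}^*\lambda_m|^2}$.

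The \emph{main obstacle} I anticipate is bookkeeping rather than conceptual. One must justify the unbiasedness of the per–circuit estimators—so that mean–squared error equals variance and Markov applies cleanly to $\|A-\hat A\|^2$—and their genuine independence across distinct Pauli–string labels, so the variances simply add. One must also be careful to fold the constant factors arising from the $2\,\mathrm{Re}(\cdot)$ structure of the entries and from the $2\hat{P}_+-1$ form of the estimator into the stated bound, or to absorb them into the definition of $\sigma_{k,l}$, in order to match the clean constant in the lemma. A secondary point is to confirm the intended reading of the norms: the left–hand side is controlled by the Hilbert–Schmidt norm that the second–moment method naturally produces, and the right–hand side is the Hilbert–Schmidt norm of the array $(\sigma_{k,l})$ (the Euclidean norm of the vector $(\sigma_k)$ in the case of $C$).
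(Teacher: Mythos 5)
Your proposal is correct and takes essentially the same route as the paper: the ``multi-dimensional Chebyshev inequality'' the paper invokes is exactly your Markov bound applied to the nonnegative variable $\|A-\hat{A}\|^2$, and the paper likewise bounds each entry's variance by treating the single-circuit estimators as independent, unit-variance random variables weighted by $|f_{k,i}^*f_{l,j}|^2$ (respectively $|f_{k,i}^*\lambda_m|^2$), absorbing the constant factors from the $2\,\mathrm{Re}(\cdot)$ structure into $\sigma_{k,l}$ just as you anticipated in your bookkeeping remark. No substantive changes are needed.
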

    For a detailed proof, see Appendix \ref{appendix:Shot noise estimates}.

This lemma provides us with upper bounds on the errors on $A$ and $C$ that arise from shot noise.
These errors translate into an error in $f$ as defined in Eq.~\eqref{eq: parameter initial value problem}, which can be estimated with the following lemma:

\begin{lemma}\label{lemma:liubovs sensitivity}
Consider the linear equation
\begin{align}
    Af=C\ ,
\end{align}
    where $A$ is a a non-singular $N_V\times N_V$-dimensional matrix and $C$ an $N_V$-dimensional vector.
    Let us introduce disturbances in $A$ and $C$ by $A\mapsto A+\xi R$ and $C\mapsto C+\xi r$,  leading to the disturbed linear equation 
    \begin{align}
    \label{eq:disturbanceinthm}
    (A+\xi R)\hat{f}&=C+\xi r\ ,
    \end{align}
where $\hat{f}$ is the solution vector under the disturbance,
$\xi\geq 0$ is a real scalar, $R$ is an $N_V\times N_V$ dimensional matrix and $r$ is an $N_V$ dimensional vector.
Then, we can write the following estimate:
\begin{align}
    \frac{\|\hat{f}-f\|}{\|f\|}\leq \xi\kappa(A)\left(\frac{\|r\|}{\|C\|}+\frac{\| R\|}{\|A\|}\right)+\mathcal{O}(\xi^2)\ ,
\end{align}
where $\kappa(A):=\|A\|\|A^{-1}\|$ is the condition number of $A$.
\end{lemma}
We give a proof of this lemma in Appendix \ref{appendix: error bound theorems}.

If $A$ is invertible, we can identify $f=A^{-1} C$, and can apply this Theorem to get:
\begin{align} \label{eq:upperboundfromliubovsensitivity}
    \|f-\hat{f}\|\leq \|A^{-1}  C\|\xi\kappa(A)\left(\frac{\|r\|}{\|C\|}+\frac{\| R\|}{\|A\|}\right)+\mathcal{O}(\xi^2)\ .
\end{align}
In order to apply the Theorem to our analysis above, we are defining $R=\{\sigma_{k,l}\}_{k,l=1}^{N_V}$, $r=\{\sigma_{k}\}_{k=1}^{N_V}$ and 
\begin{align}
    \xi=\frac{1}{\sqrt{N_r\eta}}\ .
\end{align}

If in a specific instance, $A$ is singular, it is not possible to invert it and thus apply Lemma~\ref{lemma:liubovs sensitivity}. However, one can  apply a regularization procedure by deviating the matrix elements of $A$ slightly in such a way that it becomes non-singular. 
This will be leading to an additional error which, as we described in Sec.~\ref{subsec: Total error and resources}, is one of the error sources we are not taking into consideration in our analysis. For a way of dealing with this issue, see techniques used in Ref.~\cite{alghassi2022variational,mcardle2019variational,fontanela2021quantum, anuar2024operatorprojectedvariationalquantumimaginary}.

Taken together, we arrive at the following number of total circuit evaluations, combining Eqs.~\eqref{eq:def of function F},~\eqref{eq: error in evaluating f},~\eqref{eq: cost function with nvfactor},~\eqref{eq:upperboundfromliubovsensitivity}, and the results of Theorem~\ref{thm:quantumODEresources}:
\begin{align}\label{eq:comprehensiveresourcecountfirstequation}
N_{circ}
           &= N_{\tau}^{(\delta)}s N_{r}^{(\delta)}N_VN_d(N_VN_d+N)\\
     N_{r}^{(\delta)}&=\frac{9\Sigma^2}{L_{fy}^2}   
   \left(\frac{\epsilon_{target}^{(\delta)}}{\left(1+ F(N_{\tau}^{(\delta)},s)\right)^{N_{\tau}^{(\delta)}}-1}-\left(\frac{T}{N_{\tau}^{(\delta)}}\right)^{p+1} \frac{KL^p_{ft}  M}{F(N_{\tau}^{(\delta)},s)}\right)^{-2}\\
   \Sigma &= \|A^{-1}  C\|\frac{\kappa(A)}{\sqrt{\eta}}\left(\frac{\|r\|}{\|C\|}+\frac{\| R\|}{\|A\|}\right)+\mathcal{O}(\xi^2)\\
    F(N_{\tau},s) &= 
        \frac{b_{max}}{a_{max}} \left(\left(1+L_{fy}a_{max} \frac{T}{N_{\tau}}\right)^{s}-1\right) .\label{eq:comprehensiveresourcecountlastequation}
\end{align}

To provide further details and to make the analysis concrete, in Section~\ref{sec:estimations on toy model}, we select a specific toy model as a representative example. 
We estimate the quantities $\|A\|$, $\|C\|$, $\|A^{-1}C\|$, $\kappa(A)$ and $L_{fy}$ based on this toy model and in Sec.~\ref{sec: numerics}, we will estimate $\|R\|$, $\|r\|$, $a_{max}$, $b_{max}$, $K$, $M$ and $L_{f\tau}$ for general Runge-Kutta methods and a chosen Ansatz.

\subsection{Estimations based on a toy model}
\label{sec:estimations on toy model}
In order to apply the error and resource estimates from Sec.~\ref{sec: Analysis with Shot Noise Scaling} in practice (such as we do in Sec.\ref{sec: numerics}), we need to make estimates on several quantities. Especially challenging is the estimation of the condition number $\kappa(A)$, $\|A\|$, $\|C\|$ and $\|A^{-1}  C\|$ that are needed for the estimates on shot noise as derived in Sec.~\ref{subsec: Estimation of the Shot noise} and the estimation of the Lipschitz constant $L_{fy}$, as defined in Eq.~\eqref{eq:definition Lipschitz constant}.

As those quantities depend on $A$ and $C$, we are taking the following considerations:
Both $A$ and $C$ depend on expectation values from quantum circuits as written in Eq.~\eqref{eq: general form A and C evaluation}. The parameters $\bm{\theta}(\tau_k)$ at specific time steps $\tau_k$ vary a lot depending on the instance being solved. Since the expectation values that constitute the coefficients of $A$ and $C$ are depending on these parameters as well as the chosen Ansatz, it is non-viable to estimate them in general.

However, motivated by the fact that expectation values of parametrized quantum circuits defined as in our chosen Ansatz (see Eq.~\eqref{eq:Ansatz}) resemble truncated Fourier series (see Refs.~\cite{schuld2021effect,landman2022classically}), we simulate plausible coefficients of $A$ and $C$ according to the following toy model:
 
Let us assume that each of the elements $A_{k,l}$ and $C_{k}$ take the form of functions $w_{k,l}(\tilde{\theta})$ and $w_{k}(\tilde{\theta})$ in one input parameter $\tilde{\theta}$, respectively:
\begin{align}\label{eq:toy model A}
    w_{k,l}(\tilde{\theta}):=\alpha_{1,k,l}\cos(\alpha_{2,k,l}\tilde{\theta}+\alpha_{3,k,l})+\alpha_{4,k,l}\sin(\alpha_{5,k,l}\tilde{\theta}),\\\label{eq:toy model C}
    w_{k}(\tilde{\theta}):=\alpha_{1,k}\cos(\alpha_{2,k}\tilde{\theta}+\alpha_{3,k})+\alpha_{4,k}\sin(\alpha_{5,k}\tilde{\theta})\ .
\end{align}
We are choosing the amplitudes $\alpha_{1,k,l}$, $\alpha_{1,k}$ and $\alpha_{4,k,l}$, $\alpha_{4,k}$ to be randomly drawn from the Gaussian distribution $\mathcal{N}(1,0.1)$, and the phases $\alpha_{2,k,l}$, $\alpha_{2,k}$ and $\alpha_{5,k,l}$, $\alpha_{5,k}$ and phase differences $\alpha_{3,k,l}$, $\alpha_{3,k}$ to be randomly drawn from the Gaussian distribution $\mathcal{N}(0,0.1)$. 

\subsubsection{Condition number and norms on \texorpdfstring{$A$}{A} and \texorpdfstring{$C$}{C}}
In the literature, it is often assumed that the condition number of a matrix scales polynomially in its dimension~\cite{vu2007condition}.
If this assumption holds for the  $N_V\times N_V$-dimensional matrix $A$, there exists a constant $\Gamma$, such that with high probability, the following bound holds:
\begin{align}\label{eq:boundonconditionnumber}
    \kappa(A)\leq N_V^\Gamma\ .
\end{align}
In order to find out if this bound also holds if the coefficients of $A$ are sampled according to the toy model as described above, we plot such $\kappa(A)$ and the functions $N_V^2,N_V^3$ and $N_V^4$ for varying $N_V$ in Fig.~\ref{fig:condition number}. 
\begin{figure}[htbp]
    \centering
    \begin{overpic}[scale=0.7,grid=false,tics=10]{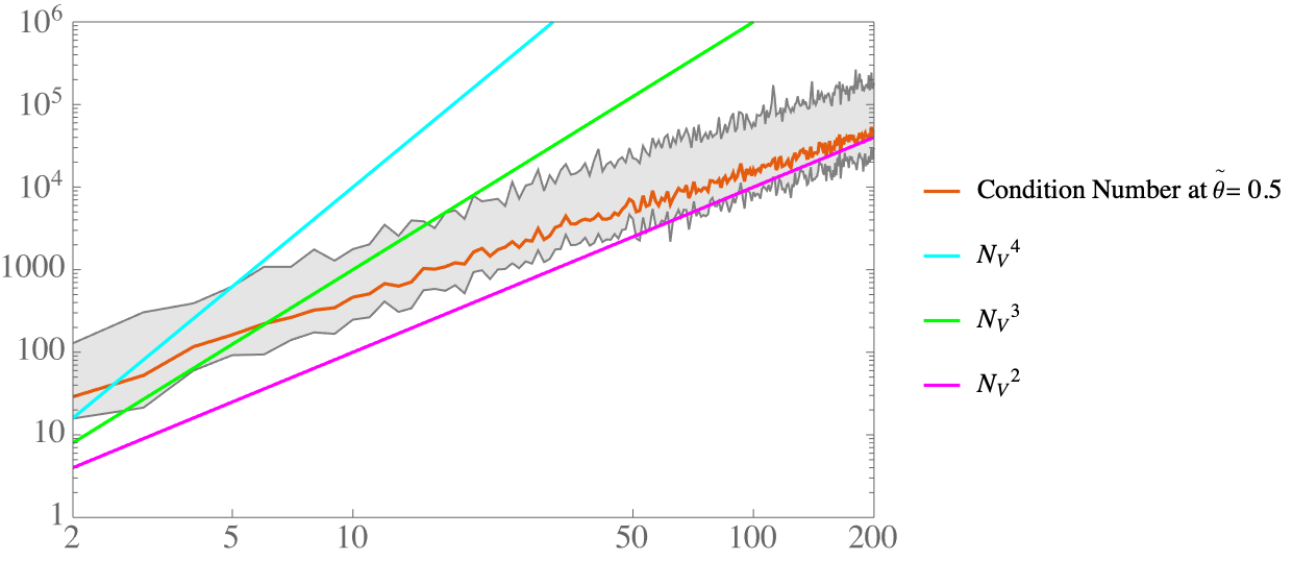}
        \put(37, -1){\makebox(0,0){Number of parameters $N_V$}} 
        \put(-1, 25){\rotatebox{90}{\makebox(0,0){\text{Condition number of $A$}}}} 
    \end{overpic}
    \caption{A log plot comparing the condition number obtained from the toy model with $\tilde{\theta}=1/2$ and different upper bounds due to $N_V^2$, $N_V^3$ and $N_V^4$. We sampled 100 different matrices $A$ according to the toy model and calculated the resulting condition number for each sample. The orange line shows the median and the gray shaded area shows the range between the 0.16 and 0.84 quantiles of the condition number estimates. We used the Frobenius norm.}
    \label{fig:condition number}
\end{figure}

Most of the condition numbers lie above the threshold $N_V^2$, while for $N_V\geq 10$, most condition numbers lie below the threshold $N_V^3$. 
Typically, more than $10$ parameters are chosen in applications of the variational algorithm (We are using $N_V=25$ in our numerical analysis in Sec.~\ref{sec: numerics}). In rare cases, $A$ constructed by the toy model is ill-conditioned, and $\kappa(A)$ significantly exceeds those bounds. As described earlier, those cases can be mitigated by a regularization of $A$ with the drawback of introducing an additional error.

We therefore assume that the bound in Eq.~\eqref{eq:boundonconditionnumber} holds with high probability for our toy model, with the constant $\Gamma=3$.

Analogously, we estimate the norms $\|A\|$, $\|C\|$ and $\|A^{-1}  C\|$ based on $A$ and $C$ that follow the introduced toy model. 
In  Fig.~\ref{fig:norm of A and C and AinverseC}, we plot those norms and the functions $f_1(N_V)=N_V$ and $f_2(N_V)=\sqrt{N_V}$ for varying $N_V$. 

\begin{figure}[htbp]
    \centering
    \begin{overpic}[scale=0.65,grid=false,tics=10]{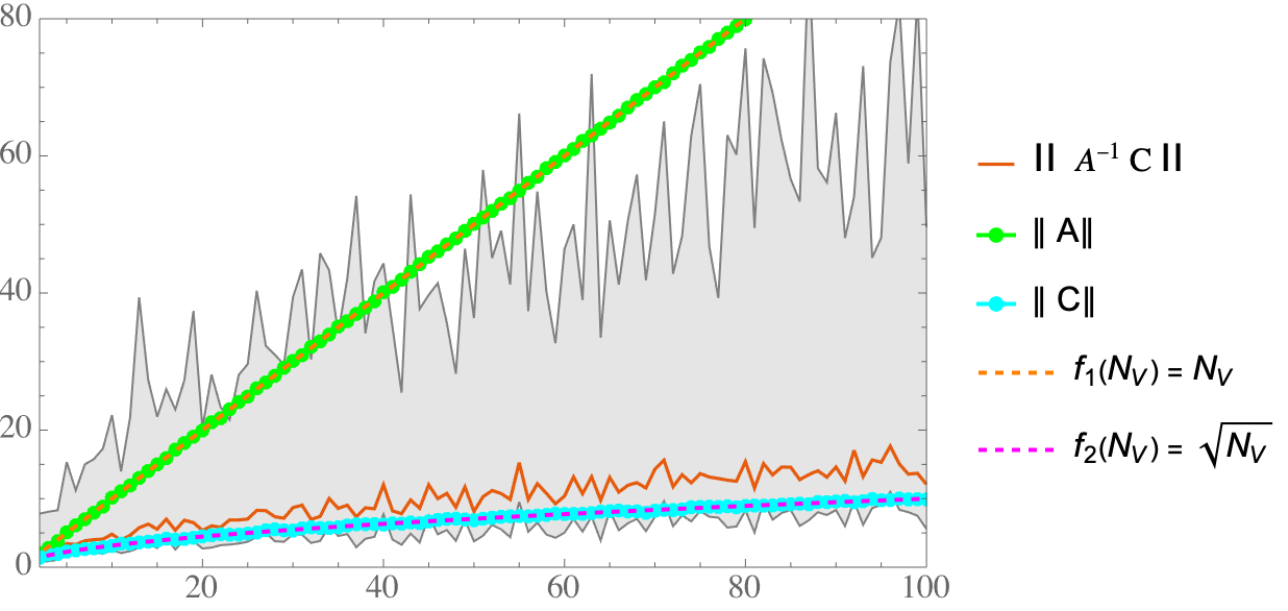}
        \put(37, -1){\makebox(0,0){Number of parameters $N_V$}} 
        \put(-2.5, 24){\rotatebox{90}{\makebox(0,0){\text{Values of Norms and approximating functions}}}} 
    \end{overpic}
    \vspace{0.1cm}
    \caption{A plot comparing the norms $\|A\|$, $\|C\|$ and $\|A^{-1}  C\|$ obtained from the toy model with the functions $f_1(N_V)=N_V$ and $f_2(N_V)=\sqrt{N_V}$. In order to estimate $\|A^{-1}  C\|$, we did 100 samples for $A$ and $C$ and calculated the resulting norm for each sample. The orange line shows the median and the gray shaded area shows the range between the 0.16 and 0.84 quantiles of the estimates for $\|A^{-1}  C\|$. For $\|A\|$, we used the Frobenius norm and for $\|C\|$ and $\|A^{-1}  C\|$ the $2$ norm.}
    \label{fig:norm of A and C and AinverseC}
\end{figure}

We see that it is reasonable to assume the scaling $\|A\|=\Theta( N_V)$ and $\|C\|=\Theta(\sqrt{N_V})$. 
For $\|A^{-1}  C\|$, we cannot make similarly reasonable estimates due to the high fluctuation. 
However, we can assume an upper bound $\|A^{-1}  C\|\leq 60$, which holds with high probability in the number of parameter range $N_V\in [0,100]$. Similar to managing the condition number, it is also possible to enforce this bound by a regularization of the matrix $A$.

The above analysis on estimates and bounds for $\kappa(A)$, $\|A\|$, $\|C\|$ and $\|A^{-1}  C\|$ included into the estimate in Eq.~\eqref{eq: error in evaluating f} leads to the following inequality to hold with high probability:
\begin{align}\label{eq:upper bound f error}
    \|f-\hat{f}\|<\delta\leq \frac{60}{\sqrt{N_r\eta}}N_V^3\left(\frac{\|\{\sigma_{k}\}_{k=1}^{N_V}\|}{\sqrt{N_V}}+\frac{\|\{\sigma_{k,l}\}_{k,l=1}^{N_V}\|}{N_V}\right)\ .
\end{align}
Hence, we conclude the following upper bound for $\Sigma$ (see Eq.~\eqref{eq: error in evaluating f}) with high probability:
\begin{align}
\label{eq:formula for Sigma}
    \Sigma \leq \frac{60}{\sqrt{\eta}}N_V^3\left(\frac{\|\{\sigma_{k}\}_{k=1}^{N_V}\|}{\sqrt{N_V}}+\frac{\|\{\sigma_{k,l}\}_{k,l=1}^{N_V}\|}{N_V}\right).
\end{align}

\subsubsection{Estimation of the Lipschitz Constant}
\label{subsec:Estimation of the Lipschitz constant}
We have seen in Sec.~\ref{sec: Analysis with Shot Noise Scaling}, that the Lipschitz constant $L_{fy}$ of $f(\bm{\theta}(\tau))$ with respect to the $\bm{\theta}(\tau)$ has a direct influence on the error and resource estimates. In fact, as we will see in Sec.~\ref{sec: numerics}, the sensitivity of the total cost is higher with $L_{fy}$ than with most of the other parameters. Let us therefore analyze the Lipschitz constant for the variational algorithm described in Sec.~\ref{sec: Variational Quantum Simulator preliminaries}. 
In this case, $f(\bm{\theta}(\tau))$ is defined as
\begin{align}
        f(\bm{\theta}(\tau))=f\left(\bm{\theta}(\tau)\right):= A^{-1}\left(\bm{\theta}(\tau)\right)C\left(\bm{\theta}(\tau)\right)\ ,
\end{align}
and the Lipschitz constant $L_{fy}$ turns into the upper bound
\begin{align}
L_{fy}\geq\frac{\left\|f\left(\bm{\theta}_1(\tau)\right)-f\left(\bm{\theta}_2(\tau)\right)\right\|}{\|\bm{\theta}_1(\tau)-\bm{\theta}_2(\tau)\|}\ , \quad \forall \bm{\theta}_1(\tau), \bm{\theta}_2(\tau)\ .
\end{align}
We are again modeling the matrix $A$ and the vector $C$ with the toy model described in Eqs.~\eqref{eq:toy model A} and~\eqref{eq:toy model C}. This toy model has a single input variable $\tilde{\theta}$ instead of an $N_V$-dimensional vector $\bm{\theta}(\tau)$. Let us vary this parameter, while we keep the other (randomly drawn) parameters of the toy model fixed, and define the following function:
\begin{align}
\label{eq:lipfunction}
\text{Lip}\left(\tilde{\theta}_1,\tilde{\theta}_2\right):=\frac{\left\|f\left(\tilde{\theta}_1\right)-f\left(\tilde{\theta}_2\right)\right\|}{\|\tilde{\theta}_1-\tilde{\theta}_2\|}\ .
\end{align}
We show two 3D plots of this function, with varying $\tilde{\theta}_1$ and $\tilde{\theta}_2$ and fixed dimensions of $A$ and $C$ with $N_V=25$ in  Fig.~\ref{fig:lipschitzplot}. For each of the two plots, the random toy model parameters are drawn independently. 
\begin{figure}[htbp]
    \centering
    \begin{subfigure}[b]{0.4\textwidth}
        \centering
        \begin{overpic}[width=\textwidth,grid=false,tics=10]{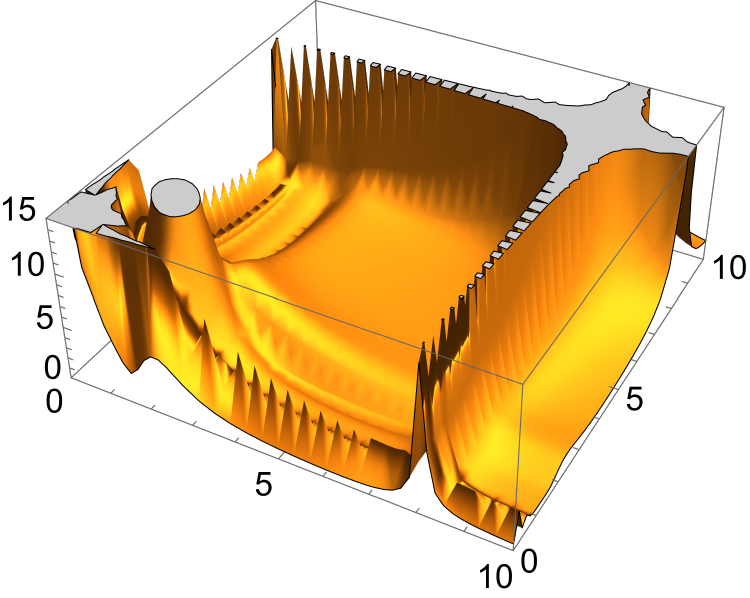}
            \put(32, 10){\makebox(0,0){$\tilde{\theta}_1$}} 
            \put(-5, 40){\rotatebox{90}{\makebox(0,0){$\text{Lip}\left(\tilde{\theta}_1,\tilde{\theta}_2\right)$}}} 
            \put(90, 22){\makebox(0,0){$\tilde{\theta}_2$}} 
        \end{overpic}
    \end{subfigure}
    \hspace{1cm} 
    \begin{subfigure}[b]{0.4\textwidth}
        \centering
        \begin{overpic}[width=\textwidth,grid=false,tics=10]{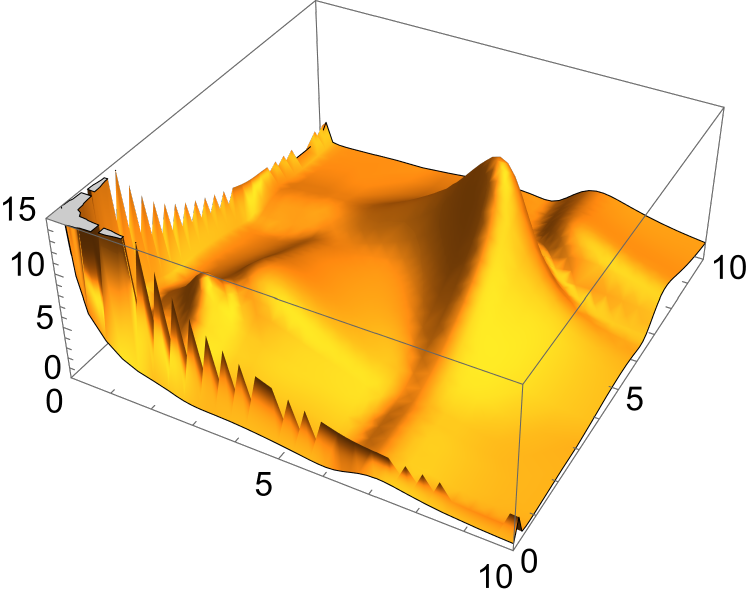}
            \put(32, 10){\makebox(0,0){$\tilde{\theta}_1$}} 
            \put(-5, 40){\rotatebox{90}{\makebox(0,0){$\text{Lip}\left(\tilde{\theta}_1,\tilde{\theta}_2\right)$}}} 
            \put(90, 22){\makebox(0,0){$\tilde{\theta}_2$}} 
        \end{overpic}
    \end{subfigure}
    \caption{The function $\text{Lip}\left(\tilde{\theta}_1,\tilde{\theta}_2\right)$ defined in Eq.~\eqref{eq:lipfunction} plotted for two different drawings of random matrix and vector parameters according to the toy model from Eqs.~\eqref{eq:toy model A} and~\eqref{eq:toy model C}. We fixed $N_V=25$ and varied the parameters $\tilde{\theta}_1$ and $\tilde{\theta}_2$ between $0$ and $10$.}
    \label{fig:lipschitzplot}
\end{figure}

At the gray areas, the function $\text{Lip}\left(\tilde{\theta}_1,\tilde{\theta}_2\right)$ exceeds the value of $15$. We see that in order to be able to bound $\text{Lip}\left(\tilde{\theta}_1,\tilde{\theta}_2\right)$ from above by a Lipschitz constant $L_{fy}$, one has to guarantee that at each update step from $\tilde{\theta}_1$ to $\tilde{\theta}_2$, the function $\text{Lip}\left(\tilde{\theta}_1,\tilde{\theta}_2\right)$ stays inside the region below the threshold $L_{fy}=15$. It is possible to take higher estimates for $L_{fy}$ and create possibly larger areas for valid $\tilde{\theta}_1$ and $\tilde{\theta}_2$. However, this comes to the account of less tight error and resource bounds in Theorems \ref{thm:quantumODEerror} and \ref{thm:quantumODEresources}. 

Interestingly enough, the function $\text{Lip}\left(\tilde{\theta}_1,\tilde{\theta}_2\right)$ keeps similar patterns for changing the mean and the variance of the distributions from which the amplitudes of the toy model are drawn. 
However, it shows an increasingly more rugged landscapes for both increasing the mean and the variance of the random distributions of the phases of the toy model. 
As expected, the function $\text{Lip}\left(\tilde{\theta}_1,\tilde{\theta}_2\right)$ behaves periodically with respect to the phase difference parameter of the toy model, with the period depending on the phases.

Under the assumption that the toy model we chose models the behavior of $A$ and $C$ sufficiently well, we use the estimates for $\kappa(A)$, $\|A\|$, $\|C\|$, $\|A^{-1}  C\|$ and $L_{fy}$ derived in this section in order to concretize the error and resource estimates from Sec.~\ref{sec: Analysis with Shot Noise Scaling}, and to numerically analyze them in the following section.

\section{Numerical Analysis of the Error and Resource Estimates}
\label{sec: numerics}
In this section, we are numerically analyzing the error and resource estimates from Sec.~\ref{sec: analysis classical ode solver} and Sec.~\ref{sec: Analysis with Shot Noise Scaling} for an application of the RKMs to solving a simple ODE and for an application of the variational algorithm presented in Sec.~\ref{sec: Variational Quantum Simulator preliminaries} to solving a partial DE coming from finance.
\subsection{Solving a Classical ODE without Shot Noise}
\label{subsection:vanilla ode numerics}
Let us analyze the cost function provided in Corollary~\ref{cor:classicalODEcost} with a simple ODE: 
 \begin{align}
        \frac{\partial \bm{\theta}(\tau)}{\partial \tau}=\frac{\bm{\theta}(\tau)}{2}, \quad
        \bm{\theta}(0)=1\ .
\end{align}
This ODE has the exponential function $\bm{\theta}(\tau)=\exp(\frac{\tau}{2})$ as a solution. We choose this simple ODE, because it is easy to analyze and to estimate the corresponding parameters that are used for the cost function.
\subsubsection{Parameter Estimation and Sensitivity Analysis for the Classical ODE Solver}
\label{subsubsection: Parameter estimation and sensitivity classical}
The Lipschitz constant $L_{fy}$ is equal to $0.5$. If we pick a final time $T=5$, then the function $f(\bm{\theta}(\tau))=\frac{1}{2}\bm{\theta}(\tau)$ is upper bounded by $M=13$. The derivative of $f(\bm{\theta}(\tau))$ with respect to $\tau$ is equal to $0.25\exp(\frac{1}{2}\tau)$ (since we know $\bm{\theta}(\tau)=\exp(\frac{1}{2}\tau)$). Thus we can choose a Lipschitz constant $L_{f\tau}=3.1$, which upper bounds this derivative for times $\tau\in[0,T=5]$.

It can be easily checked that the other bounds required in Theorem~\ref{thm:local_truncation_RKE} hold as well.

The approximation of $b_{max}:=\max_{i}|b_i|$ is depending on the chosen Runge-Kutta method. While there exist Runge-Kutta methods that have coefficients $|b_i|>1$, those are the exception as one can see by looking at several methods (for example, in~\cite{butcher2016numerical}).

To estimate the constant $K$, let us look at its definition~\onlinecite[Chapter 318]{butcher2016numerical}:
        \begin{align}
        K:=\sum_{|t|=p+1}\frac{1}{\sigma(t)}\Big|\Phi(t)-\frac{1}{t!}\Big| ,
    \end{align}
    where $t$ is a rooted tree of order $|t|$,
    the quantities $\sigma(t)$ and $t!$ are defined as products of factorials, and lie between $1$ and $|t|!$.
    The quantity $\Phi(t)$ depends on the coefficients of the chosen method. 
    For a rigorous definition of these terms, see ~\cite{butcher2016numerical}. 
    As it is discussed in~\onlinecite[Chapter 318]{butcher2016numerical}, it is not possible to create a general rule for estimating the terms in $K$. 
    However, what can be said is that the number of terms in the sum, the number of unlabeled rooted trees with $p+1$ nodes, is asymptotically equal to~\cite{on-line_encyclopedia} $0.439\times 2.956^{p+1}\times (p+1)^{-3/2}$ .
    This series scales exponentially in $p$. Thus, if the summands $\frac{1}{\sigma(t)}\Big|\Phi(t)-\frac{1}{t!}\Big|$ were constant in $p$, $K$ would scale exponentially in $p$ as well. 
    However, in ~\cite{butcher2016numerical} it is said that it is reasonable to assume that $K$ is constant in $\Delta \tau$. In Ref.~\onlinecite[Chapter 244]{butcher2016numerical}, several Runge-Kutta (and related) methods were developed for which the corresponding error constants were estimated, all of which are upper-bounded in magnitude by $1$.
    In the following analysis, we therefore assume that $K\leq 5$ holds with high probability. But, as we will see later, the total cost does not change as much in the parameter $K$ as it does in other parameters, and we found that the implications of our work still hold qualitatively with a pessimistic upper bound of $K\leq 0.439\times 2.956^{p+1}\times (p+1)^{-3/2}$. 
    Let us further choose a target error of $\epsilon_{target}=0.001$.
    
    We collect the estimates for the analysis in this section in Table~\ref{tab:estimates classical}.
    \begin{table}[htbp]
    \centering
    \begin{tabular}{|c|c|}
        \hline
        \textbf{Parameter} & \textbf{Estimate} \\
        \hline
        \( b_{max}\) & \(   1 \) \\
        \hline
        \( L_{fy} \) & \(   0.5 \) \\
        \hline
        \( T \) & \(   5 \) \\
        \hline
        \( K \) & \(   5 \) \\
        \hline
        \( L_{f\tau} \) & \(   3.1 \) \\
        \hline
        \( M \) & \(   13 \) \\
        \hline
         \( \epsilon_{target} \) & \(   0.001 \)\\
        \hline
    \end{tabular}
    \caption{Estimates of various parameters that optimize the savings by using a higher-order Runge-Kutta method instead of the Euler method. Used for the analysis in  Fig.~\ref{fig:costdifferentp classical}.}
    \label{tab:estimates classical}
\end{table}
Based on these default values, we are analyzing the sensitivity of the total cost on tuning the parameters in  Fig.~\ref{fig:parametersensitivity classical}. We show how the total cost changes when a single default value is multiplied with a scaling factor, whereas the remaining parameters are kept at default. We see that the cost does go down for higher-order Runge-Kutta methods and that the parameter with the most effect is the total time $T$, the Lipschitz constant $L_{fy}$, the parameter $b_{max}$ stemming from the Runge-Kutta method and the total target error $\epsilon_{target}$. The spike in the cost at order $p=6$ is due to the non-linear relationship between the order of the method and the number of stages (see Table~\ref{tab:Relation between order and number of stages of Runge-Kutta methods}).
\begin{figure}[htbp]
    \centering
    \begin{overpic}[scale=0.65,grid=false,tics=10]{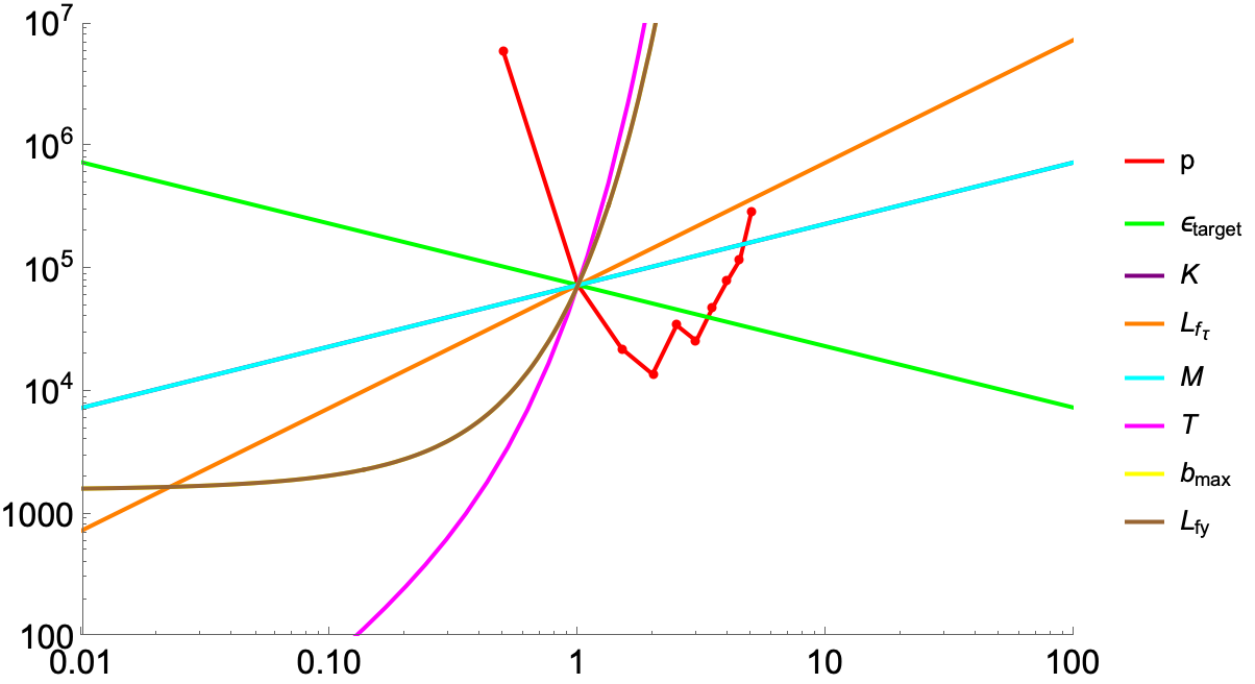}
        \put(50, -1){\makebox(0,0){Scaling factor $x$}} 
        \put(-5, 33){\rotatebox{90}{\makebox(0,0){\text{Value of the cost function from Corollary}\ref{cor:classicalODEcost}}}} 
    \end{overpic}
    \caption{A plot comparing the sensitivity of the cost function with respect to different parameters. The intersecting point in the middle is the value of the cost function where all parameters are chosen default as given in Table~\ref{tab:estimates classical} as well as $p=2$ as a default. In each graph, we are changing one parameter, while keeping all the other parameters at default. We are changing the parameter by multiplying with the scaling factor $x$ given as the abscissa. The graphs for $M$ and $K$ overlap as well as the graphs for $L_{fy}$ and $b_{max}$. The continuous red line for $p$ is just for visualization purposes, as $p$ is integer.}
    \label{fig:parametersensitivity classical}
\end{figure}

\subsubsection{Numerical Analysis of the Classical ODE Solver}
Using the parameters estimated above in Table~\ref{tab:estimates classical}, we estimate the total cost for Runge-Kutta methods of different orders. For the relation between the minimum number of stages $s$ needed for a particular order $p$, we use Table~\ref{tab:Relation between order and number of stages of Runge-Kutta methods}. We show the results in  Fig.~\ref{fig:costdifferentp classical}. On the left hand side is a table with the cost depending on the order and the ratio of the Euler method compared to a Runge-Kutta method of order $p$. On the right hand side, we plotted this ratio depending on the Runge-Kutta order. We see that for our particular example, all methods of order $2\leq p\leq 10$ are more cost efficient than the Euler method, where a Runge-Kutta method of order $4$ is the most cost efficient with savings of a factor of ca. $10^{3}$ compared to the cost of the Euler method. There is a second spike for the order $p=6$, because of the non-linear relationship between the order of the method and the number of stages as shown in Table~\ref{tab:Relation between order and number of stages of Runge-Kutta methods}.
\begin{figure}[htbp]
    \centering

    \begin{subfigure}[t]{0.45\textwidth}
        \centering
        
\begin{tabular}{|c|c|c|c|}
    \hline
    $p$ & \text{cost(p)} & \text{cost(1)/cost(p)} & $N_{\tau}^{(0)}$ \\
    \hline
    1  & $2.25 \times 10^7$ & $1.00$            & $2.25 \times 10^7$ \\
    \hline
    2  & $9.60 \times 10^4$ & $2.35 \times 10^2$ & $4.80 \times 10^4$ \\
    \hline
    3  & $1.99 \times 10^4$ & $1.13 \times 10^3$ & $6.63 \times 10^3$ \\
    \hline
    4  & $1.01 \times 10^4$ & $2.22 \times 10^3$ & $2.54 \times 10^3$ \\
    \hline
    5  & $1.38 \times 10^4$ & $1.64 \times 10^3$ & $2.29 \times 10^3$ \\
    \hline
    6  & $1.03 \times 10^4$ & $2.18 \times 10^3$ & $1.47 \times 10^3$ \\
    \hline
    7  & $1.36 \times 10^4$ & $1.65 \times 10^3$ & $1.52 \times 10^3$ \\
    \hline
    8  & $1.71 \times 10^4$ & $1.32 \times 10^3$ & $1.56 \times 10^3$ \\
    \hline
    9  & $2.07 \times 10^4$ & $1.09 \times 10^3$ & $1.60 \times 10^3$ \\
    \hline
    10 & $3.33 \times 10^4$ & $6.76 \times 10^2$ & $2.08 \times 10^3$ \\
    \hline
\end{tabular}

    \end{subfigure}
    \vspace{0.5cm}
    \begin{subfigure}[t]{0.9\textwidth}
        \centering
        \begin{overpic}[scale=0.6,grid=false,tics=10]{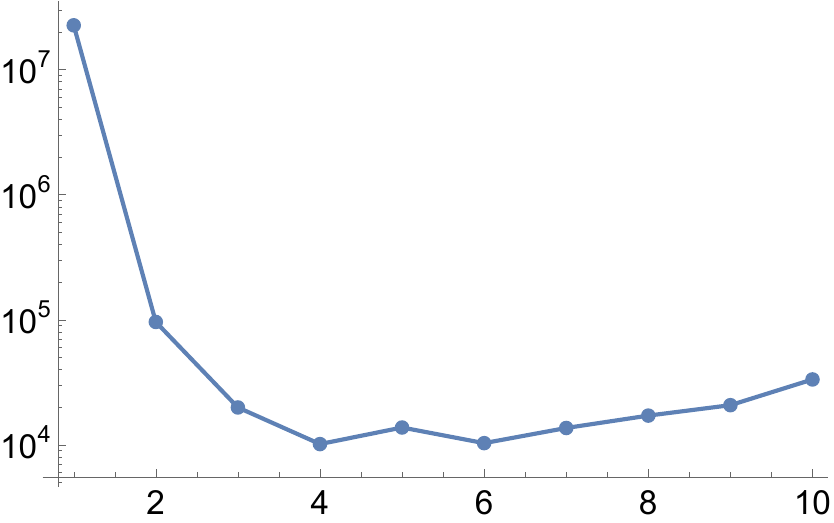}
            \put(50, -5){\makebox(0,0){\text{Runge-Kutta method order} p}} 
            \put(-5,33){\rotatebox{90}{\makebox(0,0){\text{cost}(p)}}} 
        \end{overpic}
    \end{subfigure}
    \vspace{0.1cm}
    \caption{Comparison of the cost for different RKM orders, where the estimated parameters are fine tuned to maximize the savings by using a higher order as given in Table~\ref{tab:estimates classical}. In the first column, we have the Runge-Kutta method order $p$. In the second column is the total cost of an algorithm that uses a Runge-Kutta method of order $p$, in the third column the ratio of an algorithm that uses the Euler method $(p=1)$ with an algorithm that uses a Runge-Kutta method of order $p$, and in the third column is the minimal number of time steps $ N_{\tau}^{(0)}$. In the plot, we show $N_{circ}(p)$ plotted against the order of the Runge-Kutta method. The cost is calculated according to Corollary~\ref{cor:classicalODEcost}.}
    \label{fig:costdifferentp classical}
\end{figure}

\subsection{Solving a Linear PDE with the Variational Quantum Algorithm}
\label{subsec:option pricing}
We are numerically analyzing the estimates from Sec.~\ref{sec: Analysis with Shot Noise Scaling} and Sec.~\ref{sec:Introduce Quantum Circuits, McLachlan Principle, Shot noise, Lipschitz Number} for an application of the variational algorithm presented in Sec.~\ref{sec: Variational Quantum Simulator preliminaries} to solving the Black Scholes model, which is a linear partial DE coming from finance. The motivation for choosing this model is twofold: On the one hand, we are using this example in order to specify the estimates done in previous sections and in showing the consequences of the choice of RKMs on the total cost. On the other hand, we are using this example to stress the wide range of DEs on which the variational algorithm presented in Sec.~\ref{sec: Variational Quantum Simulator preliminaries} can be applied to.

\subsubsection{Black Scholes Model and Error Analysis}
Let us begin with describing the problem of option pricing and the application of the variational algorithm to it. This application has been studied before in several ways ~\cite{fontanela2021quantum, radha2021quantum, alghassi2022variational, 10112619}.

A European call option is used in practice in the following way: Initially, at time $t=0$, an option is acquired, specifying a particular underlying asset, an expiration time $t_{final}$, and a strike price $K$. When time reaches $t=t_{final}$, the option buyer faces a decision: whether to exercise the option by buying the asset at the strike price $K$ or to refrain from exercising it. The buyer's rational choice at time $t_{final}$ is determined by the price $S(t_{final})$ of the underlying asset at that moment. If $S(t_{final}) > K$, indicating that the final asset price exceeds the strike price, the buyer can use the option to purchase the asset for the price $K$ and immediately sell it at the higher market price of $S(t_{final})$ in an ideal market. This transaction results in a profit of $S(t_{final}) - K$ for the buyer. Conversely, if $S(t_{final}) \leq K$, the buyer would not choose to exercise the option by purchasing the asset since they would not be able to generate a profit from selling it in the market.

One can conclude that the payoff of the buyer is equal to
    \begin{align}
    \label{eq: payoff call option}
        V(t_{final},S)= \max \{S(t_{final})-K, 0\}\ .
    \end{align}
     Since the stock price at time $t_{final}$ is unknown, one models the stock price stochastically.
    A simple model is the so-called Black-Scholes model which characterizes the stock price $S(t)$ as a stochastic variable that follows a geometric Brownian motion:
    \begin{align}
        dS(t)=\mu S(t)dt+\sigma S(t)dW_{t} \ ,
    \end{align}
    where $\mu$ is the drift of the stock price, $\sigma$ is its standard deviation (called 'volatility')  and the random variable $dW_t$ is a Wiener process.
    The arbitrage assumption states that it is impossible to build a portfolio which gives positive return without risk. By incorporating this assumption and using It\^{o} calculus, the stochastic DE in $S(t)$ is transformed to a parabolic partial DE (PDE) that models the price of a call option $V(t,S)$:
    \begin{align}\label{1209}
      \frac{\partial V}{\partial t} + \frac{1}{2}\sigma^2S^2\frac{\partial^2V}{\partial S^2} + rS\frac{\partial V}{\partial S}= rV\ .
    \end{align}
    This PDE is called the Black Scholes equation.
    The parameters of this model are the volatility $\sigma$ of the stock price and the risk-free interest rate $r$. Both are assumed to be independent of time. The Black Scholes can be mapped to the imaginary time Schrödinger equation with the following transformations:
    
    Applying the transformations  $\tau=(t_{final}-t)\sigma^2$, $x=\log(S)$, and subsequently $u(\tau, x)=e^{-ax-b\tau}V(\tau, x)$ with the parameters $a=\frac{1}{2}-\frac{r}{\sigma^2}$ and  $b=-\frac{1}{2}a^2-\frac{r}{\sigma^2}$, where $u(\tau, x)$ is a modified price, one obtains ~\cite{fontanela2021quantum}:
    \begin{align}
    \label{eq: transformed BS equation}
        \frac{\partial }{\partial \tau}u(\tau, x)=\frac{1}{2}\frac{\partial^2 }{\partial x^2}u(\tau, x)\ .
    \end{align}
    Note that $\tau\in[0,T]$, where we write $T=t_{final}\sigma^2$.
    This equation is equivalent to the imaginary time Schr\"{o}dinger equation in  Eq.~\eqref{eq: generalized imaginary time schroedinger equation, ket formulation}, where $\ket{\psi(\tau)}$ is a quantum state representing the option price and $\mathcal{H}=-\frac{1}{2}\frac{\partial^2 }{\partial x^2}$ is the Hamiltonian operator. 
    Solving  Eq.~\eqref{eq: generalized imaginary time schroedinger equation, ket formulation} is hence equivalent to solving the Black-Scholes equation   in Eq.~\eqref{1209} after reversing the transformations.
   
    We can interpret the state with the option price in the following way:
    
    Consider a register of $n$ qubits. On this chain, we define a set of $2^n$ pairwise orthogonal states $\{\ket{x}\}_x$. We select an interval of possible stock prices and discretize the interval into $2^n$ points. Consequently, we associate the basis states with stock prices, where the state $\ket{0}^{\otimes n}$ corresponds to the minimum stock price, and the state $\ket{1}^n$ corresponds to the maximum stock price of the chosen interval.
    
    The quantum state from Eq.~\eqref{1240} encodes the option value corresponding to a particular stock price in the amplitudes of $\ket{\psi(\tau)}$ in the following way:
     \begin{align}
        \ket{\psi(\tau)}=\sum_x\sqrt{p_x(\tau)}\ket{x}\ ,\quad  \sum_xp_x(\tau)=1,\quad \forall\tau.\label{eq:quantum_option_state}
    \end{align}
    The boundary condition is $p_x(0)=\gamma(0)V(0,x)e^{-ax}$,
    where $\gamma(0)$ is the normalization constant and $V(0,x)$ is the option price at time $\tau=0$. Thus, by obtaining the probability $p_x(\tau)$ at time $T=t_{final}\sigma^2$ for a particular stock price $x=\log(S)$, the corresponding option price can be calculated as:
    \begin{align}
        V(t_{final},x)=\gamma^{-1}(T)  p_x(T)  e^{ax+bt_{final}\sigma}\ .
    \end{align}
 In the following subsections, we are estimating the parameters based on the application of the variational algorithm to this option pricing use case and do a numerical analysis.

\subsubsection{Parameter Estimations and Sensitivity Analysis for the Variational Quantum Algorithm}
\label{subsubsec: Parameter estimates and Numerical Analysis quantum}
We are now making educated guesses for the parameters that are related to the use case of option pricing as described above and analyze the sensitivity of our error and resource analysis from Sec.~\ref{sec: Analysis with Shot Noise Scaling} with respect to these parameters.

In this and the following section, we use the symbol $\sim$ for approximations of a term with a scalar, including rounding errors.

In several papers~\cite{fontanela2021quantum, radha2021quantum, alghassi2022variational, 10112619}, the application of the variational algorithm as described has been applied to option pricing. We are basing our parameter estimates on Ref.~\onlinecite[Sec. 5.1.]{fontanela2021quantum}, where the authors take an Ansatz with the parameters
    $ N=16$, $ N_V=25$ and $N_d=1$.
    That means that the matrix $A$ is a $25\times 25$ matrix, the vectors $C$ and $\bm{\theta}$ have $25$ elements, as well as the matrix $\sigma_{k,l}$ and the vector $\sigma_k$ that we defined in Lemma~\ref{lemma: shot noise_2}. 
     We take the probability $\eta$ from the same lemma equal to  
     $\eta\sim 0.05$.
     
    A typical total time $t_{final}$ in option pricing is one year and a typical volatility is $\sigma=0.2\times 1/\text{year}^2$. Thus, $t_{final}\times\sigma^2=T \sim 0.04$.
    
In Ref.~\onlinecite[Chapter 3]{fontanela2021quantum}, the parameters $f_{k,j}$ are defined as $f=i/2$. With this, let us approximate the perturbations $\|\sigma_{k,l}\|$ and $\|\sigma_k\|$:

it makes sense to see them in terms of the maximal deviation of individual entries. Given $f=i/2$, they are for the matrix equal to $N_d^21/2\times2$ and for the vector $N_dN1/2\times2$. Thus, we assume the total norms to scale as $\|\sigma_{k,l}\|\leq N_VN_d^2\times 1$ and $\|\sigma_{k}\|\leq N_VN_dN\times 1$. 

For approximations of the parameters $a_{max}$, $b_{max}$ and $K$, see our discussion in Sec.~\ref{subsection:vanilla ode numerics}. We pick now the parameters $a_{max}=b_{max}=1$ and the $K=5$. But as above, we see later that $N_{circ}$ does not depend strongly on $K$ and our results are still qualitatively valid for a large variety of $K$.

In line with the discussion in Sec.~\ref{sec: Variational Quantum Simulator}, we estimate the Lipschitz constant $L_{fy}$ to be equal to $L_{fy}=15$, the condition number to be upper bounded as $\kappa(A)\leq N_V^3=15625$ and $M=60$  with the caveat of having to assure the toy model is chosen adequate enough and the bounds can be guaranteed by regularization to satisfy these bound.

We are also estimating a probability coming from the shot noise of $\eta=0.05$. Combining the estimates, we are therefore getting with high probability the upper bound
\begin{align}
    \label{eq: approximation for Sigma}
    \Sigma \leq \frac{60}{\sqrt{\eta}}N_V^3\left(\frac{N_VN_dN}{\sqrt{N_V}}+\frac{N_VN_d^2}{N_V}\right)\sim 3.4\times 10^8\ .
\end{align}

The parameter $L_{f\tau}$ is defined as an upper bound to the time derivative of $f\left(\bm{\theta}(\tau)\right)$. Since the time dependence of $f$ is fully carried via the function $\bm{\theta}(\tau)$, we can apply a reasoning similar to that for estimating $L_{fy}$. We assume that the same behavior and the same bounds hold and, therefore, estimate $L_{f\tau}=L_{fy}=15$.

We cannot determine if the other bounds hold that are required in Theorem~\ref{thm:local_truncation_RKE}, since we do not have full knowledge of the function $f\left(\bm{\theta}(\tau)\right)$.

For estimating the quantity $S$ as defined in Eq.~\eqref{eq:total_error_easy_form}, we need to take the following into account:
We estimate that for all $1\leq k\leq N_V$, $|\theta^*_{k}(T)|\sim \frac{\|\bm{\theta}\|_2}{N_V}$. Together with taking $f_{k,j}=i/2$ as above, we get $S\sim 1$.

Lastly, we are choosing a target error of $\epsilon_{target}\sim 0.001$ .
Let us thus make the following estimates, collected in Table~\ref{tab:estimates}:
\begin{table}[htbp]
    \centering
    \begin{tabular}{|c|c|}
        \hline
        \textbf{Parameter} & \textbf{Estimate} \\
        \hline
        \( a_{max}\) & \(   1 \) \\
        \hline
        \( b_{max}\) & \(   1 \) \\
        \hline
        \( L_{fy} \) & \(   15 \) \\
        \hline
        \( T \) & \(   0.04 \) \\
        \hline
        \( K \) & \(   5 \) \\
        \hline
        \( L_{f\tau} \) & \(   15 \) \\
        \hline
        \( M \) & \(   60 \) \\
        \hline
        \( S \) & \(   1 \) \\
        \hline
        \( \Sigma \) & \(   3.4\times 10^8 \)\\
        \hline
         \( \epsilon_{target} \) & \(   0.001 \)\\
        \hline
    \end{tabular}
    \caption{Estimates of various parameters based on the option pricing application. Used for the analysis in Figs.~\ref{fig:parametersensitivity} and \ref{fig:costdifferentp}.}
    \label{tab:estimates}
\end{table}

Let us now show a plot where we examine the scaling of $N_{circ}$ with respect to the input parameters. It is difficult to make general statements about the scaling of $N_{circ}$ from Eq.~\eqref{eq: cost function with nvfactor}, since it does not have a trivial form, thus a look at a plot helps with understanding its scaling.
We take as default the parameters estimated above and analyze in  Fig.~\ref{fig:parametersensitivity} the deviations caused by changing a single parameter away from the default.

\begin{figure}[htbp]
    \centering
    \begin{overpic}[scale=0.65,grid=false,tics=10]{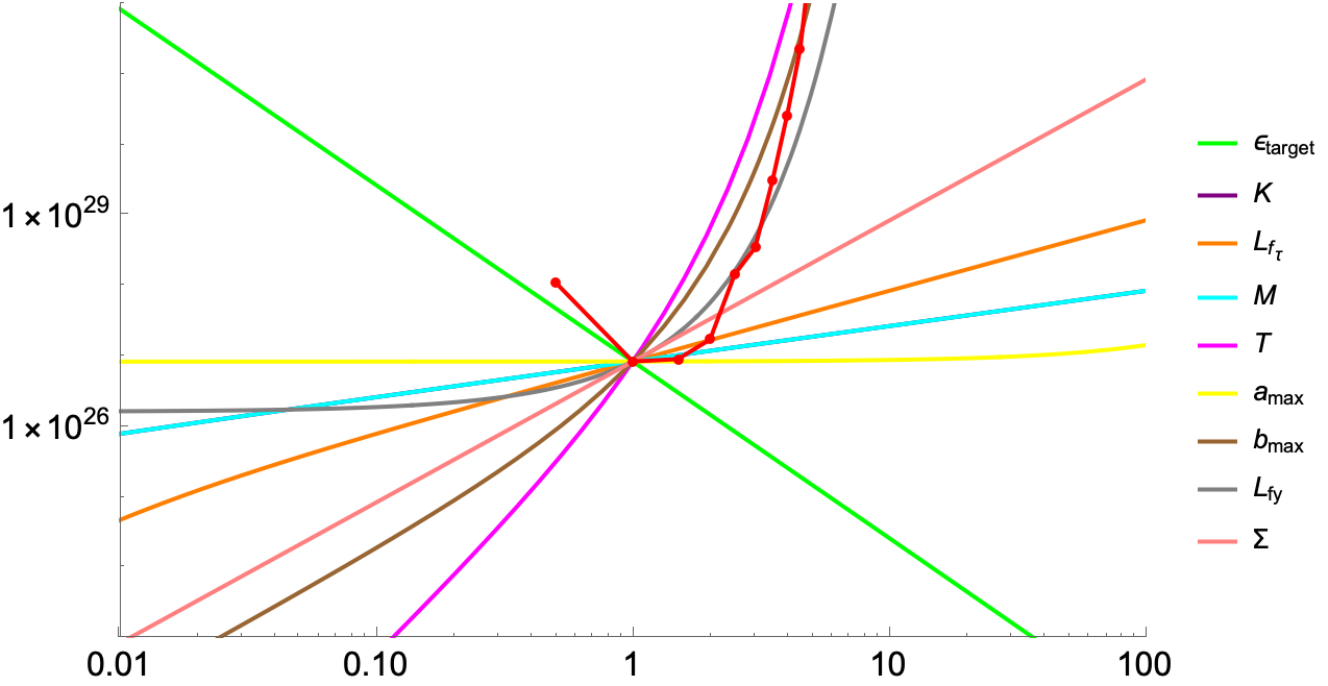}
        \put(50, -1){\makebox(0,0){Scaling factor $x$}} 
        \put(-5, 33){\rotatebox{90}{\makebox(0,0){\text{Value of $N_{circ}$ from Eq.} \eqref{eq: cost function with nvfactor}}}} 
    \end{overpic}
    \caption{A plot comparing the sensitivity of $N_{circ}$ with respect to different parameters. The intersecting point in the middle is the value of $N_{circ}$ where all parameters are chosen default as given in Table~\ref{tab:estimates} as well as $p=2$ as a default. In each graph, we are changing one parameter, while keeping all the other parameters at default. We are changing the parameter by multiplying with the scaling factor $x$ given as the abscissa. The following colours are corresponding to $N_{circ}$ with one parameter changed: $p$, $\epsilon_{target}$, $L_{f\tau}$, $M$ and $K$, $T$, $a_{max}$, $b_{max}$, $L_{fy}$, $\Sigma$, where $x$ is the scaling factor at the abscissa. The graphs for $M$ and $K$ overlap. The continuous red line for $p$ is just for visualization purposes, as $p$ is integer.}
    \label{fig:parametersensitivity}
\end{figure}
We conclude from this figure, that $N_{circ}$ does not change much for the parameters $a_{max}$, $M$, $L_{f\tau}$, $K$ and $\Sigma$. As it was proved impossible to estimate $K$ in a general way, it is promising to see that our estimate of $N_{circ}$ does not scale with it as much as with other parameters.

In addition, $N_{circ}$ changes faster for the parameters $p$, $T$, $b_{max}$, $\epsilon_{target}$, and $L_{fy}$. For $p$ in particular, we see that a minimum of $N_{circ}$ is reached for a value higher than for $p=1$. We will examine the behavior with $p$ with more detail in the next section.

\subsubsection{Numerical Analysis of the Variational Quantum 
Algorithm}
\par For the numerical analysis, we use the parameters given in Table~\ref{tab:estimates} that are fit to the option pricing use case in order to calculate $N_{circ}$, which is the total number of circuit evaluations needed for the algorithm. As we mentioned in Sec.~\ref{subsubsec: Parameter estimates and Numerical Analysis quantum}, the parameter $K$ is the most challenging to estimate. For this analysis, we choose $K = 5$. A comparison of the costs of algorithms based on Runge-Kutta and Euler methods is given in Fig.~\ref{fig:costdifferentp}. We see that the highest saving in $N_{circ}$  compared to the Euler method can be done with a Runge-Kutta method of order $p=2$.

To reach a target error of $\epsilon_{target}=0.001$, we need an $ N_{r}^{(\delta)}\sim 10^{22}$, which is equivalent to requiring classical machine precision for the entries of matrices $A$ and vectors $C$ ($\epsilon\sim 1/\sqrt{ N_{r}^{(\delta)}}\sim 5\times 10^{-12}$). 
The total number of circuit evaluations $N_{circ}$ is equal to $1.62\times 10^{28}$. 
\par In Ref.~\onlinecite[Supplementary information VIA]{arute2019quantum}, the time that the superconducting Sycamore chip can be used before having to be recalibrated is around $1$ day. 
That implies that all calculations have to be done in at most $24$ hours. 
The time for one readout of the Sycamore chip takes around $4\mu s$~\cite{sycamoredatasheet}. 
That means that it is possible to evaluate around $2\times 10^{10}$ quantum circuits. 

It is obvious that the high accuracy is needed because of the inversion of the matrix A and the resulting error propagation. The number $N_{r}^{(\delta)}$ given in Theorem~\ref{thm:quantumODEresources} depends quadratically on the factor $\Sigma$ estimated in $\eqref{eq: approximation for Sigma}$ to be upper bounded with high probability by $ 3.4\times 10^8$. If it was possible to reach a bound $\Sigma\leq 1$, the number $N_{r}^{(\delta)}$ would decrease to the order $10^7$ which would be feasible for quantum hardware, . 

In order to illustrate the potential resource savings that can be gained by choosing a higher-order RKM instead of the Euler method, we provide a second analysis based on a different choice of parameters provided in Table~\ref{tab:estimates maximize gain}.  
This choice is  inspired by the estimates in Table~\ref{tab:estimates} changing some of the parameters within reasonable ranges in order to increase the resource savings. 
The resulting resource requirements are shown in  Fig.~\ref{fig:costdifferentp maximize gain}. 
We conclude that by choosing a Runge-Kutta method of order $p=4$, there have to be done a factor of $\sim 2.56 \times 10^{3}$ less circuit evaluations than when choosing the Euler method. 
However, the number of shots for each circuit ($N_{r}^{(\delta)}\sim 1.98\times 10^{26}$) is still too high in order to be realized on any quantum hardware.

\begin{figure}[htbp]
    \centering
    \vspace{0cm} 
    \hspace{0cm} 
    \begin{subfigure}[t]{0.9\textwidth} 
        \centering
        \begin{tabular}{|c|c|c|c|c|c|}
    \hline
    $p$ & $N_{circ}(p)$ & $N_{circ}(1)/N_{circ}(p)$ & $N_{r}^{(\delta)}$ & $N_{\tau}^{(\delta)}$ & \text{Circuits} \\
    \hline
    1 & $2.13 \times 10^{29}$ & $1  $ & $7.03 \times 10^{21}$ & $2.96 \times 10^{4}$ & $3.03 \times 10^{7}$ \\
    \hline
    2 & $1.62 \times 10^{28}$ & $13.18 $ & $3.87 \times 10^{22}$ & $2.04 \times 10^{2}$ & $4.19 \times 10^{5}$ \\
    \hline
    3 & $1.75 \times 10^{28}$ & $12.21 $ & $1.53 \times 10^{23}$ & $37.06 $ & $1.14 \times 10^{5}$ \\
    \hline
    4 & $3.31 \times 10^{28}$ & $6.45  $ & $5.19 \times 10^{23}$ & $15.55 $ & $6.38 \times 10^{4}$ \\
    \hline
    5  & $3.38 \times 10^{29}$  & $6.31\times 10^{-1}$             & $5.48 \times 10^{24}$ & $10.03$         & $6.17 \times 10^4$ \\
    \hline
    6  & $7.79 \times 10^{29}$  & $2.74\times 10^{-1}$             & $1.56 \times 10^{25}$ & $6.96$          & $4.99 \times 10^4$ \\
    \hline
    7  & $7.49 \times 10^{30}$  & $2.85\times 10^{-2}$             & $1.41 \times 10^{26}$ & $5.74$          & $5.30 \times 10^4$ \\
    \hline
    8  & $7.00 \times 10^{31}$  & $3.05\times 10^{-3}$             & $1.25 \times 10^{27}$ & $4.98$          & $5.62 \times 10^4$ \\
    \hline
    9  & $6.45 \times 10^{32}$  & $3.31\times 10^{-4}$             & $1.08 \times 10^{28}$ & $4.47$          & $5.96 \times 10^4$ \\
    \hline
    10 & $2.16 \times 10^{34}$  & $9.9 \times 10^{-6}$ & $3.03 \times 10^{29}$ & $4.33$          & $7.11 \times 10^4$ \\
    \hline
\end{tabular}
    \end{subfigure}

    \vspace{0.5cm} 
    \begin{subfigure}[t]{0.9\textwidth} 
        \centering
        \begin{overpic}[scale=0.6,grid=false,tics=10]{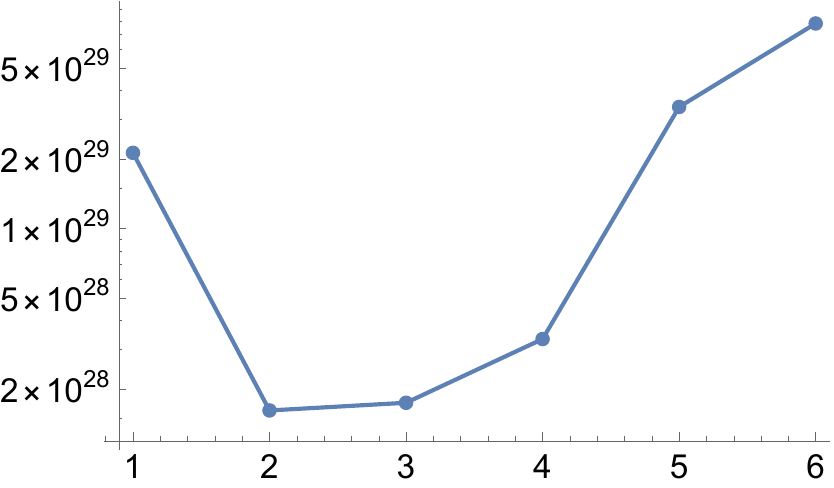}
            \put(50, -2){\makebox(0,0){\text{Runge-Kutta method order} p}} 
            \put(-5, 33){\rotatebox{90}{\makebox(0,0){$N_{circ}(p)$}}} 
        \end{overpic}
    \end{subfigure}
\vspace{0.1cm}
    \caption{Comparisons of the resource requirements for different RKM orders, for the parameters in Table~\ref{tab:estimates}, applicable to the option pricing application. In the first column, we have the RKM order $p$. In the second column is $N_{circ}$  of an algorithm that uses an RKM of order $p$ and in the third column the ratio of $N_{circ}$ of an algorithm that uses the Euler method $(p=1)$ with $N_{circ}$ of an algorithm that uses an RKM of order $p$. In the third, fourth, and fifth columns we show $N_r^{(\delta)}$, the number of time steps $ N_{\tau}^{(\delta)}$ and the number of different circuits. In the plot, we show $N_{circ}(p)$ plotted against $p$. The total number of circuit evaluations $N_{circ}$ is calculated according to Eq.~\eqref{eq: cost function with nvfactor}. }
    \label{fig:costdifferentp}
\end{figure}

\begin{figure}[htbp!]
    \centering
    \vspace{0cm} 
    \hspace{0cm} 
    \begin{subfigure}[t]{0.9\textwidth} 
        \centering
        \begin{tabular}{|c|c|c|c|c|c|}
    \hline
    $p$ & $N_{circ}(p)$ & $N_{circ}(1)/N_{circ}(p)$ & $N_{r}^{(\delta)}$ & $N_{\tau}^{(\delta)}$ & \text{Circuits} \\
\hline
1  & $1.12 \times 10^{37}$ & $1$  & $1.15 \times 10^{25}$ & $9.56 \times 10^{8}$  & $9.80 \times 10^{11}$ \\ \hline
2  & $2.63 \times 10^{34}$ & $4.28 \times 10^{2}$  & $3.93 \times 10^{25}$ & $3.26 \times 10^{5}$  & $6.68 \times 10^{8}$  \\ \hline
3  & $6.33 \times 10^{33}$ & $1.78 \times 10^{3}$  & $9.57 \times 10^{25}$ & $2.15 \times 10^{4}$  & $6.61 \times 10^{7}$  \\ \hline
4  & $4.39 \times 10^{33}$ & $2.56 \times 10^{3}$  & $1.98 \times 10^{26}$ & $5.41 \times 10^{3}$  & $2.22 \times 10^{7}$  \\ \hline
5  & $1.00 \times 10^{34}$ & $1.12 \times 10^{3}$  & $6.78 \times 10^{26}$ & $2.40 \times 10^{3}$  & $1.48 \times 10^{7}$  \\ \hline
6  & $1.11 \times 10^{34}$ & $1.01 \times 10^{3}$  & $1.14 \times 10^{27}$ & $1.36 \times 10^{3}$  & $9.76 \times 10^{6}$  \\ \hline
7  & $2.60 \times 10^{34}$ & $4.33 \times 10^{2}$  & $3.06 \times 10^{27}$ & $9.22 \times 10^{2}$  & $8.50 \times 10^{6}$  \\ \hline
8  & $5.90 \times 10^{34}$ & $1.91 \times 10^{2}$  & $7.61 \times 10^{27}$ & $6.88 \times 10^{2}$  & $7.75 \times 10^{6}$  \\ \hline
9  & $1.33 \times 10^{35}$ & $84.69 $  & $1.82 \times 10^{28}$ & $5.47 \times 10^{2}$  & $7.29 \times 10^{6}$  \\ \hline
10 & $4.92 \times 10^{35}$ & $22.87$  & $6.48 \times 10^{28}$ & $4.63 \times 10^{2}$  & $7.59 \times 10^{6}$  \\
\hline
\end{tabular}
    \end{subfigure}

    \vspace{0.5cm} 
    \begin{subfigure}[t]{0.9\textwidth} 
        \centering
        \begin{overpic}[scale=0.6,grid=false,tics=10]{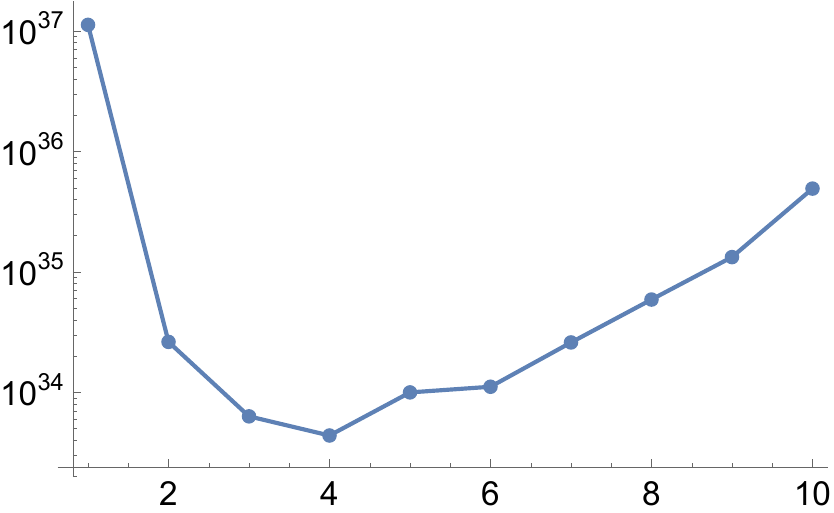}
            \put(50, -2){\makebox(0,0){\text{Runge-Kutta method order} p}} 
            \put(-5,33){\rotatebox{90}{\makebox(0,0){$N_{circ}(p)$}}} 
        \end{overpic}
    \end{subfigure}
\vspace{0.1cm}
    \caption{Comparison of $N_{circ}$ for different RKM orders, where the estimated parameters are fine-tuned to maximize the savings by using a higher order $p$ as given in Table~\ref{tab:estimates maximize gain}. In the first column, we have the Runge-Kutta method order $p$. In the second column is $N_{circ}$ of an algorithm that uses a Runge-Kutta method of order $p$ and in the third column the ratio of $N_{circ}$ of an algorithm that uses the Euler method $(p=1)$ with $N_{circ}$ of an algorithm that uses a Runge-Kutta method of order $p$. In the third, fourth, and fifth columns, we show the number of shots per circuit $N_{r}^{(\delta)}$, the number of time steps $N_{\tau}^{(\delta)}$, and the number of circuits. In the plot, we show $N_{circ}(p)$ plotted against $p$. The total number of circuit evaluations $N_{circ}$ is calculated according to Eq.~\eqref{eq: cost function with nvfactor}.}
    \label{fig:costdifferentp maximize gain}
\end{figure}

\begin{table}[htbp]
    \centering
    \begin{tabular}{|c|c|}
        \hline
        \textbf{Parameter} & \textbf{Estimate} \\
        \hline
        \( a_{max}\) & \(  1 \) \\
        \hline
        \( b_{max}\) & \( 0.5 \) \\
        \hline
        \( L_{fy} \) & \( 0.1 \) \\
        \hline
        \( T \) & \( 4 \) \\
        \hline
        \( K \) & \(  20 \) \\
        \hline
        \( L_{f\tau} \) & \(   15 \) \\
        \hline
        \( M \) & \(   60 \) \\
        \hline
        \( S \) & \(   1 \) \\
        \hline
        \( \Sigma \) & \(   3.4\times 10^8 \)\\
        \hline
         \( \epsilon_{target} \) & \(   0.001 \)\\
        \hline
    \end{tabular}
    \caption{Estimates of various parameters that optimize the savings by using a higher-order Runge-Kutta method instead of the Euler method. Used for the analysis in  Fig.~\ref{fig:costdifferentp maximize gain}.}
    \label{tab:estimates maximize gain}
\end{table}

\section{Conclusions}
\label{sec:conclusion}
In this work, we developed error and resource estimates of variational algorithms for solving differential equations based on Runge-Kutta methods. In particular, the estimates depend on different parameters that come from both the chosen method and the differential equation at hand, and depend on both the error from the Runge-Kutta method and the error that comes from shot noise in the noisy evaluations of the differential function. 
Additionally, we performed numerical simulations of the minimal resources required for both solving a simple ODE by using Runge-Kutta methods without shot noise and for solving a variational algorithm applied to a linear PDE from option pricing.
This shows that our method is not restricted to ODEs but can also be applied to solving partial or stochastic differential equations. Further, the algorithms we analyzed are based on the variational quantum algorithms for solving real and imaginary time evolution \cite{li2017efficient,mcardle2019variational}; our analysis can therefore be directly applied to them.
\par
Our results suggest that depending on certain parameters, such as the difference between initial and final time, Lipschitz constants of the differential function, and target error, the resource requirements change drastically. 
\par
Further, we show that for the particular ODE that we solve in the case without shot noise and a chosen target error of at most $\epsilon_{target}=0.001$, a fourth-order Runge-Kutta method is the most resource-efficient, while for the option pricing use case solved by a variational quantum algorithm and the same target error of $\epsilon_{target}=0.001$, the most resource-efficient method is a second-order Runge-Kutta method. 
By choosing these methods, we can minimize the total cost by a factor of $2.22\times 10^3$ for the first case and the total number of quantum circuit evaluations by a factor of  $13.18$ for the option pricing use case. 
For solving the option pricing PDE, even when using a Runge-Kutta method with the order of $p=2$, we estimated that the algorithm needs at least $1.62\times 10^{28}$ evaluations of quantum circuits in order to compute the option price at final time with a maximum error of $\epsilon_{target}=0.001$ in trace distance. 
In practical scenarios, the state has to be read out, and therefore the total resource requirements will be even higher. 
\par However, our results are worst-case estimates that guarantee staying within the target error, while in practice, one might expect to achieve them with lower resource requirements for sufficiently well-behaved functions. 
In particular, Lemma~\ref{lemma:liubovs sensitivity} gives an upper bound for errors induced by solving linear systems. 
But even for ill-conditioned problems, a smaller error is possible in practice (see for example Ref.~\cite{datta2004numerical}).
Also, the estimates in Secs.~\ref{sec:estimations on toy model} and~\ref{sec: numerics} are upper bounds that can in general be tightened for specific problems at hand.
 
The PDE chosen in our use case, the Black-Scholes equation, can be solved analytically, so only more complicated dynamics are of practical interest. 
Considering this and the fact that a requirement of at least $3.87\times 10^{22}$ evaluations for each of the $4.19\times 10^{5}$ quantum circuits (of which only $10^3$ can be run in parallel, see Sec.~\ref{sec: Variational Quantum Simulator}) is unrealistic on current quantum computing platforms, our results are rather pessimistic for practical implementations of this algorithm in option pricing. 
As discussed in the previous section, this high requirement of quantum circuit evaluations mainly stems from the error-propagation of inverting a matrix of estimated observables.
This leads us to question the general feasibility of the underlying quantum algorithm.
However, related approaches as proposed in Refs.~\cite{barison2021efficient} and~\cite{benedetti2021hardware} which do not require matrix inversion, might therefore require a much lower number of quantum circuit evaluations.
\par 
We show that by tuning the parameters in the variational algorithm and therefore reaching realms outside of this use case, we can get a saving factor of  $2.56 \times 10^{3}$ by choosing a Runge-Kutta method of order $p=4$ instead of the Euler method. This suggests that after careful analysis of the parameters, one can choose a Runge-Kutta method that minimizes the resource requirements.
\par 
Our analysis had a number of limitations, which however likely do not make matters better. We do not analyze possible stability issues of the differential equations, because it proves very challenging to include them in the estimation of the variational algorithm and the option pricing use case. However, for a full picture of the error and resource analysis, the stability of methods and DEs has to be considered.
We neglected possible representation errors that stem from the quantum circuits not being able to approximate a state that encodes the option price and which can capture the dynamics of the variational parameters satisfyingly. In practical scenarios, these have to be taken into account and can possibly be estimated depending on the chosen quantum computing platform (see for example Ref.~\cite{gacon2024variational}). Further, we assumed circuit error such as gate infidelity,
bias and SPAM errors to be negligible, as well as errors introduced to potentially necessary matrix regularization of the matrix defined in Eq.~\eqref{eq: A evaluation}. 
However, as long as when adding those sources of errors, our error bounds (for example in Eq.~\eqref{eq:upper bound f error}) still hold, our results can be applied to these scenarios as well.

\par It might be possible to tighten our estimates in a few different ways. Recent works~\cite{kolotouros2024accelerating,anuar2024operatorprojectedvariationalquantumimaginary} showed how it is possible to decrease the number of state preparations for estimations of the matrices $A$ from scaling as $\Theta(N_V^2)$ to $\Theta(N_V)$ at the cost of accuracy of the matrix entries. Combining these results with our analysis might further decrease the total resource requirements.
Several bounds in our analysis are not tight and could show to be overly pessimistic in real scenarios, like the local truncation error of the Runge-Kutta method (Thm.~\ref{thm:local_truncation_RKE}) and the bounds and estimates on the shot noise (Lemma~\ref{lemma: shot noise_2} and Sec.~\ref{subsubsec: Parameter estimates and Numerical Analysis quantum}).
Further, the resource requirements can possibly be decreased by using linear multistep methods that only use one new stage per time step and reuse evaluations from previous time steps. Adapting our analysis to these methods can give new insights into a comparison of Runge-Kutta methods and linear multistep methods.
\par Since the shot noise error introduced for the analysis in Sec.~\ref{sec: Analysis with Shot Noise Scaling} is Gaussian noise, it might give further insight to formulate the differential equations as stochastic differential equations and to analyze the error and resource requirements within this framework.
\par 
It might be promising to apply this framework to use cases that use Runge-Kutta methods and have similar error sources, such as quantum algorithms for solving other differential equations or to classical algorithms that are using noisy data, which so far has been barely examined. Also, it might be possible to apply this analysis to the training of neural networks ~\cite{xie2020linear}.

\section{Acknowledgements}
The authors thank Patrick Emonts, Onur Danacı, Hao Wang, Xavier Bonet-Monroig and Christa Zoufal for useful discussions during the project.

JT, LM, VD and DD acknowledge the support received by the Dutch National Growth Fund (NGF), as part of the Quantum Delta NL programme.
JT acknowledges the support received from the European Union’s Horizon Europe research and innovation programme through the ERC StG FINE-TEA-SQUAD (Grant No. 101040729).'
This publication is part of the project Divide \& Quantum (with project number 1389.20.241) of the research programme NWA-ORC which is (partly) financed by the Dutch Research Council (NWO).
VD acknowledges by the Dutch Research Council
(NWO/OCW), as part of the Quantum Software Consortium programme (project number 024.003.037).
VD acknowledges co-funding by the European Union (ERC CoG, BeMAIQuantum, 101124342).
LM was  supported by the Netherlands Organisation for Scientific Research (NWO/OCW), as part of the Quantum Software Consortium program (project number 024.003.037 / 3368).

The views and opinions expressed here are solely those of the authors and do not necessarily reflect those of the funding institutions. Neither of the funding institution can be held responsible for them. 
\appendix

\section{Proof of Theorems~\ref{thm:classicalODEerror} and~\ref{thm:quantumODEerror}}
\label{Ax:Analysis with shot noise}
\begin{proof}[Proof]
Let us denote the RKM calculated ${\bm{\theta}}_n$ with an error-carrying $\hat{f}$ at time step $n$ as
$\hat{\bm{\theta}}_n$.
Calculating the LTE in $\hat{\bm{\theta}}_n$, we assume that it is calculated from a noiseless $\bm{\theta}(\tau_{n-1})$. For an $s$-stage RKM $\hat{\bm{\theta}}_n$ is calculated by:
\begin{align}
    \hat{\bm{\theta}}_n=\bm{\theta}(\tau_{n-1})+\Delta \tau\sum_{i=1}^{s}b_i\hat{k}_i\left(\tau_{n-1};\bm{\theta}(\tau_{n-1});\{\hat{k}_m\}_{m=1}^{i-1}\right)\ ,
\end{align}
where we write
\begin{align}
&\hat{k}_1(\tau_{n-1};\bm{\theta}(\tau_{n-1})):=\hat{f}\left(\tau_{n-1};\bm{\theta}(\tau_{n-1})\right),\quad\text{and}\\
    &\hat{k}_i\left(\tau_{n-1};\bm{\theta}(\tau_{n-1});\{\hat{k}_m\}_{m=1}^{i-1}\right):=\hat{f}\left(\tau_{n-1}+c_i \Delta\tau;\bm{\theta}(\tau_{n-1})+\Delta\tau\sum_{m=1}^{i-1}a_{i,m}\hat{k}_m \right)\quad\text{for all } i\geq 2,
\end{align}
and where we used the abbreviation $\hat{k}_m$ by dropping the arguments.

Thus, the LTE in the presence of noise is:
\begin{align}
    \hat{\ell_n}:=\bm{\theta}(\tau_n)-\hat{\bm{\theta}}_n=\bm{\theta}(\tau_n)-\bm{\theta}(\tau_{n-1})-\Delta \tau\sum_{i=1}^{s}b_i\hat{k}_i\left(\tau_{n-1};\bm{\theta}(\tau_{n-1});\{\hat{k}_m\}_{m=1}^{i-1}\right)
\end{align}
We can immediately write
\begin{align}
    \hat{\ell}_n&\leq \bm{\theta}(\tau_n)-\bm{\theta}_n+\bm{\theta}_n-\hat{\bm{\theta}}_n
    \leq \ell_n+\bm{\theta}_n-\hat{\bm{\theta}}_n\ ,
\end{align}
where the noiseless LTE $\ell_n$ can be upper bounded by the bound in Thm.~\ref{thm:local_truncation_RKE}.
\par Let us define the global truncation error of an RKM with $s$ stages and $n$ time steps in the presence of noise as:
\begin{align}
    \hat{e}_n:= 
    \bm{\theta}(\tau_n)-\hat{\bm{\theta}}_0-\Delta \tau\sum_{r=1}^n\sum_{i=1}^{s}b_i \doublehat{k_i}\left(\tau_{r-1}; \hat{\bm{\theta}}_{r-1}; \{\doublehat{k}_m\}_{m=1}^{i-1}\right)\ ,
\end{align}
where we write \begin{align}
&\doublehat{k}_1\left(\tau_{r-1}; \hat{\bm{\theta}}_{r-1}\right):=\doublehat{f}\left(\tau_{r-1}; \hat{\bm{\theta}}_{r-1}\right),\quad\text{and}\\
    &\doublehat{k_i}\left(\tau_{r-1}; \hat{\bm{\theta}}_{r-1}; \{\doublehat{k}_m\}_{m=1}^{i-1}\right):=\doublehat{f}\left(\tau_{r-1}+c_i \Delta\tau;\hat{\bm{\theta}}_{r-1}+\Delta\tau\sum_{m=1}^{i-1}a_{i,m}\doublehat{k}_m \right)\quad\text{for all } i\geq 2,
    \end{align}
    and again use the abbreviation $\doublehat{k}_m$ by dropping the arguments. We write the superscript $\doublehat{f}$ instead of $\hat{f}$ in order to distinguish the noisy evaluation with noisy inputs with the noisy evaluation with noiseless inputs. They both however carry the error $\delta$ compared to the noise-free case, as written in Eq.~\eqref{eq: error in evaluating f}.
Recursively, we get:
\begin{align}
    \hat{e}_{n+1}-\hat{e}_n
    =\hat{\ell}_{n+1}+\Delta \tau\sum_{i=1}^{s}b_i\left( \hat{k}_i\left(\tau_{n};\bm{\theta}(\tau_{n});\{\hat{k}_m\}_{m=1}^{i-1}\right)-\doublehat{k_i}\left(\tau_{n}; \hat{\bm{\theta}}_{n}; \{\doublehat{k}_m\}_{m=1}^{i-1}\right)\right)\ .
\end{align}
Taking the absolute values from the latter equation, we get
\begin{align}
    \label{eq:generalerrorestimatewithshotnoise}
    \|\hat{e}_{n+1}\|\leq\|\hat{e}_{n}\|+\|\hat{\ell}_{n+1}\|+\Delta \tau\sum_{i=1}^{s}|b_i|\left\|\hat{k}_i\left(\tau_{n};\bm{\theta}(\tau_{n});\{\hat{k}_m\}_{m=1}^{i-1}\right)-\doublehat{k_i}\left(\tau_{n}; \hat{\bm{\theta}}_{n}; \{\doublehat{k}_m\}_{m=1}^{i-1}\right)\right\|\  .
\end{align}
We assume that the function $f(\tau,\bm{\theta}(\tau))$  satisfies the Lipschitz condition:
\begin{align}\label{1149}
    \left\|f(\tau_n,\bm{\theta}(\tau_n))-f(\tau_n,\hat{\bm{\theta}}_n)\right\|
    \leq & L_{fy}\|\bm{\theta}(\tau_n)-\hat{\bm{\theta}}_n\|\ , \end{align} and that the noise evaluations are upper bounded by
    \begin{align}
        \|\hat{k}_i-k_i\|\leq \delta\\
        \|\doublehat{k}_i-k_i\|\leq \delta\ .
    \end{align}
    We introduce the notation
\begin{align}
    S_i:= \left\|\hat{k}_i\left(\tau_{n};\bm{\theta}(\tau_{n});\{\hat{k}_m\}_{m=1}^{i-1}\right)-\doublehat{k_i}\left(\tau_{n}; \hat{\bm{\theta}}_{n}; \{\doublehat{k}_m\}_{m=1}^{i-1}\right)\right\|.
    \end{align}
Using the triangle inequality, we get
\begin{align*}
    S_1=\left\|\hat{k}_1\left(\tau_{n};\bm{\theta}(\tau_{n}))\right)-\doublehat{k}_1\left(\tau_{n}; \hat{\bm{\theta}}_{n}\right)\right\|
    &\leq  \left\|\hat{k}_1\left(\tau_{n};\bm{\theta}(\tau_{n}))\right)-k_1\left(\tau_{n};\bm{\theta}(\tau_{n}))\right)\right\|
+\left\|k_1\left(\tau_{n}; \hat{\bm{\theta}}_{n}\right)-\doublehat{k}_1\left(\tau_{n}; \hat{\bm{\theta}}_{n}\right)\right\|\\
&+\left\|k_1\left(\tau_{n};\bm{\theta}(\tau_{n}))\right)-k_1\left(\tau_{n}; \hat{\bm{\theta}}_{n}\right)\right\|
    \leq  
    2\delta+L_{fy}\|\hat{e}_{n}\|.
\end{align*}

Again using the triangle inequality, we get for $S_i$, $i\geq 2$:
\begin{align}
    &S_i= \left\|\hat{k}_i\left(\tau_{n};\bm{\theta}(\tau_{n});\{\hat{k}_m\}_{m=1}^{i-1}\right)-\doublehat{k_i}\left(\tau_{n}; \hat{\bm{\theta}}_{n}; \{\doublehat{k}_m\}_{m=1}^{i-1}\right)\right\|\\
    \leq&\left\|\hat{k}_i\left(\tau_{n};\bm{\theta}(\tau_{n});\{\hat{k}_m\}_{m=1}^{i-1}\right)-k_i\left(\tau_{n};\bm{\theta}(\tau_{n});\{\hat{k}_m\}_{m=1}^{i-1}\right)\right\|\\
    &+\left\|k_i\left(\tau_{n}; \hat{\bm{\theta}}_{n}; \{\doublehat{k}_m\}_{m=1}^{i-1}\right)-\doublehat{k_i}\left(\tau_{n}; \hat{\bm{\theta}}_{n}; \{\doublehat{k}_m\}_{m=1}^{i-1}\right)\right\|\\
    &+\left\|k_i\left(\tau_{n};\bm{\theta}(\tau_{n});\{\hat{k}_m\}_{m=1}^{i-1}\right)-k_i\left(\tau_{n}; \hat{\bm{\theta}}_{n}; \{\doublehat{k}_m\}_{m=1}^{i-1}\right)\right\|\ \\
    \leq & 2\delta+L_{fy}\left\|\bm{\theta}(\tau_n)-\hat{\bm{\theta}}_n\right\|+\Delta \tau\sum_{m=1}^{i-1}L_{fy} |a_{im}| \left\|\hat{k}_m-\doublehat{k}_m\right\|\\
    \leq & 2\delta+L_{fy}\|\hat{e}_{n}\|+\Delta \tau\sum_{m=1}^{i-1}L_{fy} |a_{im}| S_m\\
    \leq & S_1+\Delta \tau\sum_{m=1}^{i-1}L_{fy} |a_{im}| S_m\ .
\end{align}
Then, we can write the upper bound of the error of the $s$ order RKM as
\begin{align}\label{eq:generalordermethodglobalestimat}
\|\hat{e}_{n+1}\|&\leq
\|\hat{e}_{n}\|+\|\hat{\ell}_{n+1}\|+\Delta \tau\sum_{i=1}^{s}|b_i|S_i\ ,
\end{align}
where from the above analysis, we obtained the following recursion:
\begin{align}
    S_1&\leq 2\delta+L_{fy}\|\hat{e}_{n}\|\\
    S_{i}&\leq S_1+ \Delta \tau\sum_{m=1}^{i-1}L_{fy}|a_{im}|S_m,\quad \forall i\geq 2.
\end{align}
Let us use that $|a_{ij}|$ is upper bounded with $a_{max}:=\max_{i,j}|a_{i,j}|$. Then we can write:
\begin{align}
    S_1&\leq 2\delta+L_{fy}\|\hat{e}_{n}\|\\
    S_i&\leq S_1+ \Delta \tau L_{fy}a_{max}\sum_{m=1}^{i-1}S_m,\quad \forall i\geq 2.
\end{align}
A proof by induction shows that for all $i\geq 1$,
\begin{align}
    S_i&\leq S_1\left(1+\Delta \tau L_{fy}a_{max}\right)^{i-1}.
\end{align}
This result together with Eq.~\eqref{eq:generalordermethodglobalestimat} gives us the error estimate:
\begin{align}
    \|\hat{e}_{n+1}\|
    &\leq \|\hat{e}_{n}\|+\|\hat{\ell}_{n+1}\|+\Delta \tau S_1\sum_{i=1}^{s}|b_i| \left(1+\Delta \tau L_{fy}a_{max}\right)^{i-1}\\
    &\leq \|\hat{e}_{n}\|+\|\hat{\ell}_{n+1}\|+\Delta \tau S_1b_{max}\frac{\left(1+\Delta \tau L_{fy}a_{max}\right)^{s}-1}{L_{fy}\Delta \tau a_{max}}\\
&\leq\|\hat{e}_{n}\|+\left(\max_n\|\hat{\ell}_{n}\|+\left(2\delta+L_{fy}\|\hat{e}_{n}\|\right)b_{max}\frac{\left(1+\Delta \tau L_{fy}a_{max}\right)^{s}-1}{L_{fy} a_{max}}\right) \\
    &\leq \alpha\|\hat{e}_{n}\|+\beta, \quad\text{for all} s\geq 1,
    \end{align}
    where we used the notation
    \begin{align}
    &\alpha =1+F(n+1,s),\quad \beta = \frac{2\delta}{L_{fy}}F(n+1,s) + \max_n\|\hat{\ell}_{n}\|,\\
&    F(n+1,s):=  \frac{b_{max}}{a_{max}} \left(\left(1+ \frac{\Theta}{n+1}\right)^{s}-1\right),\quad \Theta=L_{fy}a_{max}T,
\end{align}
an the fact that $\Delta \tau =T/(n+1)$ holds. 
. Because $\|\hat{e}_{0}\|= 0$, 
 we get
\begin{align}\label{1529}
    \|\hat{e}_{n+1}\|&\leq \frac{\alpha^{n+1}-1}{\alpha-1}\beta.
\end{align}
The error estimate of the LTE goes analogous to the error estimate of the global truncation error. We can write 
\begin{align}
    \|\hat{\ell}_n\|\leq\|\ell_n\|+\Delta \tau\sum_{i=1}^s|b_i|T_i 
\end{align}
with the recursive relation
\begin{align}
    T_1:=\delta,\quad 
    T_i:=T_1+ \Delta \tau\sum_{m=1}^{i-1}L_{fy}|a_{im}|T_m.
\end{align}
Thus, we get the bounds 
\begin{align}
&\|\hat{\ell}_n\|\leq\|\ell_n\|+ \frac{\delta}{ L_{fy}} F(n+1,s),\quad\text{for }s\geq1.
\end{align}
Using these upper bounds, we can rewrite Eq.~\eqref{1529} as follows
\begin{align}
    \|\hat{e}_{n+1}\|&\leq \frac{\alpha^{n+1}-1}{\alpha-1}\beta^* \ ,
\end{align}
where
\begin{align}
    \beta^* = \frac{3\delta}{ L_{fy}} F(n+1,s) + \max_{n}\|\ell_{n}\|
\end{align}
According to the Theorem~\ref{thm:local_truncation_RKE}, we can bound the LTE $\ell_n$ and get after $N_t:=n+1$ steps the following error:
\begin{align}
    \|\hat{e}_{N_{\tau}}\|&\leq 
\frac{(1+ F(N_{\tau},s))^{N_{\tau}}-1}{F(N_{\tau},s)}\left(\frac{3\delta}{Lfy}F(N_{\tau},s)+\left(\frac{T}{N_{\tau}}\right)^{p+1}  K  L^p_{f\tau}   M\right).     
\end{align}
That ends the proof for Thm.~\ref{thm:quantumODEerror}, for which $f(\tau,\bm{\theta}(\tau))=f(\bm{\theta}(\tau))$.
One can see that noise free case corresponds to $\delta=0$, for which $y(\tau_n)$ corresponds to $\bm{\theta}(\tau_{n})$ and $y_n$ to $\bm{\theta}_n$, which proofs Thm.~\ref{thm:classicalODEerror}.
\end{proof}

\section{Proof of Theorem~\ref{thm:classicalODEresources}}
\label{app:proof of classical resources}
\begin{proof}[Proof]

Let us denote the upper bound of the noise free $\epsilon_{ODE}^{(\delta)}$ as
 \begin{align}\label{1232}
  \epsilon_{target}^{(0)}:= \frac{(1+F(N_{\tau},s))^{N_{\tau}}-1}{F(N_{\tau},s)}\left(\frac{T}{N_{\tau}}\right)^{p+1}  K  L^p_{f\tau}   M \ ,
\end{align}

To find $N_{\tau}$ we can solve the latter equation numerically. 
However, under the reasonable assumption that $\Theta<<N_{\tau}$, we can use the following approximations
\begin{align}
    &\left(\frac{\Theta}{N_{\tau}}+1\right)^{s}= 1+\frac{s \Theta}{N_{\tau}}+O\left(\frac{\Theta^2}{N_{\tau}^2}\right),\\
    &\left(\frac{b_{max} (s \Theta)}{a_{max} N_{\tau}}+1\right)^{N_{\tau}}\approx\exp \left(\frac{b_{max} s \Theta}{a_{max}}\right)\label{1650},\end{align}
    to rewrite  
    \begin{align}
        F(N_{\tau},s)\approx \frac{b_{max}}{a_{max}}\frac{s \Theta}{N_{\tau}}, \quad (1+F(N_{\tau},s))^{N_{\tau}}\approx \exp \left(\frac{b_{max} s \Theta}{a_{max}}\right)
    \end{align}
    Using this, we rewrite    Eq.~\eqref{1232} in a way to obtain an approximate solution  $N_{\tau}^{(0)}$ given by Eq.~\eqref{1245}. 
\end{proof}

\section{Proof of Theorem~\ref{thm:quantumODEresources}}
\label{appendix:proof shot noise resources}
\begin{proof}[Proof]
We model the shot noise with single shot variance $\Sigma$ and the number of shots $N_r$, so  $ \delta=\Sigma/\sqrt{N_r}$ holds. Solving the resulting Eq.~\eqref{eq:bound on target error for quantum resource thm}
 for $N_r$ we get:
\begin{align}
    N_{r}^{(\delta)}:= \frac{9\Sigma^2}{L_{fy}^2}   
   \left(\frac{\epsilon_{target}^{(\delta)}}{\left(1+ F(N_{\tau},s)\right)^{N_{\tau}}-1}-\left(\frac{T}{N_{\tau}}\right)^{p+1} \frac{KL^p_{f\tau}  M}{F(N_{\tau},s)}\right)^{-2}\ ,
\end{align}

Substituting it in the cost function~\eqref{eq:cost function with shot noise}  we get:
\begin{align}\label{eq:costfunction}
    \mathcal{C}(N_{\tau}, N_{r}^{(\delta)},s,p)
    &=\frac{9\Sigma^2p}{L_{fy}^2}   N_{\tau}^{2p+3}
\left(\frac{F(N_{\tau},s)(\left(1+ F(N_{\tau},s)\right)^{N_{\tau}}-1)}{\epsilon_{target}^{(\delta)}F(N_{\tau},s)N_{\tau}^{p+1}-T^{p+1} KL^p_{f\tau}  M (\left(1+ F(N_{\tau},s)\right)^{N_{\tau}}-1)}\right)^{2},
\end{align}
where 
\begin{align}
    \epsilon_{target}^{(\delta)}\neq \frac{T^{p+1} KL^p_{f\tau}  M (\left(1+ F(N_{\tau},s)\right)^{N_{\tau}}-1)}{F(N_{\tau},s)N_{\tau}^{p+1}}= \epsilon_{target}^{(0)},
\end{align}
that is always true.
We want to find the optimal value for $N_{\tau}$ that minimizes the latter cost. To this end, we set the derivative of the cost function with respect to $N_{\tau}$  to zero. The equation we get in nonlinear and can be solved numerically.

However, we can use the approximation in Eq.~\eqref{1650}
and the expression
\begin{align}
    N_{\tau}\log \left(\frac{b_{max}s \Theta}{a_{max}N_{\tau}}+1\right)=\log \left(\frac{b_{max}s \Theta}{a_{max}N_{\tau}}+1\right)^{N_{\tau}}\approx \left(\frac{b_{max}s \Theta}{a_{max}}\right).
\end{align}
Those two approximations give the following simplified equation
\begin{align}
&\frac{\epsilon_{target}^{(\delta)}\left(a_{max}N_{\tau} (N_{\tau}+\Theta)-b_{max}s \Theta\left(2 N_{\tau}^2 e^{\frac{b_{max}s \Theta}{a_{max}}}+N_{\tau} \left(2 s \Theta e^{\frac{b_{max}s \Theta}{a_{max}}}-1\right)-\Theta\right)\right)}{(N_{\tau}+\Theta) (a_{max}N_{\tau}+b_{max}s \Theta) \left(e^{\frac{b_{max}s \Theta}{a_{max}}}-1\right)}\\
&+\frac{2 \epsilon_{target}^{(\delta)}(b_{max}s \Theta) e^{\frac{b_{max}s \Theta}{a_{max}}}}{a_{max}\left(e^{\frac{b_{max}s \Theta}{a_{max}}}-1\right)}-\frac{a_{max}K M L_{f\tau}^p \left(\frac{T}{N_{\tau}}\right)^{p+1} (2 N_{\tau} p+N_{\tau}-2 (s-1) \Theta)}{b_{max}s \Theta}=0\ .
\end{align}
Neglecting  some of the terms due to the fact that $\Theta<<N_{\tau}$
we simplify the equation to the form that it can be solved analytically:
\begin{align} 
    N_{\tau}^{(\delta)}=T L_{f\tau}\left(\frac{\left(e^{b_{max}sL_{fy}T}-1\right)KM(2p+1)}{sb_{max}L_{fy}\epsilon_{target}^{(\delta)}}\right)^{1/p}.
\end{align}
That ends the proof.
\end{proof}

\section{McLachlan's variational principle}
\label{appendix: McLachlan's principle}
\begin{thm}
    McLachlan's variational principle ~\cite{mclachlan1964variational} applied to the imaginary time evolution of a state $\ket{\psi(\bm{\theta}(\tau))}$, is given by:
    \begin{align}
        \delta\|(d/d\tau + \mathcal{H})\ket{\psi(\theta(\tau))}\|=0\ . \label{eq:McLachlan}
    \end{align}
    Assuming ${\theta}$ to be real, it is solved by the following ordinary differential equations ~\cite{yuan2019theory}:
    \begin{align}
        \sum_jA_{ij}\dot{\theta_j}=C_{i} , \label{eq:parameter_ODE}
    \end{align}
    where the matrix elements are
    \begin{align}
    A_{ij}=\mathfrak{Re}\left(\frac{\partial\bra{\psi(\theta(\tau))}}{\partial\theta_i}\frac{\partial\ket{\psi(\theta(\tau))}}{\partial\theta_j}\right),\quad \text{ and} \quad 
        C_i=\mathfrak{Re}\left(-\frac{\partial\bra{\psi(\theta(\tau))}}{\partial\theta_i}\mathcal{H}\ket{\psi(\theta(\tau))}\right) .\label{eq:C}
    \end{align}
    \end{thm}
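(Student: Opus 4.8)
The plan is to read McLachlan's principle as an instantaneous least-squares condition on the velocities $\dot{\theta}_j$ and to recover Eq.~\eqref{eq:parameter_ODE} by differentiation. First I would make the variation concrete: by the chain rule the time derivative of the trial state is $d\ket{\psi(\theta(\tau))}/d\tau = \sum_j (\partial\ket{\psi}/\partial\theta_j)\,\dot{\theta}_j$, so the vector whose norm appears in Eq.~\eqref{eq:McLachlan} is $\ket{\chi} := (d/d\tau + \mathcal{H})\ket{\psi} = \sum_j (\partial\ket{\psi}/\partial\theta_j)\dot{\theta}_j + \mathcal{H}\ket{\psi}$. The symbol $\delta$ is understood as a variation over the real velocities $\dot{\theta}_j$ at a fixed instant, with $\theta$ --- and hence $\ket{\psi}$, its derivatives, and $\mathcal{H}$ --- held fixed. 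Since the norm is nonnegative, a stationary point of $\|\ket{\chi}\|$ coincides with a stationary point of $\|\ket{\chi}\|^2 = \langle\chi|\chi\rangle$ away from the zero locus, so I would work with the differentiable quantity $\langle\chi|\chi\rangle$.

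Next I would expand $\langle\chi|\chi\rangle$ into a quadratic form in $\dot\theta$, a set of cross terms, and a constant. Writing $\ket{\psi_i}$ for $\partial\ket{\psi}/\partial\theta_i$, the expansion is $\sum_{i,j}\dot\theta_i\dot\theta_j\langle\psi_i|\psi_j\rangle + \sum_i\dot\theta_i\big(\langle\psi_i|\mathcal{H}|\psi\rangle + \langle\psi|\mathcal{H}|\psi_i\rangle\big) + \langle\psi|\mathcal{H}^2|\psi\rangle$. Two structural facts do the work here. The overlap (metric) matrix $\langle\psi_i|\psi_j\rangle$ is Hermitian, so contracting it with the real vector $\dot\theta$ retains only its real part, giving $\sum_{i,j}\dot\theta_i\dot\theta_j A_{ij}$ with $A_{ij} = \mathfrak{Re}(\langle\psi_i|\psi_j\rangle)$ as in Eq.~\eqref{eq:C}. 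Because $\mathcal{H}$ is Hermitian, $\langle\psi|\mathcal{H}|\psi_i\rangle = \overline{\langle\psi_i|\mathcal{H}|\psi\rangle}$, so the cross terms collapse to $2\sum_i\dot\theta_i\,\mathfrak{Re}(\langle\psi_i|\mathcal{H}|\psi\rangle)$, while the last term is independent of $\dot\theta$.

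Then I would differentiate with respect to $\dot\theta_k$ and set the result to zero. Using the symmetry $A_{kj}=A_{jk}$, the quadratic piece contributes $2\sum_j A_{kj}\dot\theta_j$, the cross terms contribute $2\,\mathfrak{Re}(\langle\psi_k|\mathcal{H}|\psi\rangle)$, and the constant drops out; dividing by two yields $\sum_j A_{kj}\dot\theta_j = -\mathfrak{Re}(\langle\psi_k|\mathcal{H}|\psi\rangle) = C_k$, which is exactly Eq.~\eqref{eq:parameter_ODE} with the claimed $A_{ij}$ and $C_i$. I expect the only genuinely delicate point to be the first step: pinning down that $\delta$ varies the velocities (not the state itself), and justifying that the stationarity of the non-smooth norm may be replaced by that of its square. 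Everything after that is a careful accounting of real parts enabled by the Hermiticity of $\mathcal{H}$ and of the metric; one should also note that when Eq.~\eqref{eq: generalized imaginary time schroedinger equation, ket formulation} holds exactly for the trial state, $\ket{\chi}=0$ and the minimization is trivially solved, which is the consistency statement in the first sentence of the theorem.
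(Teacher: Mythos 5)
Your proposal is correct and follows essentially the same route as the paper's proof in Appendix~\ref{appendix: McLachlan's principle}: replace the norm by its square, expand $\|(d/d\tau+\mathcal{H})\ket{\psi}\|^2$ as a quadratic form in the velocities via the chain rule, and set the derivative with respect to $\dot{\theta}_k$ to zero to obtain the normal equations $\sum_j A_{kj}\dot{\theta}_j = C_k$. The only (cosmetic) difference is that you use Hermiticity of the metric and of $\mathcal{H}$ to reduce to real parts before differentiating, whereas the paper differentiates the complex expression and symmetrizes afterwards; your explicit justification of passing from the norm to its square away from the zero locus is in fact slightly more careful than the paper's one-line remark.
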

    \begin{proof}
    Note that the variation in Eq.~\eqref{eq:McLachlan} is equivalent to the variation of its square $\delta\|(d/d\tau+\mathcal{H})\ket{\psi(\theta(\tau))}\|^2=0$. 
    The derivative can be written as follows
    \begin{align}
      \frac{\partial \ket{\psi(\theta(\tau))}}{\partial \tau}=  \frac{\partial \ket{\psi(\theta(\tau))}}{\partial \theta_i}\frac{\partial \theta_i}{\partial\tau}=\frac{\partial \ket{\psi(\theta(\tau))}}{\partial \theta_i}\dot{\theta}_i.
    \end{align}
    Next, we expand the following:
    \begin{align}
        \|(d/d\tau+\mathcal{H})\ket{\psi(\theta(\tau))}\|^2   =&\big((d/d\tau+\mathcal{H})\ket{\psi(\theta(\tau))}\big)^{\dagger}\big((d/d\tau+\mathcal{H})\ket{\psi(\theta(\tau))}\big)\\\nonumber
        =&\sum_{ij}\frac{\partial\bra{\psi(\theta(\tau))}}{\partial\theta_i}\frac{\partial\ket{\psi(\theta(\tau))}}{\partial\theta_j}\dot{\theta_i^*}\dot{\theta_j}+\sum_{i}\frac{\partial\bra{\psi(\theta(\tau))}}{\partial\theta_i}(\mathcal{H})\ket{\psi(\theta(\tau))}\dot{\theta^*_i} \\\nonumber
        &+\sum_{i}\bra{\psi(\theta(\tau))}\mathcal{H}\frac{\partial\ket{\psi(\theta(\tau))}}{\partial\theta_i}\dot{\theta_i}+\bra{\psi(\theta(\tau))}\mathcal{H}^2\ket{\psi(\theta(\tau))}\ .
    \end{align}
    Thus, the variation of this term with respect to $\dot{\theta_i}$ yields:
    \begin{align}
        \delta\|(d/d\tau+\mathcal{H})\ket{\psi(\theta(\tau))}\|^2
        =&\left(\sum_{ij}\frac{\partial\bra{\psi(\theta(\tau))}}{\partial\theta_i}\frac{\partial\ket{\psi(\theta(\tau))}}{\partial\theta_j}\dot{\theta_j}+\frac{\partial\bra{\psi(\theta(\tau))}}{\partial\theta_i}\mathcal{H}\ket{\psi(\theta(\tau))}\right)\delta\dot{\theta_i^*} \\\nonumber
        &+\left(\sum_{ij}\frac{\partial\bra{\psi(\theta(\tau))}}{\partial\theta_j}\frac{\partial\ket{\psi(\theta(\tau))}}{\partial\theta_i}\dot{\theta_j^*}+\bra{\psi(\theta(\tau))}\mathcal{H}\frac{\partial\ket{\psi(\theta(\tau))}}{\partial\theta_i}\right)\delta\dot{\theta_i}\ .
    \end{align}
    The variational principle is satisfied if latter equation is equal to zero, hence when following expression holds:
    \begin{align}
        \sum_j\frac{\partial\bra{\psi(\theta(\tau))}}{\partial\theta_i}\frac{\partial\ket{\psi(\theta(\tau))}}{\partial\theta_j}\dot{\theta_j}=-\frac{\partial\bra{\psi(\theta(\tau))}}{\partial\theta_i}\mathcal{H}\ket{\psi(\theta(\tau))} \ .
    \end{align}
    If $\dot{\theta_i}$ is real, the variation changes to 
    \begin{align}
        &\delta\|(d/d\tau+\mathcal{H})\ket{\psi(\theta(\tau))}\|^2\\\nonumber
        =&\sum_j\left(\frac{\partial\bra{\psi(\theta(\tau))}}{\partial\theta_i}\frac{\partial\ket{\psi(\theta(\tau))}}{\partial\theta_j}+\frac{\partial\bra{\psi(\theta(\tau))}}{\partial\theta_j}\frac{\partial\ket{\psi(\theta(\tau))}}{\partial\theta_i}\right)\dot{\theta_j}\delta\dot{\theta_i} \\\nonumber
        &+\left(\frac{\partial\bra{\psi(\theta(\tau))}}{\partial\theta_i}\mathcal{H}\ket{\psi(\theta(\tau))}+\bra{\psi(\theta(\tau))}\mathcal{H}\frac{\partial\ket{\psi(\theta(\tau))}}{\partial\theta_i}\right)\delta\dot{\theta_i}\ .
    \end{align}
    This is equal to zero when the following holds:
    \begin{align}
    &\sum_j\left(\frac{\partial\bra{\psi(\theta(\tau))}}{\partial\theta_i}\frac{\partial\ket{\psi(\theta(\tau))}}{\partial\theta_j}+\frac{\partial\bra{\psi(\theta(\tau))}}{\partial\theta_j}\frac{\partial\ket{\psi(\theta(\tau))}}{\partial\theta_i}\right)\dot{\theta_j} \\\nonumber
        &=-\left(\frac{\partial\bra{\psi(\theta(\tau))}}{\partial\theta_i}\mathcal{H}\ket{\psi(\theta(\tau))}+\bra{\psi(\theta(\tau))}\mathcal{H}\frac{\partial\ket{\psi(\theta(\tau))}}{\partial\theta_i}\right)\ ,
    \end{align}
    which is equivalent to Eq.~\eqref{eq:parameter_ODE}.
    \end{proof}

\section{Shot noise estimates}
\label{appendix:Shot noise estimates}
   \begin{thm}(Shot noise error for evaluating $A$ defined in Eq.~\eqref{eq: A evaluation})
   
    \label{thm:Shot noise error for evaluating A}
    For $N_r$ evaluations of each of the circuits that calculate the matrix elements of $A$, we get the following bound for the probability:
    \begin{align}
        P\left(\|A-\hat{A}_{S}\|<\frac{\|\{\sigma_{k,l}\}_{k,l=1}^{N_V}\|}{\sqrt{N_r  \eta}}\right)> 1-\eta\ ,
    \end{align}
    where $0<\eta\leq 1$ and the elements of the standard deviation matrix are 
    \begin{align}\label{1445}
            \sigma_{k,l}=\sqrt{\sum_{i,j=1}^{N_d}|f_{k,i}^*f_{l,j}|^2}\ .
        \end{align}
    \end{thm}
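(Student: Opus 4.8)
The plan is to combine a per-element variance bound for the estimator $\hat{A}_{k,l}$ with a single application of Markov's inequality to the squared Frobenius norm of the error matrix $A-\hat{A}_{S}$. Throughout I take $\|\cdot\|$ to be the Frobenius norm, which is the convention used elsewhere in the paper for matrix norms, so that both $\|A-\hat{A}_S\|^2=\sum_{k,l}|A_{k,l}-\hat{A}_{k,l}|^2$ and $\|\{\sigma_{k,l}\}_{k,l=1}^{N_V}\|^2=\sum_{k,l}\sigma_{k,l}^2$ decompose entrywise.

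First I would record that $\hat{A}_{k,l}$ is an \emph{unbiased} estimator of $A_{k,l}$. Writing each summand of Eq.~\eqref{eq: A evaluation} in the canonical form $q=a\,\mathrm{Re}(e^{i\zeta}\bra{0}^{\otimes n}U\ket{0}^{\otimes n})$ of Eq.~\eqref{eq: general form A and C evaluation}, the quantity $\mathrm{Re}(e^{i\zeta}\bra{0}^{\otimes n}U\ket{0}^{\otimes n})$ is estimated by $2\hat{P}_{+}-1$, whose expectation equals $\mathrm{Re}(e^{i\zeta}\bra{0}^{\otimes n}U\ket{0}^{\otimes n})$ exactly, so the error $A_{k,l}-\hat{A}_{k,l}$ is mean zero. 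The next step is to bound its variance. Each element is a sum over the $N_d^2$ index pairs $(i,j)$, each measured on a separate circuit with amplitude fixed by $f^*_{k,i}f_{l,j}$; since these circuits do not depend on one another (as noted around Eq.~\eqref{eq: general form A and C evaluation}), their sampling errors are independent and the variances add. Using that each ancilla outcome is Bernoulli with $\sigma^2(\hat{P}_{+})=P_{+}(1-P_{+})/N_r\le 1/(4N_r)$, one arrives at the element-wise bound $\mathrm{Var}(\hat{A}_{k,l})\le \frac{1}{N_r}\sum_{i,j=1}^{N_d}|f^*_{k,i}f_{l,j}|^2=\frac{\sigma_{k,l}^2}{N_r}$, with $\sigma_{k,l}$ as in Eq.~\eqref{1445}.

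Second, I would pass to the matrix level. By linearity of expectation and the element-wise variance bound,
\begin{align}
\mathbb{E}\!\left[\|A-\hat{A}_S\|^2\right]=\sum_{k,l=1}^{N_V}\mathrm{Var}(\hat{A}_{k,l})\le \frac{1}{N_r}\sum_{k,l=1}^{N_V}\sigma_{k,l}^2=\frac{\|\{\sigma_{k,l}\}_{k,l=1}^{N_V}\|^2}{N_r}.
\end{align}
Applying Markov's inequality to the non-negative random variable $\|A-\hat{A}_S\|^2$ at threshold $c$ gives $P(\|A-\hat{A}_S\|^2\ge c)\le \mathbb{E}[\|A-\hat{A}_S\|^2]/c$. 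Choosing $c=\|\{\sigma_{k,l}\}_{k,l=1}^{N_V}\|^2/(N_r\eta)$ forces the right-hand side to be at most $\eta$; taking the complement and a square root then yields $P\big(\|A-\hat{A}_S\|<\|\{\sigma_{k,l}\}_{k,l=1}^{N_V}\|/\sqrt{N_r\eta}\big)>1-\eta$, which is the claim.

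The main obstacle I anticipate is pinning down the numerical constant in the element-wise variance bound: the $+\,h.c.$ in Eq.~\eqref{eq: A evaluation}, the factor introduced by estimating $\mathrm{Re}(e^{i\zeta}\bra{0}^{\otimes n}U\ket{0}^{\otimes n})$ through $2\hat{P}_{+}-1$, and the Bernoulli bound $P_{+}(1-P_{+})\le 1/4$ must be tracked together so that the weighted sum collapses to exactly $\sum_{i,j}|f^*_{k,i}f_{l,j}|^2$ rather than a multiple of it. I would also state the independence of the $N_d^2$ circuits explicitly, since that is precisely what legitimizes summing variances instead of carrying cross-covariance terms; the remaining Markov step is routine.
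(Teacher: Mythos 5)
Your proof is correct and is essentially the paper's own argument: the multi-dimensional Chebyshev inequality that the paper invokes (Appendix~\ref{appendix:Shot noise estimates}) is precisely Markov's inequality applied to the squared Frobenius norm of $A-\hat{A}_S$ together with the entrywise variance bound $\mathrm{Var}(\hat{A}_{k,l})\leq \sigma_{k,l}^2/N_r$, which is exactly the decomposition you carry out explicitly (with the added virtues of stating unbiasedness and the independence of the $N_d^2$ circuits, which the paper leaves implicit). The one caveat you yourself flag --- the constant in the per-element variance --- is resolved in your write-up the same way as in the paper: the factor of $2$ from the $+\,h.c.$ term is absorbed by bounding the single-shot standard deviation of the $\pm 1$-valued outcome by $1$, so your bookkeeping matches the paper's, including its looseness on that constant.
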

    \begin{proof}
    For a random matrix $A$ with finite non-zero variance matrix $\sigma^2$ and expectation values matrix $\hat{A}_{S}$ the multi-dimensional Chebyshev's inequality 
        \begin{align}
            P(\|A-\hat{A}_{S}\|\geq k\|\sigma\|)\leq \frac{1}{k^2}\ ,
        \end{align}
        holds, for any real number $k>0$.
    If we measure each circuit $N_r$ times, the mean value is calculated as
        \begin{align}
            \hat{A}_{S}=\frac{1}{N_r}\sum_{m=1}^{N_r} \hat{A}_m\ ,
        \end{align}
     where each $ \hat{A}_m$ is the matrix calculated by evaluating each circuit one time.   
    The  $\sigma_{k,l}(\hat{A}_m)$ is the standard deviation  for each matrix element of $\hat{A}_m$, and the total standard deviation for each element in $\hat{A}_S$ is given by
        \begin{align}
            \sigma_{k,l}(\hat{A}_S)=\frac{\sigma_{k,l}(\hat{A}_m)}{\sqrt{N_r}}\ .
        \end{align}
        These are the elements that form the matrix $\sigma=\{\sigma_{k,l}(\hat{A}_m)\}^{N_V}_{k,l=1}$.
        Defining $\eta:=1/k^2$, we get therefore
        \begin{align}
           P\left(\|A-\hat{A}_{S}\|<\frac{\|\sigma\|}{\sqrt{N_r  \eta}}\right)> 1-\eta \ .
        \end{align}
        We can further bound the elements of $\sigma$ in the following way. Each matrix element $\sigma_{k,l}(\hat{A}_S)$ is in general  evaluated by several circuits. Let us bound the standard deviation of each single circuit  by the maximum standard deviation $1$, since the eigenvalues of the Pauli-$X$ matrix are $\{1,-1\}$. We obtain:
        \begin{align}
            \sigma_{k,l}(\hat{A}_m)\leq \sigma_{k,l}=\sqrt{\sum\limits_{i,j=1}^{N_d}|f_{k,i}^*f_{l,j}|^2}.
        \end{align}
        That ends the proof.
    \end{proof}

    \begin{thm}(Shot noise error for evaluating $C$ defined in Eq.~\eqref{eq: A evaluation})\\
    \label{thm:Shot noise error for evaluating C}
    For $N_r$ evaluations of each of the circuits that calculate the elements of $C$, we get the following bound:
    \begin{align}
        P\left(\|C-\hat{C}_{S}\|<\frac{\|\{\sigma_{k}\}_{k=1}^{N_V}\|}{\sqrt{N_r  \eta}}\right)> 1-\eta\ ,
    \end{align}
    where $0<\eta\leq 1$ and \begin{align}\label{1444}
            \sigma_{k}=\sqrt{\sum_{i=1}^{N_d}\sum_{m=1}^{N}|f_{i,k}^*\lambda_{m}|^2}\ ,
        \end{align}
        holds.
    \end{thm}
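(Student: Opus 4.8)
The plan is to mirror the proof of Theorem~\ref{thm:Shot noise error for evaluating A} almost verbatim, since the vector $C$ is the exact analog of the matrix $A$. The only structural change is in the final bookkeeping step: each entry $C_k$ is assembled from $N_d N$ circuits, one for each pair $(i,m)$ with $1\le i\le N_d$ and $1\le m\le N$, weighted by the amplitudes $f^*_{i,k}\lambda_m$, rather than from the $N_d^2$ circuits weighted by $f^*_{k,i}f_{l,j}$ that build each $A_{k,l}$.

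First I would invoke the multi-dimensional Chebyshev inequality for the random vector $C$ with estimate $\hat{C}_S$ and entrywise standard-deviation vector $\sigma=\{\sigma_k(\hat{C}_m)\}_{k=1}^{N_V}$, namely $P(\|C-\hat{C}_S\|\ge k\|\sigma\|)\le 1/k^2$ for any $k>0$. Next I would use that $\hat{C}_S$ is the empirical mean of $N_r$ independent single-shot evaluations $\hat{C}_m$, so that each component's standard deviation contracts as
\begin{align}
\sigma_k(\hat{C}_S)=\frac{\sigma_k(\hat{C}_m)}{\sqrt{N_r}}\ .
\end{align}
Setting $\eta:=1/k^2$ then rewrites the Chebyshev bound directly as the claimed statement $P(\|C-\hat{C}_S\|<\|\{\sigma_k\}_{k=1}^{N_V}\|/\sqrt{N_r\eta})>1-\eta$.

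The only $C$-specific work is bounding the single-shot standard deviation $\sigma_k(\hat{C}_m)$ of each component. Every measured quantity has the form of Eq.~\eqref{eq: general form A and C evaluation}, an ancilla expectation whose per-circuit standard deviation is at most $1$ because the Pauli-$X$ readout has eigenvalues $\pm 1$. Propagating this unit bound through the classical multiplication by the amplitudes $f^*_{i,k}\lambda_m$ and adding the independent contributions of the $N_d N$ circuits that constitute $C_k$ yields
\begin{align}
\sigma_k(\hat{C}_m)\le \sigma_k=\sqrt{\sum_{i=1}^{N_d}\sum_{m=1}^{N}|f^*_{i,k}\lambda_m|^2}\ ,
\end{align}
which is exactly Eq.~\eqref{1444} and completes the estimate.

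I do not anticipate a genuine obstacle, as the argument is a direct transcription of the $A$-case. The only points requiring care are the correct enumeration of the circuits and amplitudes attached to each entry $C_k$ (the $N_d N$ terms with weights $f^*_{i,k}\lambda_m$, where the extra factor $N$ over the $A$-case comes from the Pauli strings $\sigma_m$ of the Hamiltonian) and the implicit independence assumption on the single-shot measurements of distinct circuits, under which their variances add so that the worst-case unit per-circuit bound aggregates into the stated $\sigma_k$.
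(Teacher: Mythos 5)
Your proposal is correct and follows essentially the same route as the paper: the paper's proof of Theorem~\ref{thm:Shot noise error for evaluating C} simply defers to the proof of Theorem~\ref{thm:Shot noise error for evaluating A} (multi-dimensional Chebyshev inequality, $1/\sqrt{N_r}$ contraction of the empirical mean, $\eta=1/k^2$), replacing only the single-shot standard-deviation bound by Eq.~\eqref{1444}, exactly as you do. Your explicit remarks about the $N_d N$ circuits with weights $f^*_{i,k}\lambda_m$ and the independence assumption under which variances add in quadrature make precise what the paper leaves implicit, but they do not constitute a different argument.
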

    \begin{proof}
        The proof is similar to the proof of Theorem~\ref{thm:Shot noise error for evaluating A}, but the standard deviation of the matrix elements is bounded by Eq.~\eqref{1444}.
    \end{proof}
    \begin{corollary}
    \label{corollary: shot noise}
    The result of the Theorems~\ref{thm:Shot noise error for evaluating A} and \ref{thm:Shot noise error for evaluating C} are valid for all matrices $A$ and vectors $C$ that are calculated with the circuits in Fig.~\ref{fig:circuit A}, in particular for all possible input parameter $\bm{\theta}$ of the Ansatz in Eq.~\eqref{eq:Ansatz}. Therefore, we can state that with probability of at least $1-\eta$ the following bounds hold for all $\bm{\theta}$:
    \begin{align}\label{eq: lower bound for epsilon A}
          \|A(\bm{\theta})-\hat{A}(\bm{\theta})\|&\leq\frac{\|\{\sigma_{k,l}\}_{k,l=1}^{N_V}\|}{\sqrt{N_r  \eta}}\ ,
    \end{align}
    and
    \begin{align}
     \|C(\bm{\theta})-\hat{C}(\bm{\theta})\|&\leq \frac{\|\{\sigma_{k}\}_{k=1}^{N_V}\|}{\sqrt{N_r  \eta}}\ .
    \end{align}
    \end{corollary}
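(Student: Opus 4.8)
The plan is to reduce the statement to the two preceding theorems by verifying that nothing in their guarantees depends on the particular parameter vector $\bm{\theta}$ at which $A$ and $C$ are evaluated. Concretely, Theorems~\ref{thm:Shot noise error for evaluating A} and~\ref{thm:Shot noise error for evaluating C} already furnish, for a single estimation of $A$ and $C$, the bounds
\begin{align}
    P\!\left(\|A-\hat{A}\|<\frac{\|\{\sigma_{k,l}\}\|}{\sqrt{N_r\eta}}\right)>1-\eta, \qquad P\!\left(\|C-\hat{C}\|<\frac{\|\{\sigma_{k}\}\|}{\sqrt{N_r\eta}}\right)>1-\eta,
\end{align}
so the only work left is to argue that both the confidence level and the right-hand sides are uniform in $\bm{\theta}$.

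First I would observe that inserting a different $\bm{\theta}$ into the Ansatz $R(\tau,\bm{\theta})$ of Eq.~\eqref{eq:Ansatz} changes only the unitary $U$ being probed by the circuits of Fig.~\ref{fig:circuit A}; the sampling procedure, the ancilla readout in the $\{\ket{+},\ket{-}\}$ basis, and the averaging of single-shot results into $\hat{A}$ and $\hat{C}$ are all unchanged. Hence the estimators are constructed identically for every $\bm{\theta}$, and the structure of the argument in the two theorems carries over verbatim.

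Second, and this is the crux, I would isolate the one step in those proofs where the measured state actually enters: each single-shot ancilla measurement is a $\pm 1$-valued random variable whose variance never exceeds $1$, irrespective of the success probability $\hat{P}_{+}$, hence irrespective of $U\ket{0}^{\otimes n}$ and of $\bm{\theta}$. This worst-case per-circuit variance bound of $1$ is therefore uniform in $\bm{\theta}$. Propagating it through the fixed linear coefficients $f^*_{k,i}f_{l,j}$ and $f^*_{k,i}\lambda_m$—which are data of the Ansatz layers in Eq.~\eqref{eq:circuit on one qubit} and of the Hamiltonian decomposition in Eq.~\eqref{eq: hamiltonian}, and hence independent of $\bm{\theta}$—shows that the standard-deviation bounds $\sigma_{k,l}$ and $\sigma_k$ of Eqs.~\eqref{1445} and~\eqref{1444} do not depend on $\bm{\theta}$. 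Combining the two observations, Theorems~\ref{thm:Shot noise error for evaluating A} and~\ref{thm:Shot noise error for evaluating C} apply at every $\bm{\theta}$ with the same right-hand side and the same confidence $1-\eta$, which is the claim.

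The main subtlety I would take care to state precisely is the meaning of \emph{for all} $\bm{\theta}$: a literal simultaneous guarantee over the continuum of parameter values is not what a single application of Chebyshev's inequality delivers, and a naive union bound would fail. What the argument establishes—and what the algorithm analysis actually needs—is that the \emph{bound value} is the same for every $\bm{\theta}$, so that at whichever parameter vector is encountered at a given stage and time step, the estimate obeys the stated bound with probability at least $1-\eta$. I would phrase the conclusion accordingly, emphasizing that the uniformity is in the size of the bound rather than in a single high-probability event covering all $\bm{\theta}$ at once.
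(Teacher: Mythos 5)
Your proposal is correct and matches the paper's own reasoning: the corollary is an immediate consequence of Theorems~\ref{thm:Shot noise error for evaluating A} and~\ref{thm:Shot noise error for evaluating C}, whose per-circuit standard deviation is bounded by the worst-case value $1$ (Pauli eigenvalues $\pm 1$) and whose coefficients $f^*_{k,i}f_{l,j}$ and $f^*_{k,i}\lambda_m$ are independent of $\bm{\theta}$, so the bounds $\sigma_{k,l}$ and $\sigma_k$ in Eqs.~\eqref{1445} and~\eqref{1444} are uniform in $\bm{\theta}$. Your closing caveat that the guarantee is pointwise in $\bm{\theta}$ (a uniform bound value, not a single simultaneous event over the continuum) is a fair and correct sharpening of the paper's looser phrasing, and it is indeed what the downstream resource analysis uses.
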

    
\section{Error bound theorems}
\label{appendix: error bound theorems}
\begin{proof}[Proof of Lemma\ref{lemma:liubovs sensitivity}:]
    Let us identify $\hat{f}$ with $f(\xi)$, the vector under disturbance $\xi$. Accordingly, $f(0)=f$. Assume that the derivative $\frac{\partial f(\xi)}{\partial \xi}$ exists and calculate the derivative of Eq.~\eqref{eq:disturbanceinthm} with respect to $\xi$:
    \begin{align}
          (A+\xi R)\frac{\partial f(\xi)}{\partial \xi}+Rf(\xi)=r  
    \end{align}
At $\xi=0$, we have:
    \begin{align}
         \frac{\partial f(\xi)}{\partial \xi}\Big|_{\xi=0}=A^{-1}(r-Rf(0)) \ . 
    \end{align}
    The Taylor expansion of $f(\xi)$ around $\xi=0$ reads
    \begin{align}
        f(\xi)&=f(0)+\xi\frac{\partial f(\xi)}{\partial \xi}\Big|_{\xi=0}+\mathcal{O}(\xi^2)\\
        &=f(0)+\xi A^{-1}(r-Rf(0))+\mathcal{O}(\xi^2)
    \end{align}
    Therefore, we can estimate
\begin{align}
    \frac{\|f(\xi)-f(0)\|}{\|f(0)\|}
    &\leq \xi\frac{\|A^{-1}(r-Rf(0))\|}{\|f(0)\|}+\mathcal{O}(\xi^2)\\
    &\leq \xi\frac{\|A^{-1}\|\|r-Rf(0)\|}{\|f(0)\|}+\mathcal{O}(\xi^2)\\
     &\leq \xi\|A^{-1}\|\frac{\|r\|+\|R\|\|f(0)\|}{\|f(0)\|}+\mathcal{O}(\xi^2)\\
    &\leq \xi\|A^{-1}\|\left(\frac{\|r\|}{\|f(0)\|}+\|R\|\right)+\mathcal{O}(\xi^2)\\
    &\leq \xi\|A^{-1}\|\|A\|\left(\frac{\|r\|}{\|A\|\|f(0)\|}+\frac{\|R\|}{\|A\|}\right)+\mathcal{O}(\xi^2)\\
     &\leq \xi\|A^{-1}\|\|A\|\left(\frac{\|r\|}{\|Af(0)\|}+\frac{\|R\|}{\|A\|}\right)+\mathcal{O}(\xi^2)\\
    &\leq \xi\kappa(A)\left(\frac{\|r\|}{\|C\|}+\frac{\| R\|}{\|A\|}\right)+\mathcal{O}(\xi^2)\ .
\end{align}
\end{proof}

    \begin{proof}[Proof of Lemma~\ref{lemma: jacobian bound}:]
        \par The gradient $\nabla_{\bm{\theta}}{\phi}(\bm{\theta}(\tau))$ evaluates in the chosen circuit as
        \begin{align*}
            \frac{\partial \phi(\theta(\tau))}{\partial \theta_k}=\sum\limits_{j=1}^{N_d}\left(f_{k,j}R_{k,j}\ket{0}^{\otimes n}\bra{\phi(\theta(\tau))}+\ket{\phi(\theta(\tau))}\bra{0}^{\otimes n}R_{k,j}^{\dagger}f_{k,j}^{\dagger}\right)\ ,
        \end{align*}
         where we used   Eq.~\eqref{eq: Ansatz derivative}.
        Then the trace norm of the dot product with the parameter vector $\bm{\theta}^*(T)$ evaluates as
        \begin{align*}
            \|\nabla_{\bm{\theta}}{\phi}(\bm{\theta_0}(T))\cdot\bm{\theta}^*(T)\|_1
            &=\left\Vert\sum\limits_{k=1}^{N_V}\left[\sum\limits_{j=1}^{N_d}\left(f_{k,j}R_{k,j}\ket{0}^{\otimes n}\bra{\phi(\bm{\theta_0}(T))}+\ket{\phi(\bm{\theta_0}(T))}\bra{0}^{\otimes n}R_{k,j}^{\dagger}f_{k,j}^{\dagger}\right)\right]  \theta^*_{k}(T)\right\Vert_1\\
            &\leq \sum\limits_{k=1}^{N_V}\left\Vert\left[\sum\limits_{j=1}^{N_d}\left(f_{k,j}R_{k,j}\ket{0}^{\otimes n}\bra{\phi(\bm{\theta_0}(T))}+\ket{\phi(\bm{\theta_0}(T))}\bra{0}^{\otimes n}R_{k,j}^{\dagger}f_{k,j}^{\dagger}\right)\right]  \theta^*_{k}(T)\right\Vert_1\\
            &\leq \sum\limits_{k=1}^{N_V}\left[\sum\limits_{j=1}^{N_d}\left(|f_{k,j}|  \left\Vert R_{k,j}\ket{0}^{\otimes n}\bra{\phi(\bm{\theta_0}(T))}\right\Vert_{1}+\left\Vert \ket{\phi(\bm{\theta_0}(T))}\bra{0}^{\otimes n}R_{k,j}^{\dagger}\right\Vert_{1}  |f_{k,j}^{\dagger}|\right)\right]  |\theta^*_{k}(T)|\\
            &\leq \sum\limits_{k=1}^{N_V}\left[\sum\limits_{j=1}^{N_d}\left(|f_{k,j}|+ |f_{k,j}^{\dagger}|\right)\right]  |\theta^*_{k}(T)|=\sum\limits_{k=1}^{N_V}\left[\sum\limits_{j=1}^{N_d}2|f_{k,j}|\right]  |\theta^*_{k}(T)| \ ,
        \end{align*}
    where in the third line, we used the fact that any quantum state has trace norm at most one.
    \end{proof}

\bibliography{mainbib}

\end{document}